\newtheorem{theorem}{Theorem}
\newtheorem{lemma}{Lemma}
\newtheorem{definition}{Definition}
\newtheorem{corollary}{Corollary}
\newcommand{\dist}[1]{\ensuremath{\text{dist}(#1)}}
\newcommand{\alarm}[1]{\text{alarm}(#1)}
\newcommand{\prev}[1]{\text{prev}(#1)}
\newcommand{\pmid}{\ensuremath{P_\text{mid}}}
\newcommand{\plarge}{\ensuremath{P_\text{large}}}
\newcommand{\psmall}{\ensuremath{P_\text{small}}}
\newcommand{\ppost}{\ensuremath{P_\text{post}}}
\newcommand{\ularge}{\ensuremath{U_\text{large}}}
\newcommand{\usmall}{\ensuremath{U_\text{small}}}
\newcommand{\dsmall}{\ensuremath{D_\text{small}}}
\newcommand{\vora}{\ensuremath{\textsf{Vor}_1}}
\newcommand{\vorb}{\ensuremath{\textsf{Vor}_2}}
\DeclareMathOperator{\update}{\textsc{Update}}
\theoremstyle{definition}
\title{Single-Source Shortest Path Problem in Weighted Disk Graphs}
\author{Shinwoo An\footnote{Pohang University of Science and Technology, Korea. Email: \href{mailto:shinwooan@postech.ac.kr}{\texttt{shinwooan@postech.ac.kr}}} 
 \hspace{50pt}    Eunjin Oh\footnote{Pohang University of Science and Technology, Korea. Email: \href{mailto:eunjin.oh@postech.ac.kr}{\texttt{{eunjin.oh@postech.ac.kr}}}}
 \hspace{50pt}    Jie Xue\footnote{New York University Shanghai, China. Email: \href{mailto:jiexue@nyu.edu}{\texttt{{jiexue@nyu.edu}}}}
 }
\date{}
\begin{document}
\maketitle
\begin{abstract}
    In this paper, we present efficient algorithms for the single-source shortest path problem in weighted disk graphs. A disk graph is the intersection graph of a family of disks in the plane. Here, 
    the weight of an edge is defined as the Euclidean distance
    between the centers of the disks corresponding to the endpoints of the edge. 
    Given a family of $n$ disks in the plane whose radii lie in $[1,\Psi]$ and a source disk, we can compute
    a shortest path tree from a source vertex in the weighted disk graph in $O(n\log^2 n \log \Psi)$ time. 
    Moreover, in the case that the radii of disks are arbitrarily large,
    we can compute a shortest path tree from a source vertex in the weighted disk graph in $O(n\log^4 n)$ time. 
    This improves the best-known algorithm running in $O(n\log^6 n)$ time presented in ESA'23~\cite{kaplan2023unweighted}. 
\end{abstract}

\section{Introduction}
Geometric intersection graphs are a fundamental class of graphs representing spatial relationships among geometric objects. In this paper, we focus on the intersection graph of disks in the plane. 
More formally, for a set $P$ of $n$ points in the plane, where each point $v\in P$ has an associated radius $r_v$,
the \emph{disk graph} $G$ of $P$ is defined as the graph where each vertex corresponds to a point $P$,
and two vertices $u$ and $v$ of $P$ are connected by an edge if and only if the disks with center $u,v$ and radii $r_u,r_v$ intersect. 
For an \emph{edge-weighted} disk graph, the weight of an edge $uv$ is defined as $|uv|$.
In the special case that all radii are the same, the disk graph is also called a \emph{unit disk graph}. 
Disk graphs can be used as a model for broadcast networks: The disks of $P$ represent transmitter-receiver stations with transmission power.  
One can view a broadcast range of a transmitter as a disk.

In this paper, we consider the \emph{single-source shortest path} (SSSP) problem for edge-weighted disk graphs:
Given a set $P$ of points associated with radii and a specified point $s\in P$, compute a shortest path tree of the edge-weighted disk graph of $P$ rooted at $s$. 
One straightforward way to deal with a disk graph is to construct 
the disk graph explicitly, and then run algorithms designed for general graphs. 
However, a disk graph might have complexity $\Theta(n^2)$ in the worst case
even though it can be (implicitly) represented as $n$ disks. 
Therefore, it is natural to seek faster algorithms for a disk graph implicitly represented as its underlying set of disks. 
Besides the SSSP problem, many graph-theoretic problems have much more efficient solutions in disk graphs than in general graphs~\cite{an2023faster,baumann2024dynamic,bonnet2018qptas,cabello2021minimum,espenant2023finding,kaplan2023unweighted,klost2023algorithmic,lokshtanov2023framework}.

\subparagraph{Related work.}
As the single-source shortest path problem is fundamental in computer science, there are numerous work on this problem and its variant for unit disk graphs~\cite{brewer2024improved,cabello2015shortest,chan2016all,chan2019approximate,gao2003well,kaplan_et_al:ACM-SIAM2017,liu2022nearly,wang2020near}.
In the case of unweighted unit disk graphs where all edge weights are one, 
the SSSP problem can be solved in $O(n\log n)$ time, and this is optimal~\cite{cabello2015shortest,chan2016all}.
For edge-weighted unit disk graphs,
the best known exact algorithm for the SSSP problem takes $O(n\frac{\log^2 n}{\log\log n})$ time~\cite{brewer2024improved}, 
and the best known $(1+\varepsilon)$-approximation algorithm takes $O(n \log n+ n \log^2(\frac{1}{\varepsilon}))$ time~\cite{wang2020near}. 
For general disk graphs, the best known exact algorithms for the unweighted and weighted SSSP problem takes $O(n\log^2 n)$ and $O(n\log^6 n)$ time, respectively~\cite{kaplan2023unweighted,klost2023algorithmic}. 

\medskip
Another variant of the problem is the reverse shortest path problem, where the input is an edge-weighted graph $G$, a start vertex, a target vertex, and a threshold value. The problem is to find a minimum value $r$ such that the length of the shortest path from the start vertex to the target vertex in $G_r$ is at most the threshold. Here, $G_r$ is the subgraph of $G$ consisting of edges whose weights are at most $r$. The problem was considered in various metrics in the weighted and unweighted cases. 
The best-known algorithms for both weighted and unweighted unit disk graph metric take $O^*(n^{6/5})$ randomized time~\cite{kaplan2023unweighted,wang2023improved}.
In addition, the best-known algorithm for weighted disk graph metric takes $O^*(n^{5/4})$ randomized time~\cite{kaplan2023unweighted}.

\subparagraph{Our results.}
In this paper, we present two algorithms for the SSSP problem
for edge-weighted disk graphs with $n$ vertices. 
The first algorithm runs in $O(n\log^2 n\log \Psi)$ time, where $\Psi$ is the maximum radius ratio of $P$.
The second algorithm runs in $O(n\log^4 n)$ time. 
This improves the best-known algorithm for this problem running in $O(n\log^6 n)$ time~\cite{kaplan2023unweighted}. 

\subparagraph{Model of computation.} 
In Section~\ref{sec:arbitrary}, our algorithm uses a compressed quadtree. To implement this efficiently, we need to extend the real RAM model~\cite{preparata2012computational} by an \emph{floor function} that rounds a real number down to the next integer~\cite{har2011geometric}. While the floor function is generally considered too powerful, it is widely regarded as reasonable for tasks such as finding the cell of a given level of the grid that contains a given point in constant time~\cite{har2011geometric}. On the other hand, our algorithm in Section~\ref{sec:bounded} operates within the standard real RAM model.

\subparagraph{Preliminaries.}
Throughout this paper, we let $P$ be a set of $n$ points in the plane, where each point $v\in P$ has an associated radius $r_v$, and $G$ be the edge-weighted disk graph of $P$. 
We interchangeably denote $v\in P$ as a point or as a vertex of $G$ if it is clear from the context.
Also, we let $s$ be a source vertex of $P$. We assume that $1\leq r_v$ for all $v\in P$, and thus 
in the case of \emph{bounded radius ratio}, all radii must come from the range $[1,\Psi]$
for a constant $\Psi$. 
The \emph{length} of a path of $G$ is defined as the sum of the weights of the edges contained in the path. The \emph{distance} between $u$ and $v$ is defined as the minimum length among all $u$-$v$ paths of $G$. 
For a vertex $v\in G$, we simply let $d(v)$ denote the distance between $s$ and $v$. 

\section{SSSP on Disk Graphs of Bounded Radius Ratio \texorpdfstring{$\Psi$}{Lg}} \label{sec:bounded}
In this section, we describe our algorithm for the SSSP problem on disk graphs.
Given a set $P$ of $n$ points with associated radii in $[1,\Psi]$, our goal is to compute a shortest path tree from a specified source vertex $s$ in the edge-weighted disk graph $G$ of $P$ in $O(n\log^2 n\log\Psi)$ time. 
More precisely, our goal is to compute $\text{dist}(v)$ and $\prev{v}$ for all vertices $v\in P$ 
such that $\text{dist}(v)=d(v)$, and $\prev{v}$ is the predecessor of $t$ in the shortest $s$-$t$ path. 

\subsection{Sketch of Our Algorithm}
We review the well-known Dijkstra's algorithm that computes a shortest path tree from a source vertex.
Initially, the algorithm sets all dist-values to infinity except a source vertex, and sets $Q=P$. The algorithm sequentially applies the following steps until $Q$ is empty. 

\begin{itemize} \setlength\itemsep{-0.1em}
    \item \textbf{(D1)} Pick the vertex $u$ with smallest dist-value.
    \item \textbf{(D2)} For all neighbors $v\in Q$ of $u$, update $\dist{v}\leftarrow \min\{\dist{v},\dist{u}+|uv|\}$.
    \item \textbf{(D3)} Remove $u$ from $Q$. 
\end{itemize}

In the worst case, Dijkstra's algorithm takes quadratic time since a disk graph can have $\Theta(n^2)$ edges.
In our algorithm, we simultaneously update the dist-values of several vertices by the \textsc{Update} subroutine that was introduced in~\cite{wang2020near}.
For any two vertex set $U$ and $V$ of $G$, $\update(U,V)$ does the following: For all $v\in V$,
\begin{itemize} \setlength\itemsep{-0.1em}
    \item \textbf{(U1)} Compute $u:= \arg\min \{\dist{u}+|uv|\}$ among all $u\in U$ s.t. $uv$ is an edge of $G$.
    \item \textbf{(U2)} Update $\dist{v}\leftarrow\min\{\dist{v}, \dist{u}+|uv|\}$.
\end{itemize}

That is, the subroutine updates the dist-values of all vertices of $V$ using the dist-values of the vertices of $U$. 
The subroutine gets input $U$ and $V$ from a \emph{hierarchical grid}.
For each integer $0\leq i\leq \lceil \log \Psi\rceil$, we use $\Gamma_i$ to denote a \emph{grid} of level $i$, which consists of axis-parallel square cells of diameter $2^i$. 
Then we use $\Gamma$ to denote the union of grid cells of $\Gamma_i$'s.
For a grid cell $c\in \Gamma$, we use $p(c)$ to denote the center of $c$, use $|c|$ to denote the diameter of $c$, and use $\Box_c$ to denote the axis-parallel square of diameter $69|c|$ centered at $p(c)$.
We use $\boxplus_c$ to denote the set of grid cells of $\Gamma_{i-1}, \Gamma_i$ and $\Gamma_{i+1}$ which are contained in $\Box_c$.
We say $c'\in \Gamma_{i-1}$ a \emph{child cell} of $c$ if $c'$ is nested in $c$.
Throughout the paper, we assume that 
no points of $P$ lie on the boundary of any grid cell.
We define two sets of vertices contained in $c$ as follows.
\begin{align*}
    \pmid(c):=\{v\in P\mid v\in c, r_v\in [8|c|, 16|c|)\}, \text{ and }\text{ } 
    \psmall(c):=\{v\in P\mid v\in c, r_v\in [1, 8|c|)\}.
\end{align*}
Intuitively, $\pmid(c)$ consists of vertices contained in $c$ with radii \emph{similar} to $|c|$ while $\psmall(c)$ consists of vertices with radii \emph{smaller} than that of $\pmid(c)$.
Note that $\pmid(c)$ forms a clique in $G$.
For a vertex $v$ in $P$, 
we let $c_v$ be the (unique) cell such that $\pmid(c_v)$ contains $v$. 


\subparagraph{Regular edge and Irregular edge.}
Let $uv$ be an edge of $G$ with $r_u\leq r_v$. We call $uv$ a \emph{regular edge} if $\frac{r_v}{r_u}<2$, and an \emph{irregular edge} if $\frac{r_v}{r_u} \geq 2$.
We can identify all edges of $G$ efficiently by using $\boxplus_c$, $\pmid(c)$ and $\psmall(c)$. 
\begin{lemma} \label{lem:edge-regular}
    Let $uv$ be a regular edge.  
    Then $c_u\in \boxplus_{c_v}$.
    Symmetrically, $c_v\in \boxplus_{c_u}$.
\end{lemma}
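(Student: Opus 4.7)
My plan is to decompose each containment $c_u\in\boxplus_{c_v}$ (and symmetrically $c_v\in\boxplus_{c_u}$) into a \emph{level claim} (the levels of $c_u$ and $c_v$ differ by at most one, so each cell lies in the three-level window used by $\boxplus$) and a \emph{spatial claim} (one cell lies inside the enlarged square around the other's centre). Both follow from just two structural facts: the $\pmid$-membership $r_v\in[8|c_v|,16|c_v|)$ together with its analogue $r_u\in[8|c_u|,16|c_u|)$, and the regularity assumption $r_u\leq r_v<2r_u$.

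For the level claim I would chain these inclusions into $8|c_u|\leq r_u\leq r_v<16|c_v|$ and $8|c_v|\leq r_v<2r_u<32|c_u|$, obtaining $|c_u|<2|c_v|$ and $|c_v|<4|c_u|$. Because cell diameters are powers of two, these pinch the ratio $|c_v|/|c_u|$ into $\{1,2\}$. Hence the levels of $c_u$ and $c_v$ differ by at most one, so each cell is among the three levels $\Gamma_{i-1},\Gamma_i,\Gamma_{i+1}$ considered by $\boxplus$ around the other.

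For the spatial claim I would start from the edge condition $|uv|\leq r_u+r_v$ and convert it into two useful forms: $|uv|<32|c_v|$ (via $r_u\leq r_v<16|c_v|$) and $|uv|<3r_u<48|c_u|$ (via $r_v<2r_u$ and $r_u<16|c_u|$). Then for any point $x\in c_u$, the triangle inequality in the $\ell_\infty$ norm gives
\begin{equation*}
\|x-p(c_v)\|_\infty\leq\|x-u\|_\infty+\|u-v\|_\infty+\|v-p(c_v)\|_\infty\leq|c_u|+|uv|+|c_v|/2,
\end{equation*}
and substituting $|c_u|\leq|c_v|$ and $|uv|<32|c_v|$ bounds the right side by a small multiple of $|c_v|$ that fits within half the side length of $\Box_{c_v}$. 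Hence $c_u\subseteq\Box_{c_v}$, and combined with the level claim, $c_u\in\boxplus_{c_v}$.

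The main obstacle I foresee is in the symmetric direction $c_v\in\boxplus_{c_u}$. Because $c_v$ can be twice as large as $c_u$, the bookkeeping must express everything in terms of $|c_u|$ using the tighter bound $|uv|<3r_u<48|c_u|$ together with $|c_v|\leq 2|c_u|$, and one must verify that the constant $69$ in the definition of $\Box_c$ still absorbs the (larger) right-hand side. This constant-chasing is the delicate step; otherwise the argument parallels the first direction exactly, with $u$ and $v$ swapped.
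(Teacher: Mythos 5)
Your proof takes essentially the same route as the paper's: bound $|c_u|$ against $|c_v|$ using the $\pmid$-membership ranges plus regularity to pin the levels, and bound $|uv|\le r_u+r_v$ plus the triangle inequality to place $c_u$ inside $\Box_{c_v}$. The paper's proof is terser (it does not split into level/spatial claims, nor spell out the $\ell_\infty$ computation) and, like yours, only works the direction $c_u\in\boxplus_{c_v}$ explicitly before appealing to symmetry — so the constant-chasing concern you raise for the $c_v\in\boxplus_{c_u}$ direction is one the paper leaves equally implicit.
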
 
\begin{proof}
    Let $c_v\in \Gamma_i$. Then the radius $r_v$ is in the range $[8|c_v|, 16|c_v|)$. Since $uv$ is a regular edge, $r_u$ is in the range $[4|c_v|, 32|c_v|)$. Consequently, $c_u\in \Gamma_{i-1}\cup \Gamma_i\cup \Gamma_{i+1}$.
    Moreover, as $|uv|$ is at most $r_u+r_v\leq 48|c_v|$ and the diameter of $c_u$ is at most $2|c_v|$, 
    $c_u$ is contained in $\Box_{c_v}$. 
    Thus, we have $c_u\in \boxplus_{c_v}$.
\end{proof}
\begin{lemma} \label{lem:edge-irregular}
    Let $uv$ be an irregular edge with $2r_u\leq r_v$. There is a grid cell $c\in \boxplus_{c_v}$ such that $u\in \psmall(c)$.
\end{lemma}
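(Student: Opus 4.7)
The plan is to exhibit an explicit candidate for the cell $c$ and then verify the two conditions required by the statement. Let $c_v\in\Gamma_i$, and let $c$ be the level-$i$ cell of $\Gamma_i$ that contains $u$, which is unique under the non-degeneracy assumption on $P$.

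First I would check that $u\in\psmall(c)$. By construction $u\in c$, so it remains to verify $r_u\in[1,8|c|)$. From $v\in\pmid(c_v)$ and $|c|=|c_v|$ we have $r_v<16|c_v|=16|c|$, and the irregularity hypothesis $2r_u\le r_v$ gives
\[
r_u \;\le\; r_v/2 \;<\; 8|c|.
\]
Combined with the standing assumption $r_u\ge 1$, this yields $r_u\in[1,8|c|)$, and hence $u\in\psmall(c)$.

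Next I would show $c\in\boxplus_{c_v}$. Since $c$ and $c_v$ both lie at level $i$, the level condition is automatic, and the remaining task is to prove $c\subseteq\Box_{c_v}$. For this I would chain together three triangle-inequality estimates, in the same spirit as the proof of Lemma~\ref{lem:edge-regular}: the containment $v\in c_v$ places $v$ within distance $O(|c_v|)$ of $p(c_v)$; the fact that $uv$ is an edge of $G$ gives $|uv|\le r_u+r_v<8|c_v|+16|c_v|=24|c_v|$; and $u\in c$ places $u$ within $O(|c|)=O(|c_v|)$ of $p(c)$. Summing these three bounds controls the distance from $p(c)$ to $p(c_v)$ by a small constant multiple of $|c_v|$, and the large constant $69$ appearing in the definition of $\Box_{c_v}$ is chosen with slack so that the resulting bound is strictly below the half-width of $\Box_{c_v}$, forcing $c\subseteq\Box_{c_v}$.

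The only delicate point is the final constant-verification: the main obstacle is tracking the geometric constants (half-side of a cell, the convention used for $|c|$, and the conversion between Euclidean and $L_\infty$ distances) to confirm that the factor $69$ really does provide the necessary slack. Beyond that bookkeeping, the argument is a direct application of the triangle inequality together with the definitions of $\pmid$ and $\psmall$.
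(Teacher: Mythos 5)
Your proof is correct and takes essentially the same route as the paper: both pick the level-$i$ cell $c$ containing $u$, bound $|uv|$ by $r_u+r_v$ (or $2r_v$) to place $u$ inside $\Box_{c_v}$, and then use $r_u\le r_v/2<8|c_v|=8|c|$ to conclude $u\in\psmall(c)$. The paper is equally brief on the final constant check, so your "bookkeeping" caveat matches its level of detail.
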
 
\begin{proof}
        Let $c_v\in \Gamma_i$. Since $r_u< r_v$, $|uv|$ is at most $2r_v\leq 32|c_v|$. Hence, $u\in \Box_{c_v}$. Then there is a grid cell $c$ in $\boxplus_{c_v}\cap \Gamma_{i}$ that contains $u$. 
    Then $u\in \psmall(c)$ since $r_u\leq \frac{1}{2}r_v<8|c_v|=8|c|$.
\end{proof}

\subparagraph{Basic strategy of~\cite{wang2020near}.}
We utilize the strategic adaptation of Dijkstra's algorithm in a cell-by-cell manner with $\psmall(\cdot)$ and $\pmid(\cdot)$. 
This strategy was used in~\cite{wang2020near} to design an efficient algorithm for the weighted SSSP problem on unit disk graphs. 
For concreteness, we describe the details of the strategy below.
Let $v$ be the vertex with the smallest dist-value among all vertices not processed so far. As an invariant, we 
shall guarantee that all vertices of $G$ which have been processed
have correct dist-values, and $v$ has a correct dist-value if its predecessor in the shortest $s$-$v$ path has been processed.
We want to process not only $v$ but also all vertices in $\pmid(c_v)$ simultaneously.
However, notice that, at this moment, the vertices in $\pmid(c_v)$ do not necessarily have correct dist-values. 

\medskip
Thus, as a first step, we compute the correct dist-values for all vertices in $\pmid(c_v)$.
If the predecessor $w$ of a vertex $v'\in \pmid(c_v)$
in the shortest $s$-$v'$ path already has the correct dist-value, 
we can simply update the dist-values of $\pmid(c_v)$ by considering all edges whose one endpoint is in $\pmid(c_v)$. 
However, it is possible that $w$ has not been processed yet so far and does not have the correct dist-value.
Fortunately, even in this case, we can show that $w$ has a correct dist-value.
To see this, let $u$ be the predecessor of $w$ in the shortest $s$-$w$ path.
Then $uv'$ cannot be an edge of the graph since otherwise $u$ is the predecessor of $v'$ in the shortest $s$-$v'$ path.
Therefore, $|uv'|>r_{v'}$. As $v,v'\in \pmid(c_v)$, $|uv'|>r_{v'}>|vv'|$.
Hence, $d(u)=d(v')-(|v'w|+|wu|)\leq d(v')-|v'u|<d(v')-|vv'|\leq d(v)$. 
The last inequality follows since the concatenation of $vv'$ and the shortest $s$-$v$ path is longer than the shortest $s$-$v'$ path.
Now $u$ must have been processed since $v$ is the vertex with the smallest dist-value among all vertices not processed so far.
Due to the invariant, $w$ must have the correct dist-value.
Consequently, we can update the dist-values of all vertices of $\pmid(c_v)$ by applying $\update(N(c_v), \pmid(c_v))$ where $N(c_v)$ denotes the set of neighbors of $\pmid(c_v)$ in $G$. 


The second step is to transmit the correct dist-values of $\pmid(c_v)$ into their neighbors in order to satisfy the second invariant.  
Lastly, we remove $\pmid(c_v)$ from the graph.

\subparagraph{Main obstacles and lazy update scheme.}
The complexity of this strategy primarily depends on the cost of searching all neighbors of $\pmid(c_v)$. 
This was not a big deal in~\cite{wang2020near}, as in a unit disk graph, each cell interacts with only a constant number of other cells.
In the case of disk graphs, however, a vertex with radius $\Theta(\Psi)$ can be adjacent to vertices in $\pmid(c)$ for $O(\Psi^2)$ distinct grid cells $c$.
To avoid polynomial dependency on $\Psi$, we handle regular edges and irregular edges separately.
More specifically, we can search all regular edges by considering $\pmid(c)$ for all $c\in \boxplus_{c_v}$ by Lemma~\ref{lem:edge-regular}.
Since $\boxplus_{c_v}$ consists of $O(1)$ cells, we can avoid the polynomial dependency on $\Psi$ through the appropriate use of the \textsc{Update} subroutine.

\medskip
However, the same approach cannot be used to search all irregular edges as up to $\Theta(\Psi^2)$ sets of $\pmid(\cdot)$ may interact with $\pmid(c_v)$ in the worst case.
We address this issue with a novel approach, which we call \emph{lazy update scheme}. 
We say $w$ is a \emph{small neighbor} of a vertex $u$ if $uw$ is an irregular edge and $r_u>r_w$. 
Furthermore, we say $w$ is a \emph{small neighbor} of a cell $c$ if there is a vertex $u\in \pmid(c)$ forming an irregular edge with $w$ and $r_u>r_w$.
In the basic strategy, we transmit the correct dist-value of $v$ to \emph{all} neighbors whenever we process $v$. 
In the lazy update scheme, we postpone the transmission to $\pmid(c)$ for all cells $c$ such that $v$ is a \emph{small} neighbor of $c$. 
We handle several postponed update requests at once at some point.  
More specifically, 
let $V_x$ be the set of small neighbors of $c$ whose dist-values are in the range $[x, x+2|c|]$.
We transmit the dist-values of $V_x$ into $\pmid(c)$ right after 
all vertices of $V_x$ have been processed.
Due to the following lemma, we can bound the number of lazy updates 
into $O(1)$ for each cell $c$, and this enables us to avoid the polynomial dependency on $\Psi$.
\begin{lemma} \label{lem:distance-difference}
    Let $u$ and $u'$ be small neighbors of $c$.
    Then $|d(u)-d(u')|\leq 65|c|$.
\end{lemma}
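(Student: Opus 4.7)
\medskip
\noindent\textbf{Proof plan.} The plan is to exhibit a short $u$--$u'$ walk in $G$ and then apply the triangle inequality for graph distances. By definition of a small neighbor of $c$, there exist vertices $x,x'\in\pmid(c)$ with $xu$ and $x'u'$ irregular edges of $G$ satisfying $r_x>r_u$ and $r_{x'}>r_{u'}$. From $x,x'\in\pmid(c)$ we know $r_x,r_{x'}\in[8|c|,16|c|)$, and from the definition of an irregular edge we have $r_u\le r_x/2<8|c|$ and similarly $r_{u'}<8|c|$. Hence the edge weights satisfy
\[
|ux|\le r_u+r_x<8|c|+16|c|=24|c|,\qquad |x'u'|<24|c|.
\]

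Next I would use that $x,x'$ both lie in the cell $c$, whose diameter is $|c|$, so $|xx'|\le|c|$; moreover, because $\pmid(c)$ forms a clique in $G$ (as noted in the preliminaries), $xx'$ is actually an edge of $G$. This gives a valid walk $u\to x\to x'\to u'$ in $G$ of total weight strictly less than $24|c|+|c|+24|c|=49|c|$. By the triangle inequality for shortest-path distances we obtain
\[
d(u')\le d(u)+|ux|+|xx'|+|x'u'|<d(u)+49|c|,
\]
and by symmetry $d(u)<d(u')+49|c|$, which comfortably implies $|d(u)-d(u')|\le 65|c|$ as claimed.

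The argument is essentially a two-step triangle inequality; the only thing one must check is that each leg of the walk is genuinely an edge of $G$ (the two irregular edges $ux$, $x'u'$ are given, and $xx'$ is supplied by the clique structure of $\pmid(c)$), and that the weights of those three edges can be individually bounded in terms of $|c|$. There is no real obstacle: the looseness between the bound $49|c|$ obtained above and the stated bound $65|c|$ simply reflects that the lemma uses a convenient round constant.
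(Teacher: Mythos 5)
Your proof is correct and uses essentially the same argument as the paper: construct the walk $u\to x\to x'\to u'$ through $\pmid(c)$, bound each edge weight in terms of $|c|$, and apply the triangle inequality for shortest-path distances. The only difference is that you additionally exploit the irregular-edge definition to get $r_u<r_x/2<8|c|$, yielding the tighter constant $49|c|$ where the paper is content with the looser bound $r_u<r_x<16|c|$ and hence $65|c|$.
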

\begin{proof}
    Suppose $d(u)\leq d(u')$. We show that there is a $s$-$u'$ path whose length is at most $d(u)+65|c|$. Let $v$ and $v'$ be two vertices of $\pmid(c)$ forming an edge with $u$ and $u'$, respectively. 
    Since $v,v'\in c$, we have $|vv'|\leq |c|$. 
    Moreover, both $|uv|$ and $|u'v'|$ is at most $32|c|$ since $r_u<r_v< 16|c|$ and $r_{u'}<r_{v'}< 16|c|$.
    Consider the concatenation of the shortest $s$-$u$ path, $uv$, $vv'$ and $v'u'$. This is a $s$-$u'$ path whose length is at most $d(u)+65|c|$. Since $d(u')$ is the length of the shortest $s$-$u'$ path, $d(u')\leq d(u)+65|c|$. 
\end{proof}

However, this may cause an \emph{inconsistency issue} during the first step.
When we process the vertices of $\pmid(c_v)$, now the
first update does not transmit the dist-values of the small neighbors of $c_v$.
If a predecessor $w$ of $v'\in \pmid(c_v)$ in the shortest $s$-$v'$ path is a small neighbor of $v'$,
it is possible that the lazy update has not occurred even though $w$ has already been processed.
We show that this cannot happen using the following geometric lemma. We postpone the proof of this lemma into Section~\ref{sec:bounded-correct}.
\begin{restatable*}{lemma}{smalltolarge}
\label{lem:smalltolarge}
    Let $u\neq s$ be the predecessor of $v$ in the shortest $s$-$v$ path.
    Then $|uv|\geq |r_v-r_u|$ unless $r_v<r_u$ and $v$ is a leaf in the shortest path tree.    
\end{restatable*}
By this lemma, $d(w)+2|c_{v'}|<d(w)+\frac{1}{4}r_{v'}<d(w)+|r_w-r_{v'}|< d(w)+|wv'|=d(v')$. Subsequently, roughly speaking, the lazy update that transmits dist-values of $w$ into $v'$ occurs in advance when we process the vertices of $\pmid(c_{v'})$.


\subsection{Algorithm} \label{sec:bounded-algo}
We present our algorithm for the SSSP problem on disk graphs with radius ratio $\Psi$.
The goal is to compute $d(v)$ for all $v\in P$.
Initially, we set $\text{dist}(v)$, the \emph{dist-value} of $v$, as infinity for all vertices other than a source vertex $s$, and set $\text{dist}(s)=0$.
Eventually, the algorithm will modify $\text{dist}(v)$ into $d(v)$ for all $v\in P$.
In addition, for each grid cell $c$ of $\Gamma$ with nonempty $\pmid(c)$, we maintain a value $\text{alarm}(c)$ initialized to $\infty$. These alarm values will take care of the moment of the lazy update.
Then we initialize the set $R$ as $P$. 
Here, $R$ is the set of points of $P$ which has not been processed yet. 

\medskip
First, we perform the pre-processing step. 
We compute $d(v)$ for all $v\in P$ adjacent to $s$ in $G$ and set $\text{dist}(v)$ to $|sv|$.
Furthermore, for each grid cell $c$, let $\bar L(c)$ be the set of grid cells $c'$ where $\pmid(c)$ contains a small neighbor of $c'$. 
For technical reasons, we compute a superset $L(c)$ of $\bar L(c)$ which contains all cells $c'$ with $|c|< |c'|$ such that there is an edge between a vertex of $\pmid(c)$ and a vertex of $\pmid(c')$. To see the fact that $\bar L(c)\subset L(c)$, see Lemma~\ref{lem:bounded-lc}.
Furthermore, we set $L(c_s)$ as an empty set where $c_s$ is the grid cell such that the source $s$ is contained in $\pmid(c_s)$.
We will use the information of $L(c)$ to deal with the lazy update. 
Then the algorithm consists of several rounds.
In each round, we check $\text{dist}(v)$ for all $v \in R$ and $\text{alarm}(c)$ for all $c\in \Gamma$.
Subsequently, we find the minimum value $k$ among these values, and 
proceed depending on the type of $k$.
The algorithm terminates when $R$ becomes empty.
We utilize \textsc{Update}$(U,V)$ subroutine for both cases, whose naive implementation is given in Algorithm~\ref{alg-update}.

{\SetAlgoNoLine
\begin{algorithm}[H]\label{alg-update}
    \caption{$\textsc{Update}(U,V)$} 
        \For{$v\in V, u\in U$}{   
           \If{$uv$ \textnormal{is an edge of} $G$}{
                \If{$\textnormal{dist}(v)>\textnormal{dist}(u)+|uv|$}{
                 $\text{dist}(v)\gets \text{dist}(u)+|uv|$;
                 $\text{prev}(v)\gets u$
                 }
            }
        }    
\end{algorithm}}

\medskip
Intuitively, \textsc{Update}$(U,V)$ ensures that if $u\in U$ has correct dist-values and $u$ is the predecessor of $v\in V$ in the shortest $s$-$v$ path, then all such $v$ gets the correct dist-values after the subroutine. 
The detailed implementation of the subroutine is presented in Section~\ref{sec:bounded-complexity}. The rest of this section is devoted to giving detailed instructions on case studies of $k$.

{\SetAlgoNoLine
\begin{algorithm}[H]
    \caption{$\textsc{SSSP-Bounded-Radius-Ratios}(P)$} \label{alg-bounded} 
        $R \leftarrow P$\\
        \While{$R \neq \emptyset$}{
             $k \leftarrow \min(\{\text{dist}(v): v \in R\} \cup \{\text{alarm}(c): c \in \Gamma\})$ \\
            \If{$k=\textnormal{dist}(v)$ \textnormal{for} $v \in R$}{
                 $\textsc{Update}(\bigcup_{c\in \boxplus_{c_v}} \pmid(c), \pmid(c_v))$ \\
                $\textsc{Update}(\pmid(c_v), \bigcup_{c\in \boxplus_{c_v}} \pmid(c)\cup \psmall(c))$ \\
                \For{$c\in L(c_v)$}{
                    \If{$\textnormal{alarm}(c)=\infty$}{
                         $\text{alarm}(c) \leftarrow \text{dist}(v)+2|c|$ 
                    }
                }
                 $R \leftarrow R \setminus \pmid(c_v)$
            }
            \If{$k=\textnormal{alarm}(c)$ \textnormal{for} $c \in \Gamma$}{
                 $\textsc{Update}(\bigcup_{c'\in \boxplus_{c}}\psmall(c'), \pmid(c))$ \\
                  $\text{alarm}(c) \leftarrow \infty$
            }
        }
\end{algorithm}
}

\subparagraph{Case~1: $k=\textnormal{dist}(v)$ for a vertex $v \in R$.}
In this case, we apply $\textsc{Update}(\bigcup_{c\in \boxplus_{c_v}} \pmid(c), \pmid(c_v))$.
Interestingly, we will see later that after this, all vertices in $\pmid(c_v)$ have the correct dist-values. 
Then using the correct dist-values of $\pmid(c_v)$, we update the dist-values of the neighbors of the vertices in $\pmid(c_v)$.
This is done by applying $\textsc{Update}(\pmid(c_v), \bigcup_{c\in \boxplus_{c_v}} \pmid(c)\cup \psmall(c))$ and setting 
$\text{alarm}(c)=\text{dist}(v)+2|c|$ for all $c \in L(c_v)$ with $\text{alarm}(c)=\infty$, where the former takes care of the neighbors of the vertices in $\pmid(c_v)$ connected by regular edges and the small neighbors of $c_v$,
and the latter takes care of the other neighbors.
Finally, we remove $\pmid(c_v)$ from $R$.

\subparagraph{Case~2: $k=\textnormal{alarm}(c)$ for a grid cell $c \in \Gamma$.}
In this case, we shall correct the dist-values of all $v\in \pmid(c)$ such that the predecessor of $v$ is a small neighbor of $v$.
This is done by applying $\textsc{Update}(\bigcup_{c'\in \boxplus_{c}}\psmall(c'), \pmid(c))$.
After this, we reset $\text{alarm}(c)$ to $\infty$.


\subsection{Correctness} \label{sec:bounded-correct}
In this section, we show that the algorithm correctly computes a shortest path tree.
For a vertex $v$, we use $\text{prev}(v)$ to denote the predecessor of $v$ in the shortest path tree.
We shall maintain a simple invariant during the algorithm: $d(v)\leq \text{dist}(v)$ for every $v\in P$. 
\smalltolarge
\begin{proof}
    Let $w$ be the predecessor of $u$ in the shortest $s$-$v$ path. Then $wv$ should not be an edge of $G$ since otherwise $w$ is the predecessor of $v$.
    Then $r_w+r_v<|wv|\leq |wu|+|uv|\leq r_w+r_u+|uv|$. 
    If $r_v \geq r_u$, we are done.
    Otherwise, we assume $r_u>r_v$ and $v$ is not a leaf.
    Let $x$ be a child of $v$ in the shortest path tree.
    As $vx$ is an edge of $G$, $|vx|\leq r_v+r_x$. 
    On the other hand, $r_u+r_x<|ux|$. To see this, $|ux|$ cannot be an edge since otherwise, $u$ is the predecessor of $x$.
    Consequently, 
    $r_u-r_v<(|ux|-r_x)+(r_x-|vx|)=|ux|-|vx|\leq |uv|$. See also Figure~\ref{fig:smalltolarge}.
\end{proof}

\begin{corollary} \label{cor:smalltolarge}
    Suppose the shortest $s$-$v$ path contains an irregular edge $uv$ and $u\neq s$. Then $|uv|\geq \frac{1}{2}r_v$ unless $r_u>r_v$ and $v$ is a leaf on the shortest path tree.
\end{corollary}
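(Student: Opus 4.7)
The plan is to reduce directly to Lemma~\ref{lem:smalltolarge}, splitting on which endpoint of the irregular edge has the larger radius. Since $uv$ is irregular we have either $r_v \geq 2r_u$ or $r_u \geq 2r_v$, and in both subcases I want to bound $|uv|$ from below by $|r_v - r_u|$ and then use the irregularity to convert this into a bound of the form $\tfrac{1}{2}r_v$.

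First I would handle the case $r_v \geq 2 r_u$. Here $r_v > r_u$, so the exceptional clause of Lemma~\ref{lem:smalltolarge} does not apply, and the lemma yields $|uv| \geq r_v - r_u$. Combined with $r_u \leq \tfrac{1}{2} r_v$ this gives $|uv| \geq r_v - \tfrac{1}{2}r_v = \tfrac{1}{2}r_v$, as desired.

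Next I would handle the case $r_u \geq 2 r_v$. Here $r_u > r_v$, so Lemma~\ref{lem:smalltolarge} gives $|uv| \geq r_u - r_v$ \emph{unless} $v$ is a leaf on the shortest path tree. If $v$ is a leaf, we are precisely in the exceptional situation allowed by the corollary, so there is nothing to prove. Otherwise, from $r_u \geq 2 r_v$ we get $|uv| \geq r_u - r_v \geq 2r_v - r_v = r_v \geq \tfrac{1}{2}r_v$.

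There is no real obstacle here; the only subtlety is matching the exceptional cases of the corollary statement to those of Lemma~\ref{lem:smalltolarge}. The hypothesis $u \neq s$ is needed only so that Lemma~\ref{lem:smalltolarge} may be applied to the edge $uv$ (its proof refers to the predecessor of $u$), and the irregularity hypothesis $r_{\max}/r_{\min} \geq 2$ is exactly what turns the weak bound $|r_v - r_u|$ into the clean bound $\tfrac{1}{2} r_v$.
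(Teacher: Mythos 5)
Your proof is correct, and since the paper states Corollary~\ref{cor:smalltolarge} immediately after Lemma~\ref{lem:smalltolarge} without a separate proof, the case split on which endpoint of the irregular edge has the larger radius, together with the arithmetic $|r_v-r_u|\ge\tfrac{1}{2}r_v$ from the ratio $\ge 2$, is exactly the intended derivation. You've also correctly matched the lemma's exceptional clause ($r_v<r_u$ and $v$ a leaf) to the corollary's, so nothing is missing.
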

\begin{figure}
		\centering
		\includegraphics[width=0.65\textwidth]{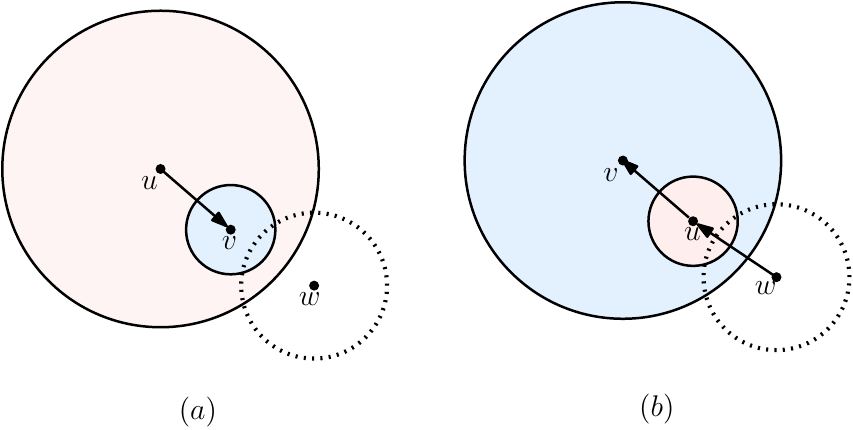}
		\caption{\small (a) If $r_u>r_v$ and $|uv|<r_u-r_v$, all neighbors of $v$ are also neighbors of $u$. (b) If $r_v>r_u$ and $|uv|<r_v-r_u$, the predecessor $w$ of $u$ is neighbor of $v$, and the shortest $s$-$v$ path should contain $wv$.}
		\label{fig:smalltolarge}
\end{figure}

Then we prove the key invariant of our algorithm.
\begin{lemma} The following statements hold during the execution of Algorithm~\ref{alg-bounded}. \label{lem:correct1}
\begin{enumerate} \setlength\itemsep{-0.1em}
    \item[\textnormal{(1)}] At the onset of line 5, $\textnormal{dist}(v)=d(v)$.
    \item[\textnormal{(2)}] After line 5, $\textnormal{dist}(u)=d(u)$ 
    if $u\in P_\textnormal{mid}(c_v)$ is not a leaf on the shortest path tree. 
    \item[\textnormal{(3)}] After line 12, $\textnormal{dist}(u)=d(u)$ if $u\in P_\textnormal{mid}(c)$, 
    $\textnormal{prev}(u)\in \bigcup_{c'\in \boxplus_c} P_\textnormal{small}(c')$, and $\textnormal{prev}(u)\notin R$.
\end{enumerate}
\end{lemma}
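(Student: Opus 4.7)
I plan to prove (1)--(3) jointly by induction on the iterations of the while loop, maintaining two invariants throughout: \textbf{(I)} $d(v)\leq\text{dist}(v)$ for every $v\in P$ (immediate, since every update uses an actual $s$-$v$ path length); and \textbf{(II)} at the onset of each iteration, every $y\in R$ satisfies $d(y)\geq k$, while every already-processed vertex $x$ has $\text{dist}(x)=d(x)$ and the quantity $d(x)+|xy|$ has already been folded into $\text{dist}(y)$ for every neighbour $y$---either via some past line~6 (regular edges and irregular edges with $r_x\geq r_y$, by Lemmas~\ref{lem:edge-regular} and~\ref{lem:edge-irregular}) or via a past line~12 firing of the alarm on $c_y$ (when $x$ is a small neighbour of $c_y$).

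Given (II), statement~(1) is essentially free: the chosen $v$ is the argmin, so $d(v)\geq k=\text{dist}(v)$ by (II), while (I) gives $d(v)\leq\text{dist}(v)$, forcing $d(v)=k$. The genuine work is maintaining (II) across an iteration, and the crucial quantitative check is that every alarm fires strictly before it is needed. Consider the bookkeeping for $wy$ where $w=\text{prev}(y)$ is a small neighbour of $y$ (so $wy$ is irregular with $r_y\geq 2r_w$). When $\pmid(c_w)$ was processed at some past iteration by a selected $v_j\in\pmid(c_w)$, the alarm on $c_y$ was set to $\text{dist}(v_j)+2|c_y|=d(v_j)+2|c_y|$. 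Using $d(v_j)\leq d(w)+|c_w|$ (clique diameter of $\pmid(c_w)$), $|c_w|<|c_y|$ (from $r_y\geq 2r_w$), and Corollary~\ref{cor:smalltolarge} along the shortest $s$-$y$ path for $|wy|\geq r_y/2\geq 4|c_y|$, one obtains $d(v_j)+2|c_y|\leq d(w)+3|c_y|<d(w)+|wy|=d(y)$. Hence the alarm fires strictly before the iteration whose $k$ reaches $d(y)$, and at that firing $w$ is already processed with $\text{dist}(w)=d(w)$ (by statement~(2) applied to the iteration that processed $\pmid(c_w)$, which applies since $w$ is non-leaf due to having child $y$); Lemma~\ref{lem:edge-irregular} then places $w\in\psmall(c)$ for some $c\in\boxplus_{c_y}$, and line~12 folds $d(w)+|wy|=d(y)$ into $\text{dist}(y)$. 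The ``$d(y)\geq k$'' half of (II) then propagates by a standard walk along the $s$-$y$ path: if $p_i$ is the first vertex in $R$, then $p_{i-1}\notin R$ combined with bookkeeping gives $\text{dist}(p_i)\leq d(p_i)$, and minimality of $k$ with (I) forces $p_i=y$.

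For statement~(2), fix a non-leaf $u\in\pmid(c_v)$ and set $w=\text{prev}(u)$. If $w\notin R$, (II)'s bookkeeping half already provides $\text{dist}(u)=d(u)$ before line~5, and (I) preserves this. If $w\in R$, first rule out $c_w\notin\boxplus_{c_v}$: in the sub-case $r_u\geq 2r_w$, Corollary~\ref{cor:smalltolarge} gives $|wu|\geq r_u/2\geq 4|c_v|$, so $d(w)\leq d(u)-4|c_v|\leq k-3|c_v|<k$, contradicting $w\in R$ via (II); the sub-case $r_w\geq 2r_u$ is symmetric using Lemma~\ref{lem:smalltolarge} (applicable because $u$ is non-leaf) to obtain $|wu|\geq r_w/2\geq 4|c_w|\geq 4|c_v|$ and derive the same contradiction. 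Hence $c_w\in\boxplus_{c_v}$ and $w$ sits in the first argument of line~5. Pushing back to $u'=\text{prev}(w)$, the fact that $u'u$ is not an edge (else $u'$ would be $\text{prev}(u)$) gives $|u'u|>r_u>|vu|$, yielding $d(u')<d(v)=k$; (II) then forces $u'\notin R$, and bookkeeping provides $\text{dist}(w)=d(w)$ by the time line~5 runs. Line~5 therefore sets $\text{dist}(u)\leq d(w)+|wu|=d(u)$, and (I) forces equality.

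Statement~(3) is immediate: since $\text{prev}(u)\in\bigcup_{c'\in\boxplus_c}\psmall(c')$ and $\text{prev}(u)\notin R$, line~12 directly considers the edge $\text{prev}(u)u$ with a correct dist-value for $\text{prev}(u)$ supplied by (II), so $\text{dist}(u)\leq d(\text{prev}(u))+|\text{prev}(u)u|=d(u)$ after line~12. The main obstacle I anticipate is the alarm bookkeeping used in maintaining (II): one must carefully treat alarms that are pre-set to smaller values by earlier small-neighbour processings, verify that each line~12 firing captures every small neighbour of $c$ processed since its alarm was last set (so subsequent predecessor-based contributions are not missed), and invoke Lemma~\ref{lem:distance-difference} to bound the number of such rounds by $O(1)$ per cell.
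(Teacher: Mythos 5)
Your induction-on-rounds skeleton, the role of the alarm timing, and the key quantitative inequality ($d(v_j)+2|c_y| \le d(w)+3|c_y| < d(w)+|wy| = d(y)$, using the clique diameter of $\pmid(c_w)$, $|c_w|<|c_y|$, and Corollary~\ref{cor:smalltolarge}) all match the paper's argument. However, the invariant~(II) you carry is \emph{false as stated}, and this is a genuine gap rather than a cosmetic one. You claim that at the onset of every iteration, for every already-processed $x$ and every neighbour $y$, the quantity $d(x)+|xy|$ has already been folded into $\text{dist}(y)$, ``either via some past line~6 \dots or via a past line~12 firing of the alarm on $c_y$.'' But the very point of the lazy scheme is that when $x$ is a small neighbour of $y$, the alarm for $c_y$ is merely \emph{scheduled} at $x$'s round (lines~7--9) and does not fire until a later round $k=\text{alarm}(c_y)$; at every iteration onset strictly between those two rounds, the quantity has \emph{not} been folded in. Moreover, ``every already-processed vertex $x$ has $\text{dist}(x)=d(x)$'' is also false for leaves removed at line~10, since statement~(2) deliberately excludes leaves. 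You acknowledge that ``every alarm fires strictly before it is needed,'' but then use~(II) uniformly (for example, to justify ``bookkeeping provides $\text{dist}(w)=d(w)$'' once $u'=\text{prev}(w)\notin R$), which is circular: when $u'$ is a small neighbour of $w$, you need the alarm for $c_w$ to have already fired at the current round, not merely that it fires before $d(w)$'s turn in general. The paper's proof closes exactly this hole with a separate per-round check, namely $d(x)+2|c_w| < d(v) = k$ (via Corollary~\ref{cor:smalltolarge} and Lemma~\ref{lem:smalltolarge}) followed by an appeal to the inductive hypothesis for~(3) at the round $k=\text{alarm}(c_w)$, and you would need a corresponding check here.

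A secondary imprecision: you write ``the alarm on $c_y$ was set to $\text{dist}(v_j)+2|c_y|$,'' but line~8 only sets the alarm when it is currently $\infty$. The alarm may already have been set to a smaller value by an earlier small-neighbour round (or may already have fired and been reset once), so the right claim is that \emph{after} the round processing $\pmid(c_w)$, $\text{alarm}(c_y)\le \text{dist}(v_j)+2|c_y|$ whenever $\text{alarm}(c_y)\neq\infty$; this is still what you need, but it must be argued, not asserted. You flag ``alarms that are pre-set to smaller values by earlier small-neighbour processings'' as an anticipated obstacle, but you do not resolve it, and resolving it precisely is the substance of the proof. The appeal to Lemma~\ref{lem:distance-difference} at the end is also misplaced: that lemma bounds the \emph{number} of alarm firings per cell and is used only in the time-complexity analysis, not in establishing correctness.
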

\begin{proof}
We apply induction on the index $i$ of the round.
For the base case, the first round of the algorithm is a round of \textbf{Case~1} of $k=\text{dist}(s)$, where $s$ is a source vertex.
The shortest $s$-$u$ path with $u\in \pmid(c_s)$ is $su$ and $\dist{u}$ is already set to $|su|=d(u)$ during the pre-processing.

\medskip
We assume that Lemma~\ref{lem:correct1} holds up to the $(i-1)$-th round.
Suppose the $i$-th round performs line 5.
First we show \textbf{(1)}. 
Let $x$ be the closest vertex to $v$ along the shortest $s$-$v$ path such that $d(x)=\text{dist}(x)$. 
If $x=v$, we are done. Otherwise, we may assume that $x\neq s$ since otherwise, all children of $x$ have the correct dist-values after the pre-processing, which contradicts the choice of $x$.
Then $x\notin R$ since $d(x)<k=\min(\{\text{dist}(v):v\in R\})$, $x\notin R$.
Then by induction hypothesis on the round which removes $x$ from $R$, a child $y$ of $x$ satisfies $d(y)=\text{dist}(y)$ after that round unless $x$ is a small neighbor of $y$. 
If $x$ is a small neighbor of $y$, 
$\text{alarm}(c_y)$ was at most $d(x)+2|c_y|$ after that round.
Since $y\in \pmid(c_y)$, $8|c_y|\leq r_y$. 
Hence, $\text{alarm}(c_y)\leq d(x)+\frac{1}{4}r_y<d(x)+|xy|=d(y)$ due to Corollary~\ref{cor:smalltolarge}.
Thus, $d(y)=\text{dist}(y)$ by induction hypothesis \textbf{(3)} on the round of $k=\text{alarm}(c_y)$.
This contradicts the choice of $x$. Thus, $\text{dist}(v)=d(v)$.

\medskip
Next we show \textbf{(2)}. We may assume $u$ is not a source vertex $s$. Let $w=\text{prev}(u)$.
If $w=s$, then $\dist{u}=d(u)$ due to the pre-processing.
Otherwise, let $x=\prev{w}$. 
We show that $d(w)=\text{dist}(w)$ at the onset of line 5.
If $x=s$, then $d(w)=\text{dist}(w)$ after the pre-processing. Hence, we may assume that $x\neq s$.
Note that the proof of \textbf{(1)} implicitly says that 
for any vertex $v'$ with $d(v') \leq k$, $d(v')=\text{dist}(v')$ at the onset of line 5.
Moreover, if $d(v')<k$, then $v'\notin R$.
Hence, we are enough to consider the case that $d(w)>d(v)$.
We claim that $x\notin R$. Observe that $xu$ is not an edge of $G$ since otherwise, $x$ is the predecessor of $u$. Then $d(x)< d(x)+|xu|-r_u-r_x< d(x)+|xw|+|wu|-r_u=d(u)-r_u<d(v)$, where the last inequality comes from $u,v\in \pmid(c_v)$. 
Hence, $d(x)<k$, $d(x)=\text{dist}(x)$ and $x\notin R$. 
Thus, by induction hypothesis on the round which removes $x$ from $R$,
$d(w)=\text{dist}(w)$ unless $x$ is a small neighbor of $w$ and $k<d(x)+2|c_w|$.
Note that if $x$ is a small neighbor of $w$, $\text{alarm}(c_w)$ is set to at most $d(x)+2|c_w|$ after that round.
We show that $d(x)+2|c_w|<d(v)$. 
Due to Corollary~\ref{cor:smalltolarge}, $d(x)+2|c_w|<(d(w)-|wx|)+\frac{1}{4}r_w\leq d(w)-\frac{1}{4}r_w$.
Due to Lemma~\ref{lem:smalltolarge} in the shortest $s$-$u$ path, $|uw|\geq |r_u-r_w|$. Then $d(w)-\frac{1}{4}r_w = d(u)-|uw|-\frac{1}{4}r_w\leq d(u)-|r_u-r_w|-\frac{1}{4}r_w<d(u)-\frac{1}{4}r_u<d(v)$. Hence, $d(w)=\text{dist}(w)$ due to induction hypothesis \textbf{(3)} on the round of $k=\alarm{c_w}$. 

\medskip
Thus, $d(w)=\text{dist}(w)$ at the onset of line 5.
If $wu$ is a regular edge, \textbf{(2)} holds by line 5. So we may assume that $wu$ is an irregular edge.
By Corollary~\ref{cor:smalltolarge} and the condition that $u$ is not a leaf, $|wu|\geq \frac{1}{2}r_u$.
Then $d(w)=d(u)-|wu|<d(u)-\frac{1}{2}r_u<d(v)$ implies $w\notin R$.
We apply the induction hypothesis on the round which removes $w$ from $R$.
If $u$ is a small neighbor of $w$, the result comes from line 6.
For the other case that $w$ is a small neighbor of $u$,
$\text{alarm}(c_u)$ was at most $d(w)+2|c_u|<d(u)-\frac{1}{4}r_u < d(v)$ after that round. Here, the first inequality comes from Corollary~\ref{cor:smalltolarge}.
Thus, $\text{dist}(u)$ is corrected to $d(u)$ in advance by induction hypothesis on $\textbf{(3)}$. See  Figure~\ref{fig:bounded-correct1} for illustration.

\medskip
Finally we show \textbf{(3)}. Suppose the $i$-th round performs line 12.
By \textbf{(2)} and the fact that $\text{prev}(u)\notin R$, $\text{dist}(\text{prev}(u))=d(\text{prev}(u))$ at the onset of line 12. Then \textbf{(3)} is followed by construction.
\end{proof}

\begin{figure}
		\centering
		\includegraphics[width=0.65\textwidth]{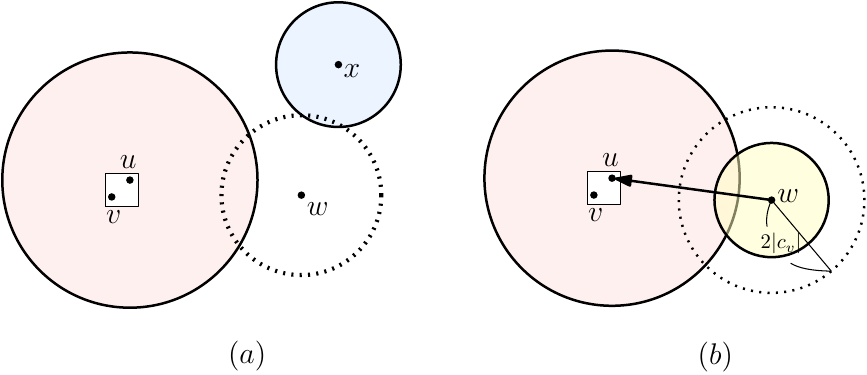}
		\caption{\small Illustrating some points in the proof of Lemma~\ref{lem:correct1}-(2). (a) An edge $xu$ is not an edge of $G$. (b) If $w$ is a small neighbor of $u$. $\text{alarm}(c_v)$ rings before the round of $k=\text{dist}(v)$.}
		\label{fig:bounded-correct1}
\end{figure}

\begin{lemma}\label{lem:bounded-correct}
    Algorithm~\ref{alg-bounded} returns a shortest path tree rooted at $s$.
\end{lemma}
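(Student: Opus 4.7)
The plan is to deduce Lemma~\ref{lem:bounded-correct} from Lemma~\ref{lem:correct1}. I first record the easy invariant $\text{dist}(v)\geq d(v)$ for every $v\in P$, which holds throughout because the algorithm only assigns $\text{dist}(v)\leftarrow \text{dist}(u)+|uv|$ for an edge $uv$ of $G$ and a current value of $\text{dist}(u)$. Hence it suffices to show that each $v\in P$ eventually satisfies $\text{dist}(v)\leq d(v)$; monotonicity of dist-values then gives $\text{dist}(v)=d(v)$ at termination. The \text{prev}-pointers are set consistently whenever \text{dist} is updated, so they automatically describe a shortest path tree once all dist-values are correct.

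I will induct on the depth of $v$ in the shortest-path tree rooted at $s$. The base case $v=s$ is handled by the initialization $\text{dist}(s)=0$. For the inductive step, let $w=\text{prev}(v)$, and note that $w$ is not a leaf since $v$ is its child. By Lemma~\ref{lem:correct1}(2), $\text{dist}(w)=d(w)$ immediately after line 5 of the round $i_w$ in which $\pmid(c_w)$ is processed in Case~1; this value persists by monotonicity.

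It then remains to route $d(w)$ into $v$ through an appropriate \textsc{Update} call. I split on the type of the edge $wv$: (i) if $wv$ is regular, Lemma~\ref{lem:edge-regular} gives $c_v\in\boxplus_{c_w}$, and the line 6 call of round $i_w$ visits the edge $wv$ and pushes $d(w)+|wv|=d(v)$ into $\text{dist}(v)$; (ii) if $wv$ is irregular with $r_w\geq 2r_v$, Lemma~\ref{lem:edge-irregular} places $v\in\psmall(c_v)$ for some $c_v\in\boxplus_{c_w}$, and the same line 6 call handles it; (iii) if $wv$ is irregular with $r_v\geq 2r_w$, then $w\in\psmall(c_w)$ with $c_w\in\boxplus_{c_v}$ and $w$ is a small neighbor of $c_v$, whence $c_v\in\bar L(c_w)\subseteq L(c_w)$, so lines 8-10 of round $i_w$ (or of an earlier round) make $\text{alarm}(c_v)$ finite. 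Once this alarm rings in some round $i'\geq i_w$, Lemma~\ref{lem:correct1}(3) applies with $\text{prev}(v)=w\in\psmall(c_w)$ and $w\notin R$, giving $\text{dist}(v)=d(v)$.

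The main obstacle is case (iii): one must verify that $\text{alarm}(c_v)$ actually rings after round $i_w$ and before the algorithm halts. Since line 3 takes the minimum over all $R$ dist-values and alarm values, and the while loop exits only when $R$ is empty, a short monotonicity argument in the spirit of the proof of Lemma~\ref{lem:correct1}(1) shows that every finite alarm is eventually selected by line 3; combined with the fact that the alarm was set no earlier than round $i_w$, the ring occurs in some round $i'\geq i_w$, at which point $w\notin R$ and $\text{dist}(w)=d(w)$ are both in force, supplying the hypotheses needed for Lemma~\ref{lem:correct1}(3).
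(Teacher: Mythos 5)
Your decomposition matches the paper's: use Lemma~\ref{lem:correct1}(2) to get the parent's dist-value correct, split on the type of the edge $wv$, and in the troublesome third case invoke the alarm mechanism plus Lemma~\ref{lem:correct1}(3). However, the final step of case (iii) has a genuine gap. You assert that ``every finite alarm is eventually selected by line 3,'' attributing it to a short monotonicity argument, but this claim is false for the algorithm as written: the \texttt{while} loop terminates as soon as $R$ is empty, and it is entirely possible for $R$ to empty in a sequence of Case~1 rounds while some $\text{alarm}(c)$ is still finite but has not yet been reached by $k$. If that happened to $\text{alarm}(c_v)$, the update of Lemma~\ref{lem:correct1}(3) would never fire and $\text{dist}(v)$ would never be corrected.

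What is actually needed, and what the paper supplies, is a quantitative bound showing the alarm rings \emph{before} the algorithm could possibly remove $\pmid(c_v)$ from $R$. Concretely, using $8|c_v|\leq r_v$ (so $2|c_v|\leq\tfrac14 r_v$) and Corollary~\ref{cor:smalltolarge} (which gives $|wv|\geq\tfrac12 r_v$ since $w$ is a small neighbor of $v$ and $v$ is, in the relevant case, not a leaf), one gets the chain
$\text{alarm}(c_v)\leq d(w)+2|c_v|\leq d(w)+\tfrac14 r_v<d(v)-\tfrac18 r_v\leq\min\{d(v'):v'\in\pmid(c_v)\}$.
Since dist-values never drop below the true distances, the $k$-value at which $\pmid(c_v)$ is removed is at least this minimum, so the selected $k$ equals $\text{alarm}(c_v)$ in some earlier round, at which point $w\notin R$ and Lemma~\ref{lem:correct1}(3) applies. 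Your proof needs this (or an equivalent) inequality to close the argument; without it, the claim that the alarm fires in time is unsupported. The remaining parts of your write-up — the invariant $\text{dist}(v)\geq d(v)$, the (somewhat redundant but harmless) induction on tree depth, the handling of cases (i) and (ii) via line 6, and the identification $c_v\in\bar L(c_w)\subseteq L(c_w)$ — are all in order, modulo some confusing reuse of the symbol $c_w$ for the cell of Lemma~\ref{lem:edge-irregular}, which is not the $\pmid$-cell of $w$.
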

\begin{proof}
Due to line 6 of Algorithm~\ref{alg-bounded} and Lemma~\ref{lem:correct1}-\textbf{(2)}, $d(u)=\text{dist}(u)$ after all rounds of Algorithm~\ref{alg-bounded} unless the predecessor $w$ of $u$ is a small neighbor of $u$.
Even if $w$ is a small neighbor of $u$, Corollary~\ref{cor:smalltolarge} ensures that $\text{alarm}(c_u)$ is at most $d(w)+2|c_u|$ right after the removal of $w$ from $R$.
Since $d(w)+2|c_u|\leq d(w)+\frac{1}{4}r_u<d(u)-\frac{1}{8}r_u\leq \min\{d(v): v\in \pmid(c_u)\}$, the algorithm runs line 11 of $k=\text{alarm}(c_u)$ before $u$ is removed from $R$. Thus, $d(u)=\text{dist}(u)$ by Lemma~\ref{lem:correct1}-\textbf{(3)}.
Finally,
due to the function of \textsc{Update} subroutine, the algorithm eventually computes both the correct dist-value $d(v)$ together with $\text{prev}(v)$ for all $v$. This confirms the correctness.
\end{proof}

\subsection{Time Complexity} \label{sec:bounded-complexity}
In this section, we show that Algorithm~\ref{alg-bounded} can be implemented in $O(n\log^2n\log\Psi)$ time.
For convenience, we assume that the subroutine \textsc{Update}$(U,V)$ takes $O((|U|+|V|)\log^2n)$ time, and analyze the overall running time based on this assumption.
We can efficiently implement the subroutine using standard techniques (ex.~\cite{kaplan2023unweighted, wang2020near}), which we will analyze in the last part of this section.
\begin{lemma} \label{lem:bounded-update}
    The subroutine \textsc{Update}$(U,V)$ takes $O((|U|+|V|)\log^2n)$ time.
\end{lemma}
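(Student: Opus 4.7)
My plan is to realise \textsc{Update}$(U,V)$ as a batch of $|V|$ queries on a dynamic additively weighted nearest-neighbour structure built on $U$ that respects the disk-intersection edge constraint. For each $v\in V$ we ask for
\[
\argminA_{u\in U,\; |uv|\le r_u+r_v}\bigl(\dist{u}+|uv|\bigr),
\]
i.e.\ a bichromatic weighted closest-pair query restricted to intersecting disks. Such a query structure is exactly what is built in the unit-disk setting by Wang and Xue~\cite{wang2020near} and, for disks of arbitrary radii, by Kaplan et al.~\cite{kaplan_et_al:ACM-SIAM2017,kaplan2023unweighted}: using a combination of additively weighted Voronoi (Apollonius) diagrams and hierarchical point-location layers, it supports insertions, deletions and queries in $O(\log^2 n)$ amortized time.

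With this black box in hand the implementation is straightforward. I would first insert every $u\in U$ into the structure in total time $O(|U|\log^2 n)$; then, for each $v\in V$ in turn, run one query to obtain the best candidate $u^\star$ and, if it yields an improvement, set $\dist{v}\gets\dist{u^\star}+|u^\star v|$ and $\prev{v}\gets u^\star$, paying $O(\log^2 n)$ per query; finally the structure is discarded. Summing the bookkeeping gives the claimed $O((|U|+|V|)\log^2 n)$ bound, irrespective of the specific role of $U$ and $V$ among the three invocations made by Algorithm~\ref{alg-bounded}, and in particular independent of whether the endpoints of the contributing edges are regular or irregular in the sense of Lemmas~\ref{lem:edge-regular}--\ref{lem:edge-irregular}.

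The main obstacle, and the reason the subroutine is non-trivial, is the entanglement of the optimisation objective $\dist{u}+|uv|$ with the disk-intersection side constraint $|uv|\le r_u+r_v$. A plain additively weighted Voronoi diagram with weights $\dist{u}$ picks out the minimiser of $\dist{u}+|uv|$ over all of $U$, but this minimiser need not induce an edge of $G$; conversely, an Apollonius diagram built from the radii $\{r_u\}$ certifies the edge constraint but ignores the $\dist{u}$ values. Merging the two viewpoints into a single data structure is precisely the technical contribution of~\cite{kaplan_et_al:ACM-SIAM2017,kaplan2023unweighted}, so the cleanest path is to invoke their result as a black box and verify that its interface matches what \textsc{Update}$(U,V)$ needs; this is the step where, in writing up the proof, I would spend the most care.
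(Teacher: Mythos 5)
Your high-level plan matches the paper's: realise $\textsc{Update}(U,V)$ as a batch of restricted proximity queries over $U$, one per $v\in V$, at $O(\log^2 n)$ each, and you correctly pinpoint the real difficulty, namely that the objective $\dist{u}+|uv|$ and the edge constraint $|uv|\le r_u+r_v$ are additively weighted by two unrelated quantities, $\dist{u}$ and $r_u$. However, you stop exactly where the paper begins: you assert that the required ``bichromatic weighted closest-pair query restricted to intersecting disks'' structure is already available in \cite{wang2020near,kaplan_et_al:ACM-SIAM2017,kaplan2023unweighted} and defer verifying the interface to a later step, which is precisely the step that is missing. That is not a rounding error; it is the entire content of the lemma.

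The paper does not invoke such a black box, it constructs one. It stores $U$ in a balanced BST sorted by the composite key $k_1(u)=\dist{u}+r_u$, and at each internal node keeps \emph{two} additively weighted Voronoi diagrams on the subtree's vertices: one with weights $-r_u$, used in a root-to-leaf descent to locate the smallest-$k_1$ vertex $u$ that forms an edge with $v$; and one with weights $\dist{u}$, used to minimise the objective over the $O(\log n)$ canonical subtrees covering all $u'$ with $k_1(u')\ge k_1(u)$. The reason this second minimisation may drop the edge constraint altogether is Lemma~\ref{lem:update-correct}: among vertices $u'$ with $k_1(u')\ge k_1(u)$, any one achieving a strictly smaller value of $\dist{\cdot}+|v\cdot|$ than the edge-forming $u$ must itself form an edge with $v$. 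That lemma is the idea absent from your write-up, and it is what makes the two Voronoi layers compose cleanly. I would also caution against leaning on \cite{kaplan2023unweighted} as an off-the-shelf source of an $O(\log^2 n)$-per-operation structure in the arbitrary-radius regime: that paper's weighted SSSP algorithm runs in $O(n\log^6 n)$ in part because its analogous proximity primitive is more expensive, so the bound you need would not follow from citation alone.
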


\subparagraph{Pre-processing.}
In the pre-processing step, we compute $\dist{v}$ for all neighbors $v$ of $s$, and $L(c)$ for all grid cells $c$. 
Here, $L(c)$ denotes the set of cells $c'$ such that $\pmid(c)$ contains a small neighbor of $c'$.
The former part takes $O(n)$ time by checking $|sv|-r_s-r_v$ for all $v\in P$.
We \emph{do not} use the floor function on the real RAM model to implement the hierarchical grid.
\begin{lemma} \label{lem:bounded-grid}
    One can compute a hierarchical grid $\Gamma$, and $P_\textnormal{mid}(c)$ and $P_\textnormal{small}(c)$ for all cells $c\in \Gamma$ in $O(n(\log n+\log \Psi))$ time on the real RAM model. 
\end{lemma}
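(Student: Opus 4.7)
My plan is to construct $\Gamma$ along with $\pmid(c)$ and $\psmall(c)$ level by level, using sorted coordinates and avoiding the floor function throughout. First, I sort $P$ by x- and by y-coordinate in $O(n\log n)$ time. Since radii lie in $[1,\Psi]$, only $O(\log\Psi)$ levels carry non-empty $\pmid$ or $\psmall$ (namely those with cell diameter in $[1/16,\Psi/8]$). For each point $v$, I determine the unique level $i_v$ such that $v\in\pmid(c)$ at level $i_v$, equivalently the level with $r_v\in[8\cdot 2^{i_v},16\cdot 2^{i_v})$, via a binary search over these candidate levels, costing $O(\log\log\Psi)$ per point and $O(n\log\log\Psi)$ in total.

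Next, I fix an arbitrary anchor point $p^*\in P$: the cell containing $p^*$ at each level has its bottom-left corner at $(x_{p^*},y_{p^*})$, and since cell side lengths double from one level to the next, hierarchical nesting is automatic. I then compute cell assignments from the coarsest to the finest relevant level. At the coarsest relevant level, I sweep the sorted coordinate lists, tracking for each axis the current column (resp.\ row) interval as a pair of real numbers and advancing this interval by multiples of $s_i$ via an exponential-then-binary search whenever the next point lies beyond it; the cell of a point is then the pairing of its column and row. To refine from level $i$ to $i-1$, each non-empty level-$i$ cell splits into at most four level-$(i-1)$ children whose boundaries are obtained by halving the parent cell---this uses only arithmetic, not the floor function---and three comparisons per contained point against the parent's mid-lines suffice to place it into the correct child. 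Refinement costs $O(1)$ amortized per (point, level) pair, totalling $O(n\log\Psi)$ across all relevant levels.

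Finally, $\pmid(c)$ is read off as the set of $v$ assigned to $c$ at level $i_v$, and $\psmall(c)$ as those $v$ assigned to $c$ at any strictly coarser relevant level satisfying $r_v<8|c|$. Each $v$ contributes to exactly one $\pmid$-cell and to $O(\log\Psi)$ $\psmall$-cells, so the total output size is $O(n\log\Psi)$ and assembly fits within the same bound. The hard part will be the coarsest-level sweep without the floor function: the gap between two non-empty columns could span arbitrarily many empty columns, so naive scanning is too slow. The exponential-then-binary search charges $O(\log k)$ per $k$-column jump, and amortizing against the at-most $n$ non-empty columns yields $O(n\log\Psi)$ for that phase. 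Combining $O(n\log n)$ for sorting with $O(n\log\Psi)$ for the level-wise construction and readout gives the claimed $O(n(\log n+\log\Psi))$ bound.
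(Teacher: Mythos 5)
Your sort-and-sweep construction is genuinely different from the paper's method, which translates $P$ so $s$ is the origin, starts from a single root cell of side length $2n\Psi$, and descends recursively in $O(\log(n\Psi)) = O(\log n + \log\Psi)$ steps per point. But your approach has a gap at the coarsest-level sweep. You correctly observe that the gap between two occupied columns ``could span arbitrarily many empty columns,'' but the amortization you invoke does not close this: ``amortizing against the at-most $n$ non-empty columns'' bounds only the \emph{number} of exponential searches, not their cost. Each search costs $O(\log k_j)$ where $k_j$ is the number of columns jumped, and in the standard real RAM (no floor) $k_j$ is not bounded by any function of $n$ or $\Psi$ --- two consecutive points in sorted order can be arbitrarily far apart. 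So $\sum_j \log k_j$ is unbounded and your stated $O(n\log\Psi)$ for the sweep does not follow.

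The missing ingredient is exactly the pre-filtering step the paper performs: any $v$ with $|sv| > 2n\Psi$ is unreachable from $s$ (since every path uses at most $n$ vertices of radius at most $\Psi$) and can be discarded up front in $O(n)$ time. After filtering, the horizontal extent is $O(n\Psi)$, so at the coarsest relevant level the total number of columns is $O(n)$, each jump satisfies $k_j = O(n)$, and the sweep is $O(n\log n)$ --- note this is a $\log n$, not $\log\Psi$, factor, though it still fits within the overall $O(n(\log n + \log\Psi))$ bound. With this filter added, your construction goes through and the rest of the proposal (per-point level via binary search in $O(\log\log\Psi)$, halving-based refinement in $O(1)$ per point per level, and the $\pmid$/$\psmall$ readout) is correct. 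A very minor point: $\psmall(c)$ is not automatically empty once $|c| > \Psi/8$ --- it can be nonempty all the way up to level $\lceil\log\Psi\rceil$ --- but since the paper's grid is defined only for levels $0 \le i \le \lceil\log\Psi\rceil$, your ``$O(\log\Psi)$ relevant levels'' count is still right.
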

\begin{proof}
For any $s$-$v$ path $\pi$ in the disk graph, $|sv|\leq 2\sum_{u\in \pi}r_u\leq 2\Psi |\pi|$ where $|\pi|$ denotes the number of vertices contained in $\pi$. Recall that the ultimate goal is to compute the shortest path tree rooted at $s$. Hence, we may assume that  $|sv|\leq 2n\Psi$ for all $v\in P$.
We perform the translation of $P$ to set $s$ as the origin in $O(n)$ time. 
Subsequently,
we compute a square cell $c$ of side length $2n\Psi$ centered at the origin $s$. We set $c$ as the uppermost cell on the hierarchical grid.
Then we utilize the recursive point location query on the hierarchical grid: once we establish that $v$ is contained in $c$, we can compute in $O(1)$ time, a grid cell $c'$ that contains $v$ with $|c'|=|c|/2$ and $c'$ is nested within $c$.
In this way, we can compute $\pmid(c)$ and $\psmall(c)$ for all cells $c$ in $O(n(\log n+\log \Psi))$ time using the standard real RAM model.     
\end{proof}

\begin{lemma} \label{lem:bounded-lc}
    If $\pmid(c)$ contains a small neighbor of $c'$, $c'\in L(c)$. Furthermore, one can compute $L(c)$ for all cells $c$ in $O(n\log n\log\Psi)$ time.
\end{lemma}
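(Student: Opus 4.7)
The lemma has two parts. For the inclusion, let $w\in \pmid(c)$ be a small neighbor of $c'$: by definition there is $u'\in \pmid(c')$ with $u'w$ an irregular edge and $r_{u'}>r_w$, hence $r_{u'}\geq 2r_w$. Since $w\in \pmid(c)$ gives $r_w\geq 8|c|$ and $u'\in \pmid(c')$ gives $r_{u'}<16|c'|$, we obtain $16|c'|>r_{u'}\geq 16|c|$, so $|c|<|c'|$. Since $u'w$ is an edge between $\pmid(c)$ and $\pmid(c')$ with $|c|<|c'|$, the pair $(c,c')$ satisfies the defining condition of $L(c)$, so $c'\in L(c)$.

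For the running-time part, the first step is to show that for each cell $c\in \Gamma_i$ with $\pmid(c)\neq\emptyset$ and each level $j>i$, only $O(1)$ cells $c'\in L(c)$ lie in $\Gamma_j$. Indeed, if $vv'$ is an edge with $v\in \pmid(c)$ and $v'\in \pmid(c')$, $|c|<|c'|$, then $|vv'|\leq r_v+r_{v'}<16|c|+16|c'|\leq 32|c'|$, so the level-$j$ cell $c'$ lies within distance $O(|c'|)$ of $c$, and only $O(1)$ such cells exist. Since $\Gamma$ contains at most $n$ cells with nonempty $\pmid$ and $O(\log\Psi)$ levels, the total number of candidate pairs is $O(n\log\Psi)$; they can be enumerated in $O(n\log\Psi)$ time using the hierarchical grid from Lemma~\ref{lem:bounded-grid} augmented with parent pointers recorded at construction, so no floor function is needed.

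For each candidate pair $(c,c')$ we must decide whether an edge between $\pmid(c)$ and $\pmid(c')$ actually exists. The plan is to precompute, for every nonempty $\pmid(c')$, the additively weighted Voronoi diagram of its disks (with weights $r_{v'}$); combined with a standard point-location structure this answers in $O(\log n)$ time the query ``given a point $v$ with radius $r_v$, does some $v'\in \pmid(c')$ satisfy $|vv'|\leq r_v+r_{v'}$?''. Because the disks in $\pmid(c')$ have comparable radii and pairwise intersect (so no disk contains another), the diagram has linear complexity and can be built in $O(|\pmid(c')|\log|\pmid(c')|)$ time, summing to $O(n\log n)$ preprocessing.

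Finally, to assemble the $L(c)$'s I iterate over every $v\in P$ and every level $j>i$ where $c_v\in \Gamma_i$, enumerate the $O(1)$ candidate cells $c'\in \Gamma_j$ near $v$, and query the Voronoi diagram of $\pmid(c')$ to decide whether $v$ has a neighbor there; if so, $c'$ is inserted into $L(c_v)$, with duplicates removed by a final sort. The total number of queries is $O(n\log\Psi)$, each $O(\log n)$, for an overall running time of $O(n\log n\log\Psi)$. The main subtlety I expect to work out is the $O(1)$-per-level candidate bound and its constant-time realization on the real RAM without a floor function; the former is a direct consequence of the diameter inequality above, and the latter is handled by walking the precomputed pointers of $\Gamma$.
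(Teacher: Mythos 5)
Your proof is correct and follows essentially the same approach as the paper's: the inclusion is derived from the radius constraints on $\pmid(\cdot)$ in nearly the same way, and the running time is obtained by identifying $O(1)$ candidate cells per higher level (the paper's $N(c)$), building one additively weighted Voronoi diagram per nonempty $\pmid(\cdot)$, and performing $O(\log\Psi)$ nearest-neighbor queries per vertex. The only differences are cosmetic: the paper iterates by cell $c$ while you iterate by vertex $v$, and your remark about the disks pairwise intersecting is unnecessary since Fortune's algorithm gives $O(s\log s)$ construction and $O(s)$ size regardless.
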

\begin{proof}
For the first part, suppose $v\in \pmid(c)$ is a small neighbor of $u\in \pmid(c')$. Since $8|c|\leq r_v<\frac{1}{2}r_u<8|c'|$, $|c|<|c'|$. Hence, $c'\in L(c)$.
For the second part, for a grid cell $c\in \Gamma$, let $N(c)$ denotes the set of grid cells $c'$ such that $2|c|\leq |c'|$ and $c\in \Box_{c'}$. We show $L(c)\subset N(c)$.
Suppose $c'\in L(c)$ and $uv$ be an edge with $u\in \pmid(c')$ and $v\in \pmid(c)$. Then $2|c|\leq |c'|$ is obvious by the definition of $L(c)$.
Moreover, $r_u\geq 8|c'|\geq 16|c|>r_v$ implies $|vu|\leq r_u+r_v<2r_u$.
Then $|vp(c')|\leq |vu|+|up(c')|< 2r_u+|c|< 33|c_u|$ implies $c\in \Box_{c'}$. This verifies $L(c)\subset N(c)$.
Based on this observation, we first compute $N(c)$ in $O(\log \Psi)$ time, by first computing all cells $c'\in \Gamma$ contain $c$, and then for each $c'$, search  $68\times 68$ contiguous cells $c''$ in the same level of the grid centered at $c'$ and check whether $c\subset \Box_{c''}$ or not. 
This computation outputs $N(c)$ because if $c\subset \Box_{c''}$, $68 \times 68$ contiguous cells centered at $c''$ must include $c'$ and $c\subset c'$.

Then for each cell $c'\in N(c)$, we determine whether $c'\in L(c)$ or not as follows.
We compute an additively weighted Voronoi diagram $\textsf{Vor}(c')$ on the points in $\pmid(c')$ with the weight function $w(u)=-r_u$. 
Subsequently, for each $v\in \pmid(c)$, we compute the site in $\pmid(c')$ closest to $v$, denoted by $v(c')$, using $\textsf{Vor}(c')$.
If there is a $(v, v(c'))$ pair such that the distance between $v$ and $v(c')$ in the Voronoi diagram is at most $r_v$, we insert $c'$ into $L(c)$. 

Next, we analyze the time complexity. 
Note that if $c\in N(c_1)\cup N(c_2)$ for two distinct cells $c_1$ and $c_2$, checking $c\in L(c_1)$ and $c\in L(c_2)$ requires a common Voronoi diagram with respect to $\pmid(c)$.
Therefore, we compute an additively weighted Voronoi diagram in $\pmid(c)$ with the weight function $w(u)=-r_u$ for every grid cell $c\in \Gamma$ in advance.
Then the total number of sites stored in all Voronoi diagrams is $O(n)$ since each site is contained in exactly one $\pmid(\cdot)$'s.
In addition, for each $c$, we perform $O(|\pmid(c)|)$ nearest neighbor queries for each $O(\log \Psi)$ Voronoi diagram $\textsf{Vor}(c')$ with $c'\in N(c)$.
Note that an additively weighted Voronoi diagram on $s$ sites can be computed in $O(s\log s)$ time, and the nearest neighbor query on that diagram takes an $O(\log s)$ time~\cite{fortune1986sweepline}.
Therefore, the total time complexity is $\sum_c O(|\pmid(c)|\log|\pmid(c)|)+\sum_c O(|\pmid(c)|\log n\log \Psi)=O(n\log n \log \Psi)$ time.    
\end{proof}

\subparagraph{Time complexity of all rounds of Case~1.}
Let $n_i$ be the number of vertices involved in $\pmid(c_v)\cup \bigcup_{c\in \boxplus_{c_v}}(\pmid(c)\cup \psmall(c))$ in the $i$-th round of line 5-6. If $i$-th round is a round of \textbf{Case~2}, we set $n_i$ to 0. 
The time complexity of this part is then $\Sigma_i O(n_i\log^2 n)$.
We show that each $\pmid(c)$ and $\psmall(c)$ appears in $O(1)$ rounds.
First, since we remove $\pmid(c_v)$ from $R$ after the round $k=\text{dist}(v)$, each $\pmid(c)$ appears exactly once in the term $\pmid(c_v)$.
Suppose $c\in \boxplus_{c'}$. Then $c$ is contained in axis-parallel square of diameter $64|c'|$ centered at $p(c)$ and $|c|\leq 2|c'|$. Conversely, $c'$ is contained in the axis-parallel square of diameter $65|c|$ centered at $p(c)$. Therefore, the number of different $c'$ with $c\in \boxplus_{c'}$ is $O(1)$ for a fixed $c$.
Hence, each $\pmid(c)$ and $\psmall(c)$ appears $O(1)$ rounds through the term $\bigcup_{c\in \boxplus_{c_v}}(\pmid(c)\cup \psmall(c))$.
Note that each vertex $v$ is contained in one $\pmid(c)$ and $O(\log \Psi)$ $\psmall(c)$ for $c\in \Gamma$.
Thus, 
\begin{align}
    \Sigma_i O(n_i\log^2 n) =\Sigma_c O(|\pmid(c)|+\log\Psi |\psmall(c)|) \times O(\log^2 n)= O(n\log^2 n\log \Psi).
\end{align}
\subparagraph{Time complexity of all rounds of Case~2.}
Let $n'_i$ be the number of vertices involved in $\pmid(c)\cup \bigcup_{c'\in \boxplus_{c}}(\psmall(c'))$ in the $i$-th round of line 12. If $i$-th round runs a round of \textbf{Case~1}, we set $n'_i$ to 0. 
The time complexity of this part is then $\Sigma_i O(n'_i\log^2 n)$.
Since $\text{alarm}(c)$ always set to $d(v)+2|c|$ for some $v\in P$ and by Lemma~\ref{lem:distance-difference}, each cell $c$ is referenced by line 11 $O(1)$ times.
Again, there are $O(1)$ grid cells $c'$ such that $\boxplus_{c'}$ contains a fixed cell $c$, and
each $v$ is contained in one $\pmid(c)$ and $O(\log \Psi)$ $\psmall(c)$ sets.
Thus, the total time complexity is
\begin{align}
    \Sigma_i O(n'_i\log^2 n) =\Sigma_c O(|\pmid(c)|+\log\Psi |\psmall(c)|) \times O(\log^2 n)= O(n\log^2 n\log \Psi).
\end{align} 

\subparagraph{Priority queue.}
We maintain two priority queues, one of which stores vertex $v\in R$ with priority $\text{dist}(v)$, and the other queue stores cell $c\in \Gamma$ with priority $\text{alarm}(c)$.
Once the algorithm performs line 3, we peek an element with minimum priority for each queue, and then choose $k$ as the minimum value among them.
The total cost of the queue operations is dominated by the total cost caused by $\update$ subroutines. To see this,
suppose $i$-th round of the algorithm runs a round of $k=\text{dist}(v)$. 
Let $n_i$ be the number of vertices involved in $\pmid(c_v)\cup \bigcup_{c\in \boxplus_{c_v}}(\pmid(c)\cup \psmall(c))$ in line 5-6. 
Throughout lines 3-10, the algorithm changes at most $n_i$ dist-values, changes one alarm-value, and removes $|\pmid(c_v)|\leq n_i$ vertices from the priority queue. 
Thus, the overall time complexity caused by queue operation is $O(n_i\log n)$.
Note that $\update$ subroutine of this round takes $O(n_i\log^2 n)$ time, which dominates the time complexity caused by queue operations.
Similarly, the time complexity of $\update$ subroutine performed in the round of $k=\text{alarm}(c)$ dominates the time complexity caused by queue operations.

\subparagraph{Proof of Lemma~\ref{lem:bounded-update}.}
Recall that \textsc{Update}$(U,V)$ do the following: For all $v\in V$,
\begin{itemize} \setlength\itemsep{-0.1em}
\item \textbf{(U1)} Compute $u:=\arg\min \{\dist{u}+|uv| \}$ among all $u\in U$ s.t. $uv$ is an edge of $G$.
\item \textbf{(U2)} Update $\dist{v}$ to $\min\{\dist{v}, \dist{u}+|uv|\}$.
\end{itemize}

Our implementation is based on several additively weighted Voronoi diagrams with different additively weighted functions. Note that the use of several (additively weighted) Voronoi diagrams to address proximity problems in geometric graphs is not new. See~\cite{kaplan2023unweighted, wang2020near} for examples. 

First, we assign three key values $k_1(u):=\text{dist}(u)+r_u$, $k_2(u):=-r_u$ and $k_3(u):=\text{dist}(u)$ for each vertex $u\in U$.
We store the vertices of $U$ in the balanced binary tree denoted as $T$, according to their ascending order of key values $k_1(\cdot)$.
In addition, for each node $t$ on the binary tree,
we compute two additively weighted Voronoi diagrams $\vora(t)$ and $\vorb(t)$ on the vertices associated to the subtree rooted at $t$, using weight functions by $w_1(v)=k_2(v)$ and $w_2(v)=k_3(v)$, respectively.
This takes $O(|U|\log^2|U|)$ time in total, since we can compute additively weighted Voronoi diagram of $s$ sites in $O(s\log s)$ time~\cite{fortune1986sweepline}.

Next, for each vertex $v$ of $V$, we compute a vertex $u\in U$ of minimum $k_1(u)$ which forms an edge with $v$.
To do this, we traverse $T$ starting from the root. Once we traverse the node $t$, we find the nearest site from $v$ on $\vora(t_\text{left})$ where $t_\text{left}$ is a left child of $t$.
If the distance to that site is greater than $r_v$, which means $r_v+r_u<|uv|$, we traverse to the right child. 
Otherwise, we traverse to the left child. 
Eventually, we reach the leftmost leaf associated with vertex $u$ and $uv$ is an edge of $G$.
Then we compute the minimum value $|vu'|+\text{dist}(u')$ among all vertices $u'$ associated to $T$ subject to the condition $k_1(u)\leq k_1(u')$. 
Since the leaves of $T$ are sorted along $k_1(\cdot)$,
this can be achieved by using $O(\log |U|)$ Voronoi diagrams $\vorb(\cdot)$.
For such $u'$, we update $\dist{v}$ to $\min\{\dist{v}, \dist{u'}+|u'v|\}$, and if $\dist{v}$ is changed, we set $\prev{v}=u'$.
Surprisingly, although $\vorb(\cdot)$ has no information on the graph connectivity, it is sufficient to compute the predecessor of $v$. 
\begin{lemma} \label{lem:update-correct}
    Suppose (1) $|vu_1|+\text{dist}(u_1)>|vu_2|+\text{dist}(u_2)$ and (2) $\text{dist}(u_1)+r_{u_1}\leq \text{dist}(u_2)+r_{u_2}$. If $vu_1$ is an edge of $G$, then $vu_2$ is also an edge of $G$.
\end{lemma}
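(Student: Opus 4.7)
The plan is to show this by a direct arithmetic manipulation of the two given inequalities together with the edge condition $|vu_1|\leq r_v+r_{u_1}$, and then conclude that $|vu_2|<r_v+r_{u_2}$ which means $vu_2$ is an edge of $G$.

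First I would rewrite hypothesis (1) as $|vu_1|-|vu_2|>\mathrm{dist}(u_2)-\mathrm{dist}(u_1)$ and hypothesis (2) as $\mathrm{dist}(u_2)-\mathrm{dist}(u_1)\geq r_{u_1}-r_{u_2}$. Chaining these two gives $|vu_1|-|vu_2|>r_{u_1}-r_{u_2}$, i.e. $|vu_2|<|vu_1|-r_{u_1}+r_{u_2}$.

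Next I would plug in the assumption that $vu_1$ is an edge of $G$, which by definition of the disk graph means $|vu_1|\leq r_v+r_{u_1}$. Substituting this upper bound for $|vu_1|$ into the previous inequality yields
\[
|vu_2| < (r_v+r_{u_1})-r_{u_1}+r_{u_2} = r_v+r_{u_2},
\]
so the disks centered at $v$ and $u_2$ with radii $r_v$ and $r_{u_2}$ intersect, and hence $vu_2$ is an edge of $G$, as claimed.

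There is no real obstacle here; the statement is a short algebraic consequence of the triangle-like inequalities (1) and (2) combined with the geometric definition of an edge. The only thing to be careful about is the direction of the strict versus non-strict inequalities, but the strict inequality in (1) is exactly what is needed to conclude strict containment $|vu_2|<r_v+r_{u_2}$, which is stronger than required.
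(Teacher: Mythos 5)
Your proof is correct and is essentially the same chain of elementary inequalities the paper uses: apply (1), then (2), then the edge condition $|vu_1|\leq r_v+r_{u_1}$ to conclude $|vu_2|<r_v+r_{u_2}$. The only difference is cosmetic (you rearrange (1) and (2) before chaining, while the paper substitutes them in sequence), so there is nothing further to add.
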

\begin{proof}
    Since $vu_1$ is an edge of $G$,
    \begin{align}
        |vu_1|\leq r_v+r_{u_1}.
    \end{align}
    Then we obtain the following.
    \begin{align}
        |vu_2|&<|vu_1|+\text{dist}(u_1)-\text{dist}(u_2) \text{ (by (1))} \\
        & <|vu_1|+r_{u_2}-r_{u_1} \text{ (by (2))} \\
        & <r_v+r_{u_2}.
    \end{align}
    Therefore, $vu_2$ is an edge of $G$.
\end{proof}
As we lookup $O(\log|U|)$ Voronoi diagrams for each $v\in V$, updating dist-values for all vertices in $V$ takes $O(|V|\log^2|U|)$ time. Since $|U|\leq n$, the overall implementation takes $O((|U|+|V|)\log^2 n)$ time.

\begin{theorem} \label{thm:boundratio}
    There is an $O(n\log^2n\log\Psi)$-time algorithm that solves the single-source shortest path problem on disk graphs with $n$ vertices and radius ratio $\Psi$.
\end{theorem}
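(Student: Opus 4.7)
The plan is to invoke Algorithm~\ref{alg-bounded} and combine the correctness and efficiency guarantees already established. Correctness is directly handed to us by Lemma~\ref{lem:bounded-correct}, which asserts that the algorithm returns a shortest path tree rooted at $s$, so the only remaining task is to bound the total running time by $O(n\log^2 n\log\Psi)$.

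For preprocessing, I would first compute the hierarchical grid $\Gamma$ together with $\pmid(c)$ and $\psmall(c)$ for every cell $c$ in $O(n(\log n+\log\Psi))$ time via Lemma~\ref{lem:bounded-grid}, then compute the sets $L(c)$ in $O(n\log n\log\Psi)$ time via Lemma~\ref{lem:bounded-lc}, and finally initialize $\text{dist}(v)=|sv|$ for every neighbor $v$ of $s$ by scanning $P$ in $O(n)$ time. Each subsequent iteration of the main loop performs either a Case~1 or a Case~2 step, each of which reduces to one or two \textsc{Update} calls whose cost is $O((|U|+|V|)\log^2 n)$ by Lemma~\ref{lem:bounded-update}; so everything reduces to bounding the total size of arguments to \textsc{Update}.

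The crux of the running-time analysis is this charging. For Case~1, a given cell $c$ plays the role of $c_v$ exactly once, since its $\pmid$-vertices are immediately deleted from $R$; and a given cell $c$ satisfies $c\in \boxplus_{c'}$ for only $O(1)$ choices of $c'$, because membership in $\boxplus_{c'}$ forces $|c'|$ to lie within one level of $|c|$ and $p(c')$ to lie within distance $O(|c|)$ of $p(c)$. For Case~2, Lemma~\ref{lem:distance-difference} bounds the spread of $d$-values over the small neighbors of $c$ by $65|c|$, so the alarm of $c$ can fire only $O(1)$ times given the $\text{alarm}(c)=\text{dist}(v)+2|c|$ window. Combining these with the fact that each vertex lies in exactly one $\pmid(\cdot)$ set and in $O(\log\Psi)$ many $\psmall(\cdot)$ sets yields total work $\sum_{c}O(|\pmid(c)|+\log\Psi\cdot|\psmall(c)|)\cdot O(\log^2 n)=O(n\log^2 n\log\Psi)$. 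Two standard priority queues, one keyed on $\text{dist}$-values over $R$ and one on $\text{alarm}$-values over $\Gamma$, retrieve $k$ in $O(\log n)$ amortized time per modification, and the number of modifications is dominated by the number of changes made inside \textsc{Update}, so their cost is absorbed. The main conceptual obstacle has already been paid for upstream: it is showing that the lazy update scheme never misses a shortest-path predecessor, which is exactly the content of Lemmas~\ref{lem:smalltolarge} and~\ref{lem:correct1}; once those are in hand, the theorem follows by adding up the budgets above.
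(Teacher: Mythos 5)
Your proposal is correct and follows essentially the same path as the paper's own analysis in Section~\ref{sec:bounded-complexity}: correctness delegated to Lemma~\ref{lem:bounded-correct}, preprocessing costed via Lemmas~\ref{lem:bounded-grid} and~\ref{lem:bounded-lc}, the \textsc{Update} charging scheme for Cases~1 and~2 exactly as in the paper (each $\pmid(c)$ as $c_v$ once, $O(1)$ cells $c'$ with $c\in\boxplus_{c'}$, alarm of $c$ fires $O(1)$ times by Lemma~\ref{lem:distance-difference}, each vertex in one $\pmid$ and $O(\log\Psi)$ $\psmall$ sets), and priority-queue cost absorbed by the \textsc{Update} cost as the paper argues.
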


\section{SSSP on Disk Graphs of Arbitrary Radius Ratio} \label{sec:arbitrary}
In this section, we
extend the approach of Section~\ref{sec:bounded} to devise an $O(n\log^4 n)$-time algorithm for the SSSP problem on disk graphs with arbitrary radius ratio. 
Basically, the $O(n\log^2 n\log \Psi)$ term in Theorem~\ref{thm:boundratio} 
came from the total size of $\pmid(\cdot)$ and $\psmall(\cdot)$, which depends on the height (=$\Theta(\log \Psi)$) of the hierarchical grid. 
In the case of an arbitrary radius ratio, we cannot bound the height of the grid. To address this issue, we follow the approach presented in~\cite{baumann2024dynamic}.
More specifically, we use a \emph{compressed quadtree} (See Section~\ref{sec:arbi-prelim}) instead of a hierarchical grid. Although compressed quadtree ensures that the height of the hierarchical structure is independent of the radius ratio, it can be $\Theta(n)$ in the worst case. Hence, we use a \emph{heavy path decomposition} (See Section~\ref{sec:arbi-prelim}) to group quadtree nodes into $O(n\log n)$ disjoint sets $\{\lambda_i\}_{i\in I}$ so that each vertex is contained in $O(\log n)$ different sets. In this way, we can encode the edge information of the disk graph using subquadratic pre-processing time and space.

This approach has been applied to design efficient algorithms on disk graphs of arbitrary radius ratios, such as the dynamic connectivity problem~\cite{baumann2024dynamic} and the unweighted single-source shortest path problem~\cite{klost2023algorithmic}.
By integrating this approach with our lazy update scheme in a more sophisticated way, we can design an $O(n\log^4 n)$-time algorithm for the SSSP problem on weighted disk graphs with arbitrary radius ratios.

\medskip
Throughout this section, we alternatively define the notions of $\pmid(c), \Box_c, \boxplus_c$, and regular edge. Let $h=2^{10}$ and $\alpha > 2\pi/ \arcsin(\frac{1}{100})$ be the integer constants.
\begin{definition} For a grid cell $c$ in $\mathcal Q$, 
    \begin{itemize} \setlength\itemsep{-0.1em}
    \item $\pmid(c):=\{v\in c: r_v\in [h|c|, 2h|c|)\}$.
    \item $\Box_c$:= Axis-parallel square of diameter $8h^2|c|$ centered at $p(c)$.
    \item $\boxplus_c$:= $\{c'\in \mathcal Q: c'\subset \Box_c, |c'|\in [\frac{1}{h}|c|, h|c|]\}$.
    \end{itemize}
\end{definition}
\begin{definition} For an edge $uv$ of $G$ with $r_u\leq r_v$, $uv$ is a \emph{regular edge} if $\frac{r_v}{r_u}<h$, and an irregular edge otherwise.
\end{definition}

Similar to Lemma~\ref{lem:edge-regular}, the modified notions help us to describe regular edges.
\begin{lemma} \label{lem:arbi-regular}
    Let $uv$ be a regular edge. Then $c_u\in \boxplus_{c_v}$.
\end{lemma}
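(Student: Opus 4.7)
The plan is to mirror the proof of Lemma~\ref{lem:edge-regular} but with the refreshed parameters $h$ and the modified definitions of $\pmid(\cdot)$, $\Box_\cdot$, and $\boxplus_\cdot$. Concretely, I need to establish two things: (i) $|c_u| \in [\tfrac{1}{h}|c_v|,\, h|c_v|]$, and (ii) $c_u \subset \Box_{c_v}$. These are exactly the two defining conditions for $c_u \in \boxplus_{c_v}$.

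First I would unpack the $\pmid$ memberships to get $r_v \in [h|c_v|, 2h|c_v|)$ and $r_u \in [h|c_u|, 2h|c_u|)$. Combining these with the regular-edge hypothesis $r_u \leq r_v < h\, r_u$, I derive two comparisons: the chain $h|c_u| \leq r_u \leq r_v < 2h|c_v|$ yields $|c_u| < 2|c_v|$, while $h|c_v| \leq r_v < h\, r_u < 2h^2 |c_u|$ yields $|c_v| < 2h|c_u|$. Since the cells of a compressed quadtree have side lengths that are powers of two, the strict inequality $|c_u| < 2|c_v|$ forces $|c_u| \leq |c_v| \leq h|c_v|$, and likewise $|c_u| > \tfrac{1}{2h}|c_v|$ forces $|c_u| \geq \tfrac{1}{h}|c_v|$. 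This gives (i).

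For (ii), I would use the fact that $uv$ is an edge, so $|uv| \leq r_u + r_v < 4h|c_v|$. Then for any point $x \in c_u$, the triangle inequality via $u \in c_u$ and $v \in c_v$ gives
\[
\|x - p(c_v)\| \;\leq\; |c_u| + |uv| + |c_v| \;\leq\; (1 + 4h + 1)|c_v| \;=\; (4h+2)|c_v|,
\]
and since $h = 2^{10}$, this is well below $4h^2|c_v|$, half of the diameter of $\Box_{c_v}$. So $c_u \subset \Box_{c_v}$, completing (ii).

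I do not expect any serious obstacle here; the only subtle point is turning the strict inequalities $|c_u| < 2|c_v|$ and $|c_v| < 2h|c_u|$ into the range required by $\boxplus_{c_v}$, which is what the powers-of-two structure of the compressed quadtree is designed to handle. The rest is essentially the same geometric containment argument as in Lemma~\ref{lem:edge-regular}, rescaled by the constant $h$.
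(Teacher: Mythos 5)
Your argument is correct and follows the same route as the paper: derive the diameter range for $c_u$ relative to $c_v$ from the $\pmid$ membership conditions and the regular-edge ratio bound, then use the edge length $|uv|\leq r_u+r_v$ to place $c_u$ inside $\Box_{c_v}$. One small caveat: you implicitly fix the labeling $r_u\leq r_v$, but the lemma's conclusion $c_u\in\boxplus_{c_v}$ is not symmetric in $u,v$; in the other orientation ($r_v<r_u\leq h\,r_v$) the bound $|uv|<4h|c_v|$ weakens to $|uv|<(2h^2+2h)|c_v|$, which is still comfortably within the half-side $\approx 2\sqrt{2}h^2|c_v|$ of $\Box_{c_v}$, so the containment survives -- but you should say so. Also, the half-diameter $4h^2|c_v|$ is the half-diagonal of the square, not the half-side, so the safe Euclidean threshold is $2\sqrt{2}h^2|c_v|$ rather than $4h^2|c_v|$; your bound is still far below it, so nothing breaks.
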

\begin{proof}
       Since $uv$ is a regular edge, $|c_v|<r_u<2h^2|c_v|$.  
    Since $r_u\in [h|c_u|, 2h|c_u|)$ by definition, 
    $\frac{1}{2h}|c_v|<|c_u|<2h|c_v|$. Since $c_u,c_v$ and $h$ are power of two, $|c_u|$ is in range $[\frac{|c_v|}{h}, h|c_v|]$.
    Moreover, 
    $|uv|$ is at most $r_u+r_v <(2h^2+2h)|c_v|<4h^2|c_v|$, and the diameter of $\Box_{c_v}$ is $8h^2|c_v|$.
    Thus, $c_u\subset \Box_{c_v}$ and this implies $c_u\in \boxplus_{c_v}$.
\end{proof}

\subsection{Preliminaries: technical tools} \label{sec:arbi-prelim}
In this section, we first briefly introduce technical items.
Klost~\cite{klost2023algorithmic} implemented an efficient breadth-first search using these components. In contrast, our goal is to implement Dijkstra's algorithm over weighted graphs, which requires more sophisticated update strategies. We conclude this section by outlining the modified lazy update scheme that plays the central role in our algorithm for arbitrary radius ratios in Section~\ref{sec:arbi-algo}.
\subparagraph{Compressed quadtree.}
For an integer $i\geq 0$, let $\Gamma_i$ be a grid of level $i$, which consists of axis-parallel square cells of diameter $2^i$. Among all grid cells of $\bigcup_i \Gamma_i$, let $c$ be the grid cell of the smallest level that contains $P$.
A \emph{compressed quadtree} $\mathcal Q$ is a rooted tree defined as follows.
We start from $c$ as the root of the tree and iteratively expand the tree. More specifically, whenever we process $c\in \Gamma_i$, we compute at most four cells of $\Gamma_{i-1}$ that are nested in $c$, and contain at least one point of $P$. We call these cells \emph{children} of $c$.
If there is exactly one child $c'$, we remove $c$ and connect $c'$ to the parent node of $c$. Then we move on to children (or child) of $c$.
\begin{lemma}[\cite{har2011geometric}]\label{lem:arbi-quadtree}
    In $O(n\log n)$ time, we can compute a compressed quadtree $\mathcal Q$ having $O(n)$ nodes and $O(n)$ height.
\end{lemma}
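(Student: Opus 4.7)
The plan is to follow the standard bottom-up construction of compressed quadtrees via Morton (Z-) order, as presented in~\cite{har2011geometric}. First I would identify the root cell by taking the coordinate-wise bounding box of $P$ and rounding its side length up to the smallest enclosing power of two; with the floor function extension of the real RAM this is done in $O(n)$ time and yields a cell of $\bigcup_i \Gamma_i$ containing all of $P$.

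Next I would sort $P$ in Morton order, i.e., the order induced by interleaving the bits of the normalized integer coordinates relative to the root cell. Using the floor function, each Morton comparison of two points is performed in $O(1)$ time, so the whole sort takes $O(n\log n)$ time. In the sorted sequence, for every pair of consecutive points I would compute in $O(1)$ time the deepest common ancestor cell, which is itself a cell of $\bigcup_i \Gamma_i$; these LCA cells together with the leaf cells carrying the individual points form a set of $O(n)$ candidate nodes. Scanning the sequence once with a stack then assembles them into a quadtree in $O(n)$ additional time, and a final traversal applies the compression rule that contracts every internal node with only one non-empty child, linking that child directly to the parent.

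For the size bound I would observe that after compression every internal node has at least two non-empty children; since at most $n$ leaves carry points of $P$, a routine induction on such branching trees shows that the total number of nodes is $O(n)$. The $O(n)$ height bound is then immediate from the fact that any root-to-leaf chain visits only distinct nodes and there are $O(n)$ nodes overall (this is loose but sufficient; in the worst case the height is indeed $\Theta(n)$, e.g.\ for a nested sequence of points).

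The main obstacle is the implementation of the two bit-level primitives, namely comparing two points in Morton order and computing the LCA of two cells of $\bigcup_i \Gamma_i$; both rely essentially on the floor function of the extended real RAM model, which is precisely why this construction is flagged in the model-of-computation paragraph. Once these primitives run in $O(1)$ time, all remaining steps are routine tree traversals, so the total running time is dominated by the Morton sort and is $O(n\log n)$.
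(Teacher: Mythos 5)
The paper does not prove this lemma itself but cites it from Har-Peled's book, and your sketch correctly reconstructs exactly the construction used there: a bottom-up build via Morton-order sorting, pairwise LCA cells of consecutive points, and a stack-based assembly followed by path compression, with the bit-interleaving and LCA primitives relying on the floor-function extension of the real RAM that the paper flags in its model-of-computation paragraph. Your size and height arguments are also the standard ones, so the proposal is correct and follows essentially the same approach as the cited reference.
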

For our purpose, 
the compressed quadtree is extended to contain all grid cells of $\boxplus_{c_v}$ for all $v\in P$. It also takes $O(n\log n)$ time to compute the extended tree~\cite{baumann2024dynamic}.

\subparagraph{Heavy path decomposition and canonical paths.}
As the height of $\mathcal Q$ is $O(n)$, the direct use of the algorithm from Section~\ref{sec:bounded} is expensive: small neighbors of $v$ are contained in $O(n)$ different $\psmall(\cdot)$ sets, which was $O(\log \Psi)$ in Section~\ref{sec:bounded}. We introduce \emph{heavy-path decomposition} to reduce the number of grid cells that cover all small(large)-neighbors of $v$.
We say an edge $cc'\in \mathcal Q$ is \emph{heavy} if $c'$ is the first child of $c$ in the order of children, where we give an order by the total number of nodes in the subtree rooted at $c'$ among all children of $c$. We say $cc'$ \emph{light} otherwise.
A path of $\mathcal Q$ is \emph{heavy path} if it is the maximal path on $\mathcal Q$ that consists of only heavy edges. Then \emph{heavy path decomposition} is the collection of all heavy paths in $\mathcal Q$. We can efficiently compute heavy path decomposition of small complexity.
\begin{lemma}[\cite{sleator1981data}]
    Let $\mathcal Q$ be the tree of $n$ nodes. Then,
    \begin{enumerate} \setlength\itemsep{-0.1em}
    \item every root-leaf path of $\mathcal Q$ contains $O(\log n)$ light edges,
    \item every node of $\mathcal Q$ lies on exactly one heavy path, and
    \item the heavy path decomposition can be computed in $O(n)$ time.
    \end{enumerate}
\end{lemma}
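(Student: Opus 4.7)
The plan is to handle the three parts separately, with part (1) requiring the only real argument and parts (2) and (3) following almost from the definitions. For part (1), I will use a subtree-size halving argument: let $n(c)$ denote the number of nodes in the subtree rooted at $c$, and suppose $cc'$ is a light edge where $c'$ is a child of $c$. By definition of heavy, the first child $c^*$ of $c$ in the decreasing-subtree-size order is the heavy child; since $cc'$ is light, we have $c' \neq c^*$, and by the ordering $n(c^*) \geq n(c')$. Therefore
\[
n(c) \;\geq\; 1 + n(c') + n(c^*) \;\geq\; 1 + 2\,n(c'),
\]
so $n(c') < n(c)/2$. Along any root-to-leaf path, each traversal of a light edge strictly more than halves the subtree size, and since the root has subtree size $n$ while a leaf has subtree size $1$, there can be at most $\lfloor \log_2 n \rfloor = O(\log n)$ light edges.

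For part (2), I will simply observe that by definition each internal node has at most one heavy child (the first in the ordering), so each node is incident to at most one heavy edge going downward and at most one heavy edge going upward (namely the one from its parent, if it happens to be that parent's heavy child). Thus the subgraph of $\mathcal Q$ formed by the heavy edges is a disjoint union of simple paths covering a subset of the nodes; extending each such component to include any isolated node as a one-node heavy path, we obtain a partition of the node set into maximal heavy paths, so every node lies on exactly one of them.

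For part (3), I will give a two-pass algorithm. A first post-order DFS computes $n(c)$ for every node $c$ in $O(n)$ time. During a second DFS, for each internal node I identify its heavy child by scanning its children once, which is $O(\text{number of children})$; the total work across the tree is $O(n)$. The heavy paths themselves are assembled on the fly: when we descend along a heavy edge from $c$ to its heavy child $c^*$, we append $c^*$ to the same heavy path as $c$; when we descend along a light edge, we start a new heavy path. Each node and each edge is visited a constant number of times, so the total cost is $O(n)$.

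The only substantive step is the halving bound in part (1); the other two parts are essentially definitional bookkeeping, so I do not anticipate any real obstacle. The one subtlety to be careful about is the tie-breaking rule when multiple children have the same maximum subtree size, but any fixed deterministic choice preserves both the halving inequality (since $n(c^*) \geq n(c')$ still holds) and the uniqueness of the heavy child, so it does not affect the argument.
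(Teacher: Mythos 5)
The paper offers no proof of this lemma at all---it is stated as a citation to Sleator and Tarjan~\cite{sleator1981data} and used as a black box. Your proof supplies the standard argument and it is correct: the subtree-size halving bound for part (1) is exactly the classical reasoning (if $cc'$ is light, then the heavy sibling $c^*$ satisfies $n(c^*)\ge n(c')$, so $n(c)\ge 1+n(c')+n(c^*)>2n(c')$, giving at most $\lfloor\log_2 n\rfloor$ light edges on any root-leaf path); part (2) follows because each node has at most one downward heavy edge and at most one upward heavy edge, so heavy edges form vertex-disjoint paths, with nodes incident to no heavy edge forming singleton paths; and part (3) is the usual two-pass DFS, one to compute subtree sizes and one to select the heavy child at each node and stitch the paths together, each in linear time. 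There is no gap, and since the paper proves nothing here, there is no alternate route to compare against---your write-up is simply the standard proof filled in. One small stylistic point: in part (1), after $k$ light edges the current subtree size is strictly less than $n/2^k$ yet at least $1$, which forces $2^k<n$; stating this explicitly tightens the claim that there are fewer than $\log_2 n$ light edges rather than merely $O(\log n)$.
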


Then we use the approach of ~\cite{baumann2024dynamic} as a black box. That is, we define a set $\Pi$ of $O(n)$ \emph{canonical paths}, such that every root-node path of $\mathcal Q$ can be uniquely represented by the concatenation of $O(\log n)$ disjoint canonical paths. Also, each canonical path is a subpath of a heavy path in $\mathcal H$, and each node of $\mathcal Q$ is contained in $O(\log n)$ canonical paths. 

\subparagraph{Handling irregular edges.}
Next, we describe a tool for handling irregular edges using canonical paths.
We primarily follow the notion of a proxy graph as presented in~\cite{baumann2024dynamic, klost2023algorithmic}, with slight modifications to the notation for our purpose.
For an irregular edge $uv$ with $r_u<r_v$, we call $u$ a \emph{small neighbor} of $v$, and $v$ a \emph{large-neighbor} of $u$.
For a canonical path $\pi$, we use $c_\pi$ to denote the lowest cell of $\pi$.
Also, we use $|\pi_\ell| (=|c_\pi|)$ and $|\pi_t|$ to denote the diameter of the lowest cell and the topmost cell of $\pi$, respectively.
We then define a set $\mathcal C_\pi$ of $\alpha=O(1)$ congruent cones of radius $3h|\pi_t|$, all sharing the same apex at $p(c_\pi)$. 
We use $\Lambda$ to denote the set of all pairs $(c_\pi, C)$ for all canonical paths $\pi$ and all cones $C\in \mathcal C_\pi$. 
See Figure~\ref{fig:plarge-psmall}(a-b).
 
We say an irregular edge $uv$ is \emph{redundant} if $|uv|<|r_u-r_v|$. Due to Lemma~\ref{lem:smalltolarge}, if a redundant edge $uv$ appears on the shortest path tree, then one endpoint, say $u$, is a leaf on the tree.
Therefore, we can postpone the correction of $\text{dist}(u)$ until later, as no shortest $s$-$w$ path with $w\neq u$ intersects $u$.
Based on this observation, our algorithm handles most redundant edges during post-processing.

\medskip
For a pair $\lambda=(c, C)\in \Lambda$, 
let $r(C)$ be the radius of $C$.
For a vertex $v\in P$, recall that $c_v$ is the grid cell such that $v\in \pmid(c_v)$.
Let $\bar c_v$ be the smallest grid cell on the root-$c_v$ path in $\mathcal Q$ whose diameter is at least $h$ times the diameter of $c_v$. Since the diameters of all grid cells are powers of two, $|\bar c_v|=h|c_v|$.
We let $\Pi_v$ be the set of $O(\log n)$ consecutive canonical paths representing the root-$\bar c_v$ path.
We define three subsets of $P$ with respect to $\lambda$.
\begin{align}
    \psmall(\lambda)&:=\{v\in c\mid \exists \pi\in \Pi_v \text{ with } c=c_\pi\}, \\
    \plarge(\lambda)&:=\{v\in C\mid r_v\in [h|c|, \frac{2}{3}r(C)) \text{ and } |r_v-|vp(c)||<5|c|\}, \text{ and} \\
    \ppost(\lambda)&:=\{v\in C\mid r_v\in [h|c|, \frac{2}{3}r(C)) \text{ and }  5|c|\leq r_v-|vp(c)|\}.
\end{align}
Intuitively, for any irregular edge $uv$ with 
$r_u<r_v$, there is a pair $\lambda$ which encodes $uv$ as $u\in \psmall(\lambda)$ and $v\in \plarge(\lambda)\cup \ppost(\lambda)$.
Moreover, $v\in \plarge(\lambda)$ if $uv$ is a non-redundant edge. See Lemma~\ref{lem:arbi-irregular}.
Later, our algorithm runs Dijkstra's algorithm in a cell-by-cell manner with respect to $\pmid(\cdot), \psmall(\cdot),$ and $\plarge(\cdot)$.
\begin{figure}
		\centering
		\includegraphics[width=0.8\textwidth]{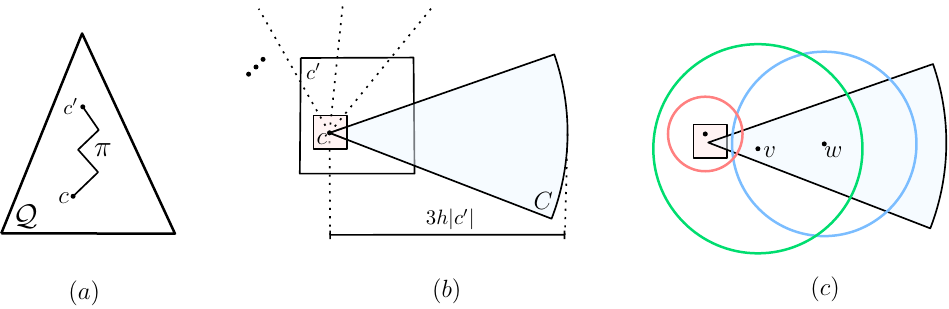}
		\caption{\small (a) Compressed quadtree $\mathcal Q$ and canonical path $\pi$. The lowest cell $c$ and topmost cell $c'$ of $\pi$.  (b) Illustration of $\lambda=(c,C)$. 
        (c) Classification of $\ppost(\lambda)$ and $\plarge(\lambda)$. A disk of $\ppost(\lambda)$(green disk) contains the disks of $\psmall(\lambda)$(red disk), while a disk of $\plarge(\lambda)$(blue disk) may not. 
        }
		\label{fig:plarge-psmall}
\end{figure}
\begin{lemma} \label{lem:arbi-irregular}
    Suppose $v$ is a small neighbor of $u$. 
    There exists a pair $\lambda\in \Lambda$ 
    such that $v\in P_\textnormal{small}(\lambda)$ and $u\in P_\textnormal{large}(\lambda)\cup P_\textnormal{post}(\lambda)$. Moreover, if $uv$ is non-redundant, $u\in P_\textnormal{large}(\lambda)$.
\end{lemma}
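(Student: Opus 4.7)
The plan is to exhibit a concrete pair $\lambda = (c, C) \in \Lambda$ witnessing the conclusion. Since $v$ is a small neighbor of $u$, we have $r_v < r_u$ and $r_u \geq h \cdot r_v$. Combined with $r_v \geq h|c_v|$ (from $v \in \pmid(c_v)$), this yields $r_u \geq h^2|c_v| = h|\bar{c}_v|$, so the target scale $r_u/h$ sits no lower than $|\bar{c}_v|$ on the root-to-$\bar{c}_v$ path of $\mathcal Q$.

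First I would locate the canonical path $\pi \in \Pi_v$ whose diameter range straddles $r_u/h$, that is, $|c_\pi| \leq r_u/h \leq |\pi_t|$. Cell diameters along the root-to-$\bar{c}_v$ path form a monotone sequence of powers of two starting at $|\bar{c}_v|$, and the segment-tree-like structure underlying the canonical decomposition of each heavy path supplies some $\pi \in \Pi_v$ covering this scale. I then set $c := c_\pi$. Because $c$ is an ancestor of $\bar{c}_v$ in $\mathcal Q$ we have $v \in c_v \subseteq \bar{c}_v \subseteq c$, and since $c = c_\pi$ with $\pi \in \Pi_v$, the vertex $v$ satisfies the defining condition of $\psmall(\lambda)$.

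For the cone, the triangle inequality gives $|up(c)| \leq (r_u + r_v) + |vp(c)| \leq r_u + r_v + |c|/2$. Plugging in $r_u \leq h|\pi_t|$ (from the straddle), $r_v \leq r_u/h \leq |\pi_t|$, and $|c|/2 \leq |\pi_t|/2$, this bound is at most $(h + 3/2)|\pi_t| \leq 3h|\pi_t| = r(C)$, since $h = 2^{10}$. Hence $u$ lies in the disk of radius $r(C)$ around $p(c_\pi)$, and therefore in some cone of $\mathcal C_\pi$, which I take as $C$. The condition $r_u \in [h|c|, \tfrac{2}{3}r(C))$ is immediate from the straddle, so $u \in \plarge(\lambda) \cup \ppost(\lambda)$. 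For the ``moreover'' clause, non-redundancy means $|uv| \geq r_u - r_v$, so $|up(c)| \geq r_u - r_v - |c|/2$, and combining with the upper bound above yields $\bigl| |up(c)| - r_u \bigr| \leq r_v + |c|/2$. Since $r_v < 2h|c_v| = 2|\bar{c}_v| \leq 2|c|$, we conclude $\bigl| |up(c)| - r_u \bigr| < 5|c|$, placing $u$ in $\plarge(\lambda)$.

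The main obstacle is the very first step: verifying that $\Pi_v$ always contains a canonical path whose diameter range straddles $r_u/h$. This is a structural property of the segment-tree-like canonical decomposition over heavy paths and hinges on the quadtree having been extended to include all cells of $\boxplus_{c_v}$, so that the root-to-$\bar{c}_v$ path carries enough cells at the intermediate scales; I would isolate it as a short standalone lemma before carrying out the argument above.
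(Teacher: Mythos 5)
Your proposal follows essentially the same route as the paper: pick a canonical path $\pi\in\Pi_v$ whose scale interval covers roughly $r_u/h$, set $\lambda=(c_\pi,C)$ with $C$ the cone of $\mathcal{C}_\pi$ containing $u$, and push the triangle-inequality estimates to place $u$ in the right set. A few points need tightening.

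First, your exact straddle condition $|c_\pi|\le r_u/h\le|\pi_t|$ is stronger than what can be guaranteed: when $r_u/h$ lies strictly between a power of two and twice that power, it may fall just above $|\pi_t|$ for every $\pi$. The achievable and sufficient condition is the one the paper uses, $h|\pi_\ell|\le r_u<2h|\pi_t|$, equivalently that $\pi$ contain a cell of diameter $|c_u|$, the power of two with $r_u\in[h|c_u|,2h|c_u|)$. Your arithmetic survives this weakening: you only need $r_u<2h|\pi_t|$ to get $|up(c)|\le r_u+r_v+|c|/2<(2h+2.5)|\pi_t|<3h|\pi_t|=r(C)$, and the radius window still closes since $r_u<2h|\pi_t|=\tfrac{2}{3}r(C)$.

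Second, when you conclude $u\in\plarge(\lambda)\cup\ppost(\lambda)$ from the radius range alone, you've skipped a needed check: membership in that union also requires $r_u-|up(c)|>-5|c|$, i.e.\ $|up(c)|<r_u+5|c|$. That bound does follow from your estimate $|up(c)|\le r_u+r_v+|c|/2$ together with $r_v<2|\bar c_v|\le 2|c|$, giving $|up(c)|<r_u+2.5|c|$, but it should be stated rather than left implicit; the radius interval by itself does not suffice.

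Third, you correctly isolate the existence of the straddling canonical path as the real content. The paper dispatches it with a one-line assertion (``such a path $\pi$ always exists''), resting implicitly on the extension of the compressed quadtree to include $\boxplus_{c_w}$ for all $w\in P$. Concretely, since $|uv|<2r_u$ and the cell at scale $|c_u|$ containing $v$ lies inside $\Box_{c_u}$, that cell is retained in the extended tree, so scale $|c_u|$ is not compressed away on the root-to-$\bar c_v$ path; that is the lemma you should isolate. Apart from these points, your cone-membership bound, the radius window, and the non-redundancy estimate pinning $u$ to $\plarge(\lambda)$ all mirror the paper's computations.
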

\begin{proof}
    We pick a canonical path $\pi$ of $\Pi_v$ such that 
    $h|\pi_\ell|\leq r_u <2h|\pi_t|$. 
    Since $r_u\geq hr_v$ and $r_v\geq h|c_v|=|\bar c_v|$, $r_v\geq h|\bar c_v|$. Hence, 
    such a path $\pi$ always exists.
    Note that $r_v\leq 2h|c_v|=2|\bar c_v|\leq 2|c_\pi|$. 
    Since $v\in c_\pi$, $|uv|\leq r_u+r_v<2|\pi_\ell|+2h|\pi_t|$ and the triangle inequality, 
    $|up(c_\pi)|\leq |uv|+|vp(c_\pi)|\leq 2|\pi_\ell|+2h|\pi_t|+|\pi_\ell|<3h|\pi_t|$. Subsequently, there is a cone $C$ of $\mathcal C_\pi$ that contains $u$.
    Let $\lambda=(c_\pi, C)$. Then $v\in \psmall(\lambda)$ by construction.
    Also, $r_u\in [h|c_\pi|, \frac{2}{3}r(C)]$ since $r_u< 2h|\pi_t|=\frac{2}{3}r(C)$. 
    Since $uv$ is an edge of $G$,
    \begin{align}
        |uv|&\leq r_u+r_v\leq r_u+2|c_\pi|. \text{ Then, } \\
        |up(c_\pi)|&\leq |uv|+|vp(c_\pi)| \text{  }(  \text{by triangle inequality}) \\
        &\leq r_u+3|c_\pi| < r_u+5|c_\pi|. 
    \end{align}
    Hence, $u\in \plarge(\lambda)\cup \ppost(\lambda)$.
    If $uv$ is non-redundant, we have $|uv|\geq r_u-r_v$.
    Then 
    \begin{align}
        |up(c_\pi)|&\geq |uv|-|vp(c_\pi)| \text{  }(  \text{by triangle inequality})\\
        &\geq r_u-r_v-|c_\pi| > r_u-5|c_\pi|. 
    \end{align}
    Then $u\notin \ppost(\lambda)$, which implies $u\in \plarge(\lambda)$. See Figure~\ref{fig:plarge-psmall}(c) for an illustration.
\end{proof}

The total complexities of $\psmall(\lambda)$, $\plarge(\lambda)$, and $\ppost(\lambda)$ are near-linear. 
\begin{lemma} \label{lem:arbi-size}
    $\sum_{\lambda\in \Lambda}|P_\textnormal{small}(\lambda)|=O(n\log n)$ and $\sum_{\lambda\in \Lambda}|P_\textnormal{large}(\lambda)|+|P_\textnormal{post}(\lambda)|=O(n\log n)$.
\end{lemma}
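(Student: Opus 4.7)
The plan is to bound each of the two sums by showing that every vertex $v\in P$ appears in only $O(\log n)$ of the sets on the left-hand side, and then summing over $v\in P$.

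For $\sum_{\lambda}|P_\textnormal{small}(\lambda)|$: if $v\in P_\textnormal{small}((c_\pi,C))$ then by the definition of $P_\textnormal{small}(\cdot)$ we must have $\pi\in\Pi_v$ and $C\in\mathcal{C}_\pi$. Since $|\Pi_v|=O(\log n)$ (the root-to-$\bar c_v$ path in $\mathcal{Q}$ decomposes into that many canonical paths) and $|\mathcal{C}_\pi|=\alpha=O(1)$, each $v$ lies in at most $O(\log n)$ such sets. Summing yields the first bound.

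For $\sum_\lambda (|P_\textnormal{large}(\lambda)|+|P_\textnormal{post}(\lambda)|)$: merging the two definitions, $v$ lies in $P_\textnormal{large}((c_\pi,C))\cup P_\textnormal{post}((c_\pi,C))$ iff $v\in C$, $r_v\in[h|c_\pi|,2h|\pi_t|)$, and $|vp(c_\pi)|\leq r_v+5|c_\pi|$. In particular $|c_\pi|\leq r_v/h$ and $|\pi_t|>r_v/(2h)$. Because the cones in $\mathcal{C}_\pi$ have opening angle at most $2\pi/\alpha<\arcsin(1/100)$, only $O(1)$ cones $C\in\mathcal{C}_\pi$ can contain the fixed point $v$. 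So it suffices, for each $v$, to bound the number of canonical paths $\pi$ whose bottom cell $c_\pi$ satisfies the stated geometric and scale constraints. I would prove this count is $O(\log n)$ per $v$ via a packing argument, which yields $O(n\log n)$ after summing.

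The main obstacle is this packing argument. The constraints confine $p(c_\pi)$ to the intersection of an annulus around $v$ of radius $\approx r_v$ and width $O(|c_\pi|)$ with a thin cone of opening angle less than $\arcsin(1/100)$, a region of radial length $\Theta(r_v)$; scale-by-scale cell counting alone is far too loose. To handle this I plan to exploit the hierarchical structure of canonical paths: canonical paths are subpaths of heavy paths, and every root-to-leaf path in $\mathcal Q$ contains only $O(\log n)$ light edges. By charging each contributing $\pi$ either to a light edge on the root-to-$c_\pi$ path lying in the thin region, or to a level in the segment-tree-like decomposition along a single heavy path that traverses the region, I expect the number of contributing canonical paths per $v$ to be $O(\log n)$, which completes the proof.
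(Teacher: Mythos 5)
Your argument for $\sum_\lambda |\psmall(\lambda)|$ is correct and identical to the paper's: $v$ appears only in pairs with $\pi\in\Pi_v$, there are $O(\log n)$ such $\pi$, and $O(1)$ cones each.

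For $\sum_\lambda (|\plarge(\lambda)|+|\ppost(\lambda)|)$ you correctly reduce the question to counting the canonical paths $\pi$ that can contribute, but the counting itself is left as a sketch, and the sketch has a real gap. Your merged condition $|vp(c_\pi)|\leq r_v+5|c_\pi|$ only confines $p(c_\pi)$ to a \emph{disk} of radius $\approx r_v$ around $v$ (because of $\ppost$), not to a thin annulus; the annulus-inside-a-cone description in your final paragraph applies only to $\plarge$, so the proposed packing/charging argument does not cover $\ppost$ at all. More importantly, no packing argument is actually needed. The observation that makes this easy is that the two scale constraints you extracted, $|c_\pi|\leq r_v/h$ and $|\pi_t|>r_v/(2h)$, together with $|vp(c_\pi)|\lesssim r_v\leq 2h|c_v|$, force the canonical path $\pi$ to pass through a cell of $\boxplus_{c_v}$ at the fixed level $|c_v|$: the bottom cell $c_\pi$ is small and sits inside $\Box_{c_v}$, and $\pi$ climbs past level $|c_v|$ since $|\pi_t|\geq |c_v|$. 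Since $\boxplus_{c_v}$ has $O(1)$ cells and each cell of $\mathcal Q$ lies on only $O(\log n)$ canonical paths (a property inherent to the canonical-path decomposition you already cite), the count is immediately $O(\log n)$ per $v$, with no geometric packing required. So the framing is on the right track, but the decisive step — charging $\pi$ to the fixed-level cell $\boxplus_{c_v}$ rather than to its variable-scale bottom cell $c_\pi$ — is missing, and without it the argument does not go through.
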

\begin{proof}
    For a fixed $v$, recall that the set $\Pi_v$ consists of $O(\log n)$ consecutive canonical paths divides the root-$\bar c_v$ path. Each canonical path is associated with $O(1)$ congruent cones. In total, there are $O(\log n)$ pairs $\lambda$ such that $\psmall(\lambda)$ contains $v$.

    For the second part, suppose $v\in \plarge(\lambda)\cup \ppost(\lambda)$ with $\lambda=(c, C)$ and $\lambda$ is defined under a canonical path $\pi$.
    We show that $\pi$ contains a grid cell of $\boxplus_{c_v}$.
    By definition of $\plarge(\lambda)$ and $\ppost(\lambda)$, 
    $|vp(c)|< r_v+5|c|\leq 2h|c_v|+5|c_v|<2h^2|c_v|$\footnote{As $v\in \pmid(c)$ implies $r_v\in [h|c|, 2h|c|)$, $|c_v|<|c|$ implies $r_v <  \frac{|c|}{2}\cdot 2h = h|c|$, leads to a contradiction.}.
    Thus, $c$ is contained in $\Box_{c_v}$.
    Subsequently, there is a grid cell $c'\in \boxplus_{c_v}$ with $c\subset c'$ and $|c'|=|c_v|$. 
    Moreover, $|c|\leq |c'|=|c_v|\leq \frac{1}{h}r_v< 2|\pi_t|$ where the last inequality comes from that $r_v< \frac{2}{3}r(C)=2h|\pi_t|$.
    Hence, $c'$ is a cell of $\pi$.
    Thus, the number of different $\lambda$ satisfying $v\in \plarge(\lambda)\cup\ppost(\lambda)$ is at most the number of grid cells in $\boxplus_{c_v}$ multiplied by the maximum number of canonical paths intersect a single cell, which is $O(\log n)$. Hence, the sum of $|\plarge(\lambda)|+|\ppost(\lambda)|$ for all $\lambda\in \Lambda$ is $O(n\log n)$.
\end{proof}

The following statement is a counterpart to Corollary~\ref{cor:smalltolarge}, addressing irregular but non-redundant edges.
\begin{corollary}
\label{cor:arbi-smalltolarge}
    Suppose the shortest $s$-$v$ path contains an irregular edge $uv$ and $v$ is not a leaf. 
    Then $|uv|\geq (1-\frac{1}{h})\max(r_u,r_v)$.
\end{corollary}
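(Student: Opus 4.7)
The plan is to derive Corollary~\ref{cor:arbi-smalltolarge} as a direct quantitative strengthening of Lemma~\ref{lem:smalltolarge}, tuned to the new threshold $h$ that defines irregularity in this section (rather than the constant $2$ used in Section~\ref{sec:bounded}). In other words, the corollary is the exact analogue of Corollary~\ref{cor:smalltolarge}, with $\tfrac12$ replaced by $(1-\tfrac1h)$ because the gap between the two endpoint radii is now much larger.

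First I would set up the ingredients. Since $uv$ lies on the shortest $s$-$v$ path and $v$ is the endpoint, the vertex $u$ must be the predecessor of $v$ on that path, so Lemma~\ref{lem:smalltolarge} applies (assuming $u\neq s$; the case $u=s$ is handled directly by the pre-processing that initializes $d(w)$ for all neighbors $w$ of $s$). Because the hypothesis rules out $v$ being a leaf of the shortest path tree, the single exceptional case of Lemma~\ref{lem:smalltolarge} (which required \emph{both} $r_v<r_u$ and $v$ a leaf) cannot occur. Hence Lemma~\ref{lem:smalltolarge} yields
\[
|uv| \;\geq\; |r_u - r_v|.
\]

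Next I would plug in the new notion of irregular edge. Writing $R := \max(r_u,r_v)$ and $r := \min(r_u,r_v)$, the definition gives $R \geq h \cdot r$, equivalently $r \leq R/h$, so
\[
|r_u - r_v| \;=\; R - r \;\geq\; R - \frac{R}{h} \;=\; \Bigl(1-\tfrac{1}{h}\Bigr)\max(r_u,r_v).
\]
Combining the two displayed inequalities gives exactly the claimed bound $|uv|\geq (1-1/h)\max(r_u,r_v)$. I do not foresee any genuine obstacle: the only subtlety is noting that ruling out the leaf condition alone is enough to escape the exceptional case of Lemma~\ref{lem:smalltolarge}, since that case required both $r_v<r_u$ and $v$ being a leaf simultaneously. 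No further geometric argument is needed.
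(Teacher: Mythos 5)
Your argument is correct and is exactly the intended derivation: the paper states Corollary~\ref{cor:arbi-smalltolarge} without a written proof, leaving it as an immediate consequence of Lemma~\ref{lem:smalltolarge} together with the new threshold $h$ in the definition of an irregular edge, and that is precisely what you supply. The one small caveat (which you flag) is the $u\neq s$ hypothesis needed to invoke Lemma~\ref{lem:smalltolarge}; the paper's bounded-ratio analogue (Corollary~\ref{cor:smalltolarge}) explicitly carries a $u\neq s$ clause while the arbitrary-ratio version silently omits it, so strictly speaking a $u\neq s$ assumption should accompany the corollary — appealing to the pre-processing does not make the geometric inequality itself hold for $u=s$, but every place the corollary is invoked in the paper indeed has $u\neq s$, so this is only a bookkeeping remark.
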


\subparagraph{Two-way lazy update scheme.}
Recall that Algorithm~\ref{alg-bounded} updates small neighbors of $v$ once it corrects the dist-values of $\pmid(c_v)$, whereas it postpones the update to (informal) large neighbors of $v$.
In this section, we apply the lazy update scheme for both large and small neighbors of $v$. The main reason is as follows.
Due to Lemma~\ref{lem:correct1}-(2), all vertices of $\pmid(c_v)$ simultaneously get correct dist-values after the call of line 5. This property heavily relies on the geometric property of $\pmid(c_v)$ such that for any $u,w\in \pmid(c_v)$, $|uw|\leq \frac{1}{8}r_u$. However, this property no longer holds for $\plarge(\lambda)$. More specifically, once we handle $v\in \plarge(\lambda)$, there might be a vertex $u\in \plarge(\lambda)$ whose predecessor in the shortest $s$-$u$ has not been processed, and therefore, we cannot get correct dist-values of $\plarge(\lambda)$ at this point.

Hence, one might consider a lazy update scheme to resolve this issue: delay the transmission of dist-values of $\plarge(\lambda)$ in the range $[x, x+f(\lambda)]$ for some function $f$ of $\lambda$ until all dist-values smaller than $x+f(\lambda)$ have been processed. Unfortunately, this naive approach itself is not useful.
Recall that for $v\in \plarge(\lambda)$ with $\lambda=(c,C)$, $r_v$ is in range $[h|c|, \frac{2}{3}r(C)]$ and $r(C)/|c|$ can be very large.
Hence, we cannot partition the dist-values of $\plarge(\lambda)$ into a constant number of ranges $[x, x+f(\lambda)]$, and the number of updates from $\plarge(\lambda)$ to $\psmall(\lambda)$ might be $\Theta(n)$, which leads to quadratic running time in total.

We reduce the running time using the following geometric observation, whose proof is deferred to Section~\ref{sec:arbi-correct}. Let $\lambda=(c,C)$.
\begin{restatable*}{lemma}{arbialarmdown}
\label{lem:arbi-alarm-down}
     Let $v,v'\in \plarge(\lambda)$, $u,u'\in \psmall(\lambda)$, $r_v>r_{v'}$, the shortest $s$-$u$ path contains $vu$, and $v'u'$ is an edge of $G$. 
    Then $d(w) < d(v)+r_v-6|c|$ where $w\in \plarge(\lambda)$ is a predecessor of $u'$ in the shortest $s$-$u'$ path.
\end{restatable*}

Suppose we know the subset $P(\lambda)$ of $\plarge(\lambda)$ whose dist-values have been corrected, and our algorithm will transmit dist-values of $P(\lambda)$ to $\psmall(\lambda)$ right after all dist-values smaller than $d(v)+r_v-6|c|$ have been processed for some $v\in P(\lambda)$.
Then, Lemma~\ref{lem:arbi-alarm-down} guarantees that $P(\lambda)$ contains all vertices of $\plarge(\lambda)$ whose radii are smaller than $r_v$.
Furthermore, after the lazy update, all vertices of $\psmall(\lambda)$ adjacent to $u\in \plarge(\lambda)$ with $r_u\leq r_v$ will have correct dist-values. Hence, the following procedure works.

\begin{itemize} \setlength\itemsep{-0.1em}
    \item \textbf{Step 1.} Maintain a priority queue that stores $v\in P(\lambda)$ with priority $d(v)+r_v-6|c|$.
    \item \textbf{Step 2.} Once we handle the lazy update caused by $d(v)+r_v-6|c|$, we update dist-values of the vertices of $\psmall(\lambda)$ which are adjacent to a vertex $v'\in P(\lambda)$ with $r_{v'}\leq r_v$.
    \item \textbf{Step 3.} After the update, we remove the updated vertices from $\psmall(\lambda)$.
\end{itemize}

Recall that the running time of $\update(U,V)$ is $O((|U|+|V|)\log^2 n)$ time. Although there might $\Theta(n)$ calls of \textsc{Update} for $(\plarge(\lambda),\psmall(\lambda))$, the total size of $|V|$ can be bounded by $\tilde O(n)$ due to \textbf{Step 3} and Lemma~\ref{lem:arbi-size}. 
Notice that $U$ always corresponds to a subset $P(\lambda)$ of $\plarge(\lambda)$, which starts as an empty set and grows as the algorithm progresses. 
Hence, we implement incremental data structures with near-linear construction time by carefully dynamizing \textsc{Update} subroutine, inspired by the spirit of~\cite{bentley1980decomposable}. See Section~\ref{sec:arbi-imp} for the details.
\begin{definition} \label{def:updateinc}
    Let $\textsc{Update-Inc}(W)$ be the incremental data structure with respect to $W$ that supports the following operations:
    \begin{itemize} \setlength\itemsep{-0.1em}
        \item Initialize: set $U=\emptyset$,
       \item Insert($v$): add a vertex $v\in W$ into $U$ together with a dist-value $d(v)$,
        \item Query$(V)$: given a set $V$ of vertices, perform $\textsc{Update}(U,V)$.
    \end{itemize}
\end{definition}
\begin{lemma} \label{lem:arbi-updateinc}
    There is a data structure $\textsc{Update-Inc}(\plarge(\lambda))$ that supports insert operation $O(n\log^4 n)$ time in total, and supports query operation on $V$ in $|V|\cdot O(\log^3 n)$ time. 
\end{lemma}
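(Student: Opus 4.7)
The plan is to apply the Bentley–Saxe (``logarithmic method'') transformation to the static \textsc{Update} structure of Lemma~\ref{lem:bounded-update}. The static data structure that underlies \textsc{Update}$(U,V)$ (the balanced tree with two additively weighted Voronoi diagrams per node) can be built on $U$ in $O(|U|\log^2|U|)$ time and then answers a query on $V$ in $O(|V|\log^2|U|)$ time. I will observe that the ``best predecessor'' problem this structure solves is \emph{decomposable}: if $U = U_1 \sqcup \cdots \sqcup U_k$, then
\[
\argmin_{u\in U,\ uv\in E(G)}\bigl(\text{dist}(u)+|uv|\bigr)
\ =\ \argmin_{1\le i\le k}\ \argmin_{u\in U_i,\ uv\in E(G)}\bigl(\text{dist}(u)+|uv|\bigr),
\]
so it suffices to maintain a bucketed partition of $U$ and query each bucket independently.

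Concretely, for a fixed $\lambda$, I would maintain $O(\log n)$ static sub-structures $S_0,S_1,\ldots$ whose element sets partition the current $U\subseteq \plarge(\lambda)$ and satisfy $|S_j|\in\{0,2^j\}$. An $\textsf{Insert}(v)$ finds the smallest index $j$ with $S_j$ empty, discards $S_0,\ldots,S_{j-1}$, and rebuilds a single static structure on the union of their elements together with $v$; this maintains the invariant on bucket sizes. A $\textsf{Query}(V)$ runs the static \textsc{Update} of Lemma~\ref{lem:bounded-update} on each non-empty $S_j$ against $V$ and keeps the best $(\text{dist}(v),\text{prev}(v))$ across buckets.

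The correctness is immediate from decomposability. For the running times, a standard amortized analysis shows that each element participates in $O(\log n)$ rebuilds, and a rebuild of a structure of size $k$ costs $O(k\log^2 k)=O(k\log^2 n)$, so the amortized insertion cost per element is $O(\log^3 n)$. Summing over all $\lambda$ and applying Lemma~\ref{lem:arbi-size}, the total insertion time is
\[
\sum_{\lambda\in\Lambda} |\plarge(\lambda)|\cdot O(\log^3 n) \;=\; O(n\log n)\cdot O(\log^3 n) \;=\; O(n\log^4 n).
\]
A query on $V$ invokes the static $\textsc{Update}$ on $O(\log n)$ buckets, each costing $O(|V|\log^2 n)$ time, for a total of $|V|\cdot O(\log^3 n)$.

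The main subtlety I anticipate is ensuring that the static $\textsc{Update}$ subroutine really is decomposable in our weighted setting: the minimizer in Lemma~\ref{lem:update-correct} depends on the \emph{entire} $U$, so I must verify that running the static routine independently on each bucket and taking the bucket-wise minimum of $\text{dist}(u')+|u'v|$ over successful queries still returns the true minimum over $U$. This holds because the invariant the routine enforces for $v$ is only that $\text{dist}(v)\le \text{dist}(u)+|uv|$ for every edge $uv$ with $u\in U$; monotonicity of $\min$ across a partition of $U$ then gives the right answer globally. Beyond that, the analysis is a direct application of the logarithmic method and presents no additional difficulty.
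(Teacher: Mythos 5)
Your proposal is correct and takes essentially the same approach as the paper: the paper also applies the Bentley--Saxe logarithmic method (maintaining $O(\log n)$ static copies of the Lemma~\ref{lem:bounded-update} structure, merging two equal-size ones when their number exceeds $\log n$, and querying each copy independently), relying on the same decomposability of $\min_{u\in U,\,uv\in E}(\text{dist}(u)+|uv|)$ via Lemma~\ref{lem:update-correct}. The only superficial difference is your binary-counting bucket rule $|S_j|\in\{0,2^j\}$ versus the paper's ``merge the two most equal-sized trees'' rule; both give the same amortized $O(\log n)$ rebuilds per element and the same bounds.
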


\subparagraph{Remark.}
We introduced an incremental data structure to handle the $\Theta(n)$ updates from $\plarge(\lambda)$ to $\psmall(\lambda)$. However, due to a similar reason, the number of lazy updates from $\psmall(\lambda)$ to $\plarge(\lambda)$ might be $\Theta(n)$. Fortunately, we can implement this direction of lazy update without using the dynamic data structure using another geometric observation. See Section~\ref{sec:arbi-correct} for details.

\subsection{Algorithm} \label{sec:arbi-algo}
In this section, we present an $O(n\log^4 n)$-time algorithm for the SSSP problem on disk graphs of arbitrary radius ratio. 
The goal is to compute $d(v)$ for all $v\in P$.
Initially, we set $\text{dist}(v)$ as infinity for all vertices other than a source vertex $s$, and set $\text{dist}(s)=0$. 
Ultimately, the algorithm will modify $\text{dist}(v)$ into $d(v)$ for all $v\in P$.
In addition, for each pair $\lambda\in \Lambda$, 
we maintain two values, $\text{alarm-up}(\lambda)$ and $\text{alarm-down}(\lambda)$, together with a priority queue denoted as $Q(\lambda)$. Both $\text{alarm-up}(\lambda)$ and $\text{alarm-down}(\lambda)$ are initialized to $\infty$, and $Q(\lambda)$ is initialized to
an empty queue. 
Then we initialize the set $R$ as the set $P$. 

First we do the pre-processing. 
We compute exact dist-values $d(v)$ for all neighbors of $s$ in $G$ and setting $\text{dist}(v)=d(v)$.
Then for each vertex $v$, we compute the set $L_1(v)$ (and $L_2(v))$ of pairs $\lambda\in \Lambda$ where $v\in\psmall(\lambda)$ (and $v\in \plarge(\lambda)$) and $v$ has a neighbor in $\plarge(\lambda)$ (and $\psmall(\lambda)$). 
Finally, we initialize $\textsc{Updtae-Inc}(\plarge(\lambda))$ for all $\lambda\in \Lambda$.
Then the algorithm consists of several rounds. In each round, we check $\text{dist}(v)$
for all $v\in R$ and 
$\text{alarm-up}(\lambda),\text{alarm-down}(\lambda)$ for all $\lambda\in \Lambda$.
Then we find the minimum value $k$ among them and proceed depending on the type of $k$.
The algorithm moves to the post-processing step when $R$ becomes empty. The algorithm utilizes the \textsc{Update} subroutine from Section~\ref{sec:bounded}.

\subparagraph{Case~1: $k=\textnormal{dist}(v)$ for a vertex $v \in R$.}
In this case, we apply $\textsc{Update}(\bigcup_{c\in \boxplus_{c_v}} \pmid(c), \pmid(c_v))$.
Later we will see that after this, all vertices in $\pmid(c_v)$ have the correct dist-values, except those which are leaves in the shortest path tree.
Then we update the neighbors of $\pmid(c_v)$ using the corrected dist-values. 
This is done by executing subroutine $\textsc{Update}(\pmid(c_v), \bigcup_{c\in \boxplus_{c_v}} \pmid(c))$, setting $\text{alarm-up}(\lambda)=\text{dist}(v)+\frac{1}{4}v(\lambda)$ for every $\lambda\in L_1(v)$ with $\text{alarm-up}(\lambda)=\infty$, and inserting $u$ with the priority $\text{dist}(u)+r_u-6|c|$ into $Q(\lambda)$, inserting $u$ into $\textsc{Update-Inc}(\plarge(\lambda))$ and setting $\text{alarm-down}(\lambda)$ as the minimum priority in $Q(\lambda)$
for every $u\in \pmid(c_v)$ and every $\lambda\in L_2(u)$.
Here, $v(\lambda)$ is the smallest radius of the vertex of $\plarge(\lambda)$ forming an edge with $\pmid(c_v)$. 
Intuitively, the subroutine takes care of the neighbors connected by regular edges, and two types of alarms take care of large neighbors and small neighbors of the vertices of $\pmid(c_v)$, respectively.
Finally, we remove $\pmid(c_v)$ from $R$.

\subparagraph{Case~2: $k=\textnormal{alarm-up}(\lambda)$ for a pair $\lambda\in \Lambda$.} 
In this case, we shall correct the dist-values of all $v\in \plarge(\lambda)$ whose predecessor $u$ is in $\psmall(\lambda)$ with $d(u)<k$.
This is done by applying $\textsc{Update}(\usmall(\lambda, k), \ularge(\lambda, k))$ 
where $\usmall(\lambda, k)$ 
and $\ularge(\lambda, k)$ are computed as follows.
Let $k''<k'$ be the values from the two preceding rounds of $k=\text{alarm-up}(\lambda)$ that are closest to the current round. We set $\usmall(\lambda, k)$ by the vertices of $\psmall(\lambda)$ whose dist-values are in range $[k', k]$. 
Let $w'_k$ and $w''_k$ be the vertices of  $\plarge(\lambda)$ with minimum radii such that they have small neighbors in $\psmall(\lambda)$ with dist-values in range $[k', k]$ and $[-\infty, k'']$, respectively.
We set $\ularge(\lambda, k)$ by the vertices of $\plarge(\lambda)$ whose radius is in range $[r_{w'_k}, r_{w''_k}]$.
Later we show that this is sufficient to correct the dist-values of all $v\in \plarge(\lambda)$ whose predecessor is in $\psmall(\lambda)$.
After this, we reset $\text{alarm-up}(\lambda)$ to $\infty$.

\subparagraph{Case 3: $k=\textnormal{alarm-down}(\lambda)$ for a pair $\lambda\in \Lambda$.}
In this case, we shall correct the dist-values of all $v\in \psmall(\lambda)$ which are adjacent to $u\in \plarge(\lambda)$ with $r_u\leq r_{u_k}$,
where $u_k$ is the vertex stored in $Q(\lambda)$ with minimum priority. 
This is done by the query operation on $\dsmall(\lambda, k)$ to $\textsc{Update-Inc}(\plarge(\lambda, k))$.
Here,
$\dsmall(\lambda, k)$ denotes the set of vertices in $\psmall(\lambda)$ such that (i) form an edge with a vertex of radius at most $r_{u_k}$ contained in $\plarge(\lambda)$, and (ii) have not been considered in a preceding round of $k=\text{alarm-down}(\lambda)$.
Finally, we delete $u_k$ from $Q(\lambda)$ and set $\text{alarm-down}(\lambda)$ to the minimum priority stored in $Q(\lambda)$.

{\SetAlgoNoLine
\begin{algorithm}[H] 
    \caption{$\textsc{SSSP-Arbitrary-Radius-Ratios}(P)$} \label{alg-arbi}
        $R \leftarrow P$ and initialize $\textsc{Update-Inc}(\plarge(\lambda))$ for all $\lambda\in \Lambda$ \\
        \While{$R \neq \emptyset$}{
             $k \leftarrow \min(\{\text{dist}(v): v \in R\} \cup \{\text{alarm-up}(\lambda): \lambda \in \Lambda\}\cup \{\text{alarm-down}(\lambda): \lambda \in \Lambda\})$\\
            \If{$k=\textnormal{dist}(v)$ \textnormal{for} $v \in R$}{
                 $\textsc{Update}(\bigcup_{c\in \boxplus_{c_v}} \pmid(c), \pmid(c_v))$\\
                 $\textsc{Update}(\pmid(c_v), \bigcup_{c\in \boxplus_{c_v}} \pmid(c))$\\
                \For{$\lambda\in L_1(v)$}{
                    \If{$\textnormal{alarm-up}(\lambda)=\infty$}{
                         $\text{alarm-up}(\lambda) \leftarrow \text{dist}(v)+\frac{1}{4}v(\lambda)$ 
                    }
                }

                \For{$u\in P_{\textnormal{mid}}(c_v)$}{
                    \For{$\lambda=(c,C)\in L_2(u)$}{
                         $Q(\lambda)\gets \text{insert}(u, \text{dist}(u)+r_u-6|c|)$\\
                         $\textsc{Update-Inc}(\plarge(\lambda))$.insert$(u)$\\
                         $\text{alarm-down}(\lambda)\gets \text{min-priority}(Q(\lambda))$
                    }
                    
                }
                 $R \leftarrow R \setminus \pmid(c_v)$
            }
            \If{$k=\textnormal{alarm-up}(\lambda)$ \textnormal{for} $\lambda \in \Lambda$}{
                 $\textsc{Update}(\ularge(\lambda, k), \usmall(\lambda, k))$\\
                 $\text{alarm-up}(\lambda) \leftarrow \infty$
            }
            \If{$k=\textnormal{alarm-down}(\lambda)$ \textnormal{for} $\lambda \in \Lambda$}{
                 $\textsc{Update-Inc}(\plarge(\lambda))$.query$(\dsmall(\lambda, k))$\\
                 $\text{alarm-down}(\lambda) \leftarrow \text{min-priority(delete(}Q(\lambda))$
            }
        }
\end{algorithm}}

\subparagraph{Post-processing}
In the post-processing step, we correct the dist-values for the remaining vertices. 
We execute \textsc{Update}$(\ppost(\lambda), \psmall(\lambda))$ for every pair $\lambda$ in $\Lambda$.

\subsection{Correctness} \label{sec:arbi-correct}
In this section, we show that the algorithm from Section~\ref{sec:arbi-algo} correctly computes the shortest path tree. 
Recall that an irregular edge $uv$ is redundant if $|uv|<|r_u-r_v|$. Moreover, if $u=\text{prev}(v)$, then 
$r_u>r_v$ and $v$ is a leaf on the shortest path tree due to Lemma~\ref{lem:smalltolarge}. 
First we show that the vertices in $P_\text{large}(\lambda)$ form a clique. 

\begin{lemma} \label{lem:arbi-clique}
    The vertices of $P_\textnormal{large}(\lambda)$ form a clique in $G$ for every pair $\lambda$.
\end{lemma}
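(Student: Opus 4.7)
The plan is to show directly that every pair $u, v \in P_\text{large}(\lambda)$ satisfies $|uv| \leq r_u + r_v$, which immediately certifies $uv \in G$. Writing $\lambda = (c, C)$, I would work in the triangle with vertices $u$, $v$, $p(c)$ and apply the law of cosines, exploiting three structural facts baked into the definitions: (i) both vertices lie inside the cone $C$, whose full angle at its apex $p(c)$ is $\phi := 2\pi/\alpha < \arcsin(\tfrac{1}{100})$; (ii) by the definition of $P_\text{large}(\lambda)$, both $|up(c)|$ and $|vp(c)|$ lie within $5|c|$ of $r_u$ and $r_v$ respectively, so each disk ``grazes'' the apex $p(c)$; (iii) $r_u, r_v \geq h|c|$ with $h = 2^{10}$, so the slack $5|c|$ is tiny compared to the radii.

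Setting $a := |up(c)|$, $b := |vp(c)|$ and $\theta \leq \phi$ for the angle at $p(c)$, the law of cosines yields
\[
|uv|^2 = (a-b)^2 + 2ab(1 - \cos\theta).
\]
From (ii), $(a - b)^2 \leq (|r_u - r_v| + 10|c|)^2$ and $ab \leq (r_u + 5|c|)(r_v + 5|c|)$; from (i) together with $1 - \cos\theta \leq \theta^2/2 \leq \phi^2/2$, the angular term is bounded by $ab\phi^2 \leq (r_u + 5|c|)(r_v + 5|c|)/(99)^2$. A direct expansion gives the telescoping identity
\[
(r_u + r_v)^2 - (|r_u - r_v| + 10|c|)^2 = 4 r_u r_v - 20|c| \cdot |r_u - r_v| - 100|c|^2,
\]
so the proof reduces to verifying that $4 r_u r_v$ comfortably dominates the sum of the radial slack $20|c|\cdot|r_u - r_v| + 100|c|^2$ and the angular error $(r_u + 5|c|)(r_v + 5|c|)/(99)^2$.

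The main (and essentially only) obstacle is a careful constant check that the setup has been chosen to absorb all error terms. Using (iii) with $r_u, r_v \geq 1024|c|$, the leading term $4 r_u r_v$ exceeds $20|c| \cdot |r_u - r_v|$ by a factor of at least $r_{\min}/(5|c|) \geq 200$, exceeds $100|c|^2$ by a factor of order $(h/5)^2$, and exceeds the angular error by a factor of order $(99)^2$ since $(r_u + 5|c|)(r_v + 5|c|) \leq (1 + 5/h)^2 r_u r_v$. Summing, $|uv|^2 \leq (r_u + r_v)^2$ holds with considerable slack. The upshot is that the values $h = 2^{10}$ and $\alpha > 2\pi/\arcsin(\tfrac{1}{100})$ were selected precisely so that vertices in $P_\text{large}(\lambda)$ are forced to have pairwise-intersecting disks, because any two such disks both reach within $5|c|$ of the common apex while being confined to a narrow cone.
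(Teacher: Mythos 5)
Your proof is correct, and the underlying geometric idea is the same as the paper's (both vertices sit in a cone of tiny aperture at $p(c)$ and each is within $5|c|$ of ``grazing'' the apex, so the disks must overlap), but you execute it via a different computation. The paper drops a perpendicular from $v$ to the line through $p(c)$ and $u$, letting $x$ denote the foot, bounds $|vx| \leq |vo|\sin(2\pi/\alpha)$ and $|ux| = \bigl||xo|-|uo|\bigr|$, and then splits into two cases depending on whether $u$ or $x$ is farther from the apex; in each case a short chain of triangle-inequality estimates gives $|uv| < r_u + r_v$. You instead apply the law of cosines in the triangle $u\,p(c)\,v$, write $|uv|^2 = (a-b)^2 + 2ab(1-\cos\theta)$ with $a=|up(c)|$, $b=|vp(c)|$, and bound each piece uniformly using $|a-r_u|,|b-r_v|<5|c|$ and $1-\cos\theta\le\theta^2/2\le\phi^2/2$ with $\phi < \arcsin(1/100) < 1/99$; a single constant check against $4r_ur_v$ then finishes without any case analysis. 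Your constants verify: $4r_ur_v$ dominates $20|c||r_u-r_v|$ by a factor $\geq h/5 \approx 205$, dominates $100|c|^2$ by a factor $(h/5)^2$, and dominates $ab\phi^2 \leq (1+5/h)^2 r_ur_v / 99^2$ by a factor about $4\cdot 99^2$, so the error terms sum to well under $4r_ur_v$. The law-of-cosines route is a clean, case-free alternative that makes explicit that the choice $h = 2^{10}$, $\alpha > 2\pi/\arcsin(1/100)$ leaves ample slack.
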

\begin{proof}
     Let $\lambda=(c, C)$ and $o(=p(c))$ be the apex of $C$, and $u,v$ be the vertices of $\plarge(\lambda)$ with $r_v\leq r_u$.
    We enough to show that $|uv|\leq r_u+r_v$.
    Let $x$ be the projection point of $v$ onto the line passing through $o$ and $u$. 
    Since $u,v,x$ are contained in $C$, the angle $\angle vox$ is at most $\frac{2\pi}{\alpha}$.
    Then
    \begin{align}
        |vx|&\leq |vo|\sin(\frac{2\pi}{\alpha})\leq \frac{1}{100}(r_v+5|c|), \\
        |xo|&\leq |vo| \\
        r_u-5|c|&\leq |uo|.
    \end{align}
    If $|uo|\leq |xo|$, we have
    \begin{align}
        |uv|&\leq |ux|+|xv|=(|xo|-|uo|)+|vx|\\
        &\leq  (|vo|-|uo|)+|vx| \\
        &\leq (r_v-r_u+10|c|)+\frac{1}{100}(r_v+5|c|)\\
        &< \frac{1}{100}r_v+11|c| \text{ }(\text{since } r_v\leq r_u)\\
        &<(r_u+r_v) \text{  }(\text{since } 1024|c|\leq r_u)
    \end{align}
    Otherwise, suppose $|xo|\leq |uo|$. Then
    \begin{align}
        |uv|&\leq (|uo|-|xo|)+|vx| \\
        &\leq (|uo|-|vo|)+2|vx| \text{ }(\text{since } |vo|\leq |vx|+|xo|)\\
        &\leq (r_u-r_v+10|c|)+\frac{1}{50}r_v+\frac{1}{10}|c| \\
        &\leq (r_u+r_v) \text{  }(\text{since } 1024|c|\leq r_v).
    \end{align}
    For both cases, we have $|uv|\leq r_u+r_v$.
\end{proof}

\begin{lemma} \label{lem:arbi-correct-main}
The following statements hold during the execution of Algorithm~\ref{alg-arbi}. 
\begin{itemize} \setlength\itemsep{-0.1em}
    \item[\textnormal{(1)}] At the onset of line 5, $\textnormal{dist}(v)=d(v)$ unless $v$ is a leaf.
    \item[\textnormal{(2)}] After line 5, $\textnormal{dist}(u)=d(u)$ 
    if $u\in P_\textnormal{mid}(c_v)$ and $u$ is not a leaf. 
    \item[\textnormal{(3)}] After line 17, $\textnormal{dist}(u)=d(u)$ if $u\in P_\textnormal{large}(\lambda)$, 
    $\textnormal{prev}(u)\in P_\textnormal{small}(\lambda)$, and $d(\textnormal{prev}(u))<k$.
    \item[\textnormal{(4)}] After line 20, 
    let $v$ be the vertex in $Q(\lambda)$ with minimum priority. Then 
    $\textnormal{dist}(u)=d(u)$ if $u\in P_\textnormal{small}(\lambda)$ and $u$ is adjacent to $v'\in P_\textnormal{large}(\lambda)$ with $r_{v'}\leq r_v$. 
\end{itemize}
\end{lemma}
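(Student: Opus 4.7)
My plan is to prove Lemma~\ref{lem:arbi-correct-main} by induction on the round index $i$, mirroring the inductive structure used for Lemma~\ref{lem:correct1} but augmented to handle two families of alarms and the new role of $\plarge(\lambda)$. The base case covers the first round, which (as in Section~\ref{sec:bounded}) is necessarily Case~1 with $k=\text{dist}(s)$; the pre-processing already sets $\text{dist}(v)=d(v)$ for every neighbor $v$ of $s$, so (1) and (2) hold at the children of $s$ and (3), (4) are vacuous. Throughout, I will use Lemma~\ref{lem:smalltolarge} to treat redundant irregular edges separately: if a redundant $uv$ lies on the shortest path tree, then the endpoint of smaller radius is a leaf, so its dist-value may legitimately remain incorrect until the post-processing call $\textsc{Update}(\ppost(\lambda),\psmall(\lambda))$.

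For (1) and (2) the argument parallels the proof of Lemma~\ref{lem:correct1}. For (1), I pick the vertex $x$ closest to $v$ on the shortest $s$-$v$ path with $\text{dist}(x)=d(x)$ and show $x=v$ unless $v$ itself is a leaf. The key inputs are that $x\notin R$ because $d(x)<k$, and that when $x$ was removed from $R$ in some round $j<i$ the child $y$ of $x$ on the path received its correct dist-value either directly through line~6 (regular edges, via Lemma~\ref{lem:arbi-regular}) or through one of the alarm mechanisms. Splitting on whether $xy$ is regular, a non-redundant irregular edge with $r_y<r_x$ (handled by the $\text{alarm-down}$ set in lines~11--15 together with Lemma~\ref{lem:arbi-alarm-down}), or a non-redundant irregular edge with $r_y>r_x$ (handled by $\text{alarm-up}$ with threshold $\text{dist}(x)+\frac14 v(\lambda)$, combined with Corollary~\ref{cor:arbi-smalltolarge}), shows that in each case the relevant alarm has already fired by round $i$, contradicting the choice of $x$ unless $v$ is a leaf on a redundant edge. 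For (2), the predecessor $w$ of $u\in\pmid(c_v)$ is shown to satisfy $d(w)<d(v)$ by the same $r_u$-subtracting calculation as in Lemma~\ref{lem:correct1}-(2); so $w\notin R$, and depending on the type of $wu$, line~6, the $\text{alarm-up}$ round, or the $\text{alarm-down}$ round has corrected $\text{dist}(u)$ before line~5 executes.

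Parts (3) and (4) require the new two-way lazy-update machinery. For (3), when $k=\textnormal{alarm-up}(\lambda)$ fires with preceding alarm values $k''<k'<k$, any $u\in\plarge(\lambda)$ whose predecessor $\textnormal{prev}(u)\in\psmall(\lambda)$ has $d(\textnormal{prev}(u))<k$ splits into two subcases: (a) $d(\textnormal{prev}(u))\in[k',k)$, so $\textnormal{prev}(u)\in\usmall(\lambda,k)$ by construction; (b) $d(\textnormal{prev}(u))<k'$, where the previous alarm already updated $\usmall(\lambda,k')$, and the remaining task is to push the information already deposited into small-radius vertices of $\plarge(\lambda)$ onward to $u$. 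The definition of $\ularge(\lambda,k)$ through the minimum-radius witnesses $w'_k,w''_k$ is designed so that $\textsc{Update}(\ularge(\lambda,k),\usmall(\lambda,k))$ covers both cases, exploiting that $\plarge(\lambda)$ is a clique (Lemma~\ref{lem:arbi-clique}) to relay the correct values. For (4), when $k=\textnormal{alarm-down}(\lambda)$ fires, the vertex $u_k\in Q(\lambda)$ with minimum priority has already had its containing $\pmid(\cdot)$ inserted into $\textsc{Update-Inc}(\plarge(\lambda))$ in line~13, and Lemma~\ref{lem:arbi-alarm-down} ensures that every $u\in\plarge(\lambda)$ with $r_u\leq r_{u_k}$ whose correct dist-value could serve as the predecessor of some $v\in\dsmall(\lambda,k)$ has been inserted before $k$; the query thus corrects all such $v$.

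The main obstacle I foresee is the interval bookkeeping in part (3): because alarms fire only at discrete thresholds but the predecessors in $\psmall(\lambda)$ can have arbitrary dist-values in between, the proof must verify that the decomposition $[-\infty,k'')\cup[k'',k')\cup[k',k]$ together with the radius window $[r_{w'_k},r_{w''_k}]$ exactly covers the pending updates, with no edge missed at the boundaries. Verifying this requires combining Corollary~\ref{cor:arbi-smalltolarge}, Lemma~\ref{lem:arbi-alarm-down}, and Lemma~\ref{lem:arbi-clique} simultaneously, and it is where I expect the argument to be most delicate; a secondary but routine concern is checking that the post-processing \textsc{Update} on $(\ppost(\lambda),\psmall(\lambda))$ exactly catches the leaves corresponding to redundant irregular edges, which follows from Lemmas~\ref{lem:smalltolarge} and~\ref{lem:arbi-irregular}.
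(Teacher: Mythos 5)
Your plan for parts (1), (2), and (4) follows essentially the same route as the paper: induction on rounds, the same case split on whether the relevant edge is regular, irregular with the parent as small neighbor, or irregular with the parent as large neighbor, and the use of Lemma~\ref{lem:arbi-alarm-down} in (4) to conclude that the predecessor $w$ was inserted into $\textsc{Update-Inc}(\plarge(\lambda))$ before the alarm fired. (One small misattribution: in (1) the priority bound for the alarm-down queue comes from Lemma~\ref{lem:smalltolarge} applied to the shortest $s$-$y$ path, not from Lemma~\ref{lem:arbi-alarm-down}.)

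Part (3), however, is where you have a real gap, and it is the part you yourself flag as the ``main obstacle.'' You propose that $\textsc{Update}$ ``covers both cases, exploiting that $\plarge(\lambda)$ is a clique to relay the correct values,'' and that the verification amounts to ``combining Corollary~\ref{cor:arbi-smalltolarge}, Lemma~\ref{lem:arbi-alarm-down}, and Lemma~\ref{lem:arbi-clique}.'' None of these is how (3) actually goes. The call at this step transmits from $\usmall(\lambda,k)\subseteq\psmall(\lambda)$ into $\ularge(\lambda,k)\subseteq\plarge(\lambda)$ (notice the lemma corrects $u\in\plarge(\lambda)$ from $\text{prev}(u)\in\psmall(\lambda)$), so there is no relay \emph{within} $\plarge(\lambda)$ to appeal to and the clique property plays no role here. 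Nor does Lemma~\ref{lem:arbi-alarm-down}, which governs the alarm-down direction in (4). What the proof actually needs is a \emph{new} geometric argument showing that if $u$ with predecessor $w\in\psmall(\lambda)$ were \emph{not} in $\ularge(\lambda,k)$, i.e., there is some $y\in\plarge(\lambda)$ with $r_y<r_u$ adjacent to some $x\in\psmall(\lambda)$ with $d(x)<k''$, then the concatenated path $s\to x\to y\to u$ would be strictly shorter than the shortest $s$-$u$ path. This hinges on a fresh projection/cone-angle calculation: projecting $y$ onto the line $ou$ (with $o$ the apex of the cone) and using the small angle bound $\sin(2\pi/\alpha)\le 1/100$ to establish $|xy|+|yu|<|uw|+\frac18 r_y$, followed by two subcases depending on whether $\frac14 r_y$ is larger or smaller than $k'-k''$. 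This inequality, not interval bookkeeping or a clique relay, is the crux of (3), and your sketch does not contain the idea for it.
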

\begin{proof}
The proof of Lemma~\ref{lem:arbi-correct-main} mirrors the logical flow of the proof presented in Lemma~\ref{lem:correct1}.
We apply induction on the index $i$ of the round.
For the base case, the first round of the algorithm is a round of \textbf{Case~1} of $k=\text{dist}(s)$, where $s$ is a source vertex.
    The shortest $s$-$u$ path with $u\in \pmid(c_s)$ is $su$ since $\pmid(c_s)$ forms a clique, and $\dist{u}$ is already set to $|su|=d(u)$ during the pre-processing.

\medskip
Now we assume that Lemma~\ref{lem:arbi-correct-main} holds up to the $(i-1)$-th round.
Suppose $i$-th round performs line 5.
First we show \textbf{(1)}. 
Let $x$ be the closest vertex to $v$ along the shortest $s$-$v$ path such that $d(x)=\text{dist}(x)$. 
If $x=v$, we are done. Otherwise, $x\notin R$ since $d(x)<k=\min(\{\text{dist}(v):v\in R\})$. Then By induction hypothesis on the round which removes $x$ from $R$, a child $y$ of $x$ satisfies $d(y)=\text{dist}(y)$ after line 6 of the round if $xy$ is a regular edge.
If $xy$ is an irregular edge, $xy$ is non-redundant because $v$ is not a leaf.
If $x$ is a small neighbor of $y$, there is a pair $\lambda$ with $x\in\psmall(\lambda)$ and $y\in \plarge(\lambda)$ by Lemma~\ref{lem:arbi-irregular}, and $\text{alarm-up}(\lambda)$ was at most $d(x)+\frac{1}{4}r_y$ after line 9 of that round.
Due to Corollary~\ref{cor:arbi-smalltolarge}, $\text{alarm-up}(\lambda)<d(x)+|xy|=d(y)\leq d(v)$. 
Then by induction hypothesis \textbf{(3)} on the round of $k=\text{alarm-up}(\lambda)$, $d(y)=\text{dist}(y)$.
For the other case that $x$ is a large-neighbor of $y$, there is a pair $\lambda'=(c, C)$ with $x\in\plarge(\lambda')$ and $y\in \psmall(\lambda')$ by Lemma~\ref{lem:arbi-irregular}, and 
$Q(\lambda')$ contains $x$ with priority $d(x)+r_x-6|c|$ after that round. 
Note that $r_y\leq 2|c|$ since $y\in \psmall(\lambda')$. 
Then the priority is at most $d(x)+r_x-r_y < d(x)+|xy|=d(y)\leq d(v)$. 
Here, the first inequality comes from Lemma~\ref{lem:smalltolarge}. Then $d(y)=\dist{y}$ due to induction hypothesis on \textbf{(4)}.
Overall, we have $d(y)=\text{dist}(y)$ and this contradicts the choice of $x$. Thus, $d(v)=\text{dist}(v)$.

\medskip
Then we show \textbf{(2)}. Let $w=\text{prev}(u)$.
We show that $d(w)=\text{dist}(w)$ at the onset of line 5. If $w$ is a source vertex, this is obvious. Therefore, we assume $w\neq s$ and let $x=\text{prev}(w)$.
Note that the proof of \textbf{(1)} implicitly says that for any non-leaf vertex $v'$ with $d(v')\leq k$, $d(v')=\text{dist}(v')$. Moreover, if $d(v')<k$, then $v'\notin R$.
Observe that $xu$ is not an edge of $G$ since otherwise $x$ is the predecessor of $u$. Then $d(x)< d(x)+|xu|-(r_u+r_x)\leq d(x)+|xw|+|wu|-r_u=d(u)-r_u < d(v)$. The last inequality comes from $u,v\in \pmid(c_v)$.
Hence, $d(x)=\text{dist}(x)$ and $x\notin R$.
If $xw$ is a regular edge, $d(w)=\dist{w}$ due to the induction hypothesis \textbf{(2)} on the round which removes $x$ from $R$. 

Otherwise, suppose $xw$ is an irregular edge. Note that $xw$ is non-redundant since $x,w$ are not leaves.
Suppose $x$ is a small neighbor of $w$. Then $x\in \psmall(\lambda)$ and $w\in \plarge(\lambda)$ for some $\lambda\in \Lambda$ due to Lemma~\ref{lem:arbi-irregular}. The alarm $\text{alarm-up}(\lambda)$ was set to at most $d(x)+\frac{1}{4}r_w$ after the round which removes $x$ from $R$, and $d(x)+\frac{1}{4}r_w<d(x)+|xw|-\frac{1}{2}r_w=d(w)-\frac{1}{2}r_w$ since $|xw|\geq (1-\frac{1}{h})r_w$ by Corollary~\ref{cor:arbi-smalltolarge}. Moreover, due to Lemma~\ref{lem:smalltolarge}, $|uw|\geq |r_u-r_w|\geq \frac{1}{2}(r_u-r_w)$. Hence, $d(w)-\frac{1}{2}r_w=d(u)-|uw|-\frac{1}{2}r_w<d(u)-\frac{1}{2}r_u<d(v)$.
Due to induction hypothesis \textbf{(3)} on the round of $k=\text{alarm-up}(\lambda)$, $d(w)=\dist{w}$.
Finally, suppose $x$ is a large-neighbor of $w$. Then $x\in \plarge(\lambda')$ and $w\in \psmall(\lambda')$ for some $\lambda'=(c,C)\in \Lambda$ due to Lemma~\ref{lem:arbi-irregular}. 
Then $Q(\lambda')$ contains $x$ with priority $d(x)+r_x-6|c|$ after that round. Since $r_w$ is at most $2|c|$, the priority is at most $d(x)+r_x-3r_w<d(x)+|xw|-2r_w<d(w)-\frac{1}{2}r_w$. 
Here, the first inequality comes from Lemma~\ref{lem:smalltolarge} to the shortest $s$-$w$ path.
Again by Lemma~\ref{lem:smalltolarge}, $|uw|\geq \frac{1}{2}(r_u-r_w)$.
Then $d(w)-\frac{1}{2}r_w=d(u)-|uw|-\frac{1}{2}r_w<d(u)-\frac{1}{2}r_u<d(v)$.
Hence, by induction hypothesis \textbf{(4)}, $d(w)=\text{dist}(w)$.

Overall, $d(w)=\dist{w}$ at the onset of line 5. We show that $d(u)=\text{dist}(u)$. If $wu$ is a regular edge, $d(u)=\dist{u}$ after the execution of line 5 by construction. If $wu$ is an irregular edge, $|wu|\geq (1-\frac{1}{h})r_u$ by Corollary~\ref{cor:arbi-smalltolarge}, and therefore $d(w)=d(u)-|wu|<d(u)-(1-\frac{1}{h})r_u<d(v)$. Thus, $w\notin R$, and we get $d(u)=\text{dist}(u)$ after line 5 by using similar arguments on induction hypothesis \textbf{(3--4)}. The paragraph below is almost the repetition of the previous paragraph, so the reader may move to the proof of \textbf{(3)} directly.

Suppose $w$ is a small neighbor of $u$. Note that $wu$ is non-redundant as $u$ is not a leaf of the tree. There is a pair $\lambda$ such that $w\in \psmall(\lambda)$ and $u\in \plarge(\lambda)$ by Lemma~\ref{lem:arbi-irregular}. After the round that removes $w$ from $R$, $\text{alarm-up}(\lambda)$ was at most $d(w)+\frac{1}{4}r_u$.
Also, since $|uw|\geq (1-\frac{1}{h})r_u$ by Corollary~\ref{cor:arbi-smalltolarge}, 
$d(w)+\frac{1}{4}r_u<d(w)+|uw|-\frac{1}{2}r_u=d(u)-\frac{1}{2}r_u<d(v)$. Hence, due to induction hypothesis \textbf{(3)} on the round of $k=\text{alarm-up}(\lambda)$, $d(u)=\text{dist}(u)$.
Next, suppose $w$ is a large neighbor of $u$. There is a pair $\lambda'$ such that $w\in \plarge(\lambda')$ and $u\in \psmall(\lambda')$ by Lemma~\ref{lem:arbi-irregular}.
After the round that removes $w$ from $R$, $Q(\lambda')$ contains $w$ with 
priority $d(w)+r_w-6|c|$. Since $r_u$ is at most $2|c|$, the priority is at most $d(w)+r_w-3r_u<d(w)+|wu|-2r_u<d(u)-\frac{1}{2}r_u<d(v)$. Here, the second inequality comes from Lemma~\ref{lem:smalltolarge} to the shortest $s$-$u$ path.
Hence, due to induction hypothesis \textbf{(4)} on the round of $k=\text{alarm-down}(\lambda)$ with $k=d(w)+r_w-6|c|$, $d(u)=\text{dist}(u)$. This completes the proof of \textbf{(2)}.

\medskip
Next, we show \textbf{(3)} in the case that $i$-th round performs line 16. 
Let $w=\text{prev}(u)$.
The difference with Lemma~\ref{lem:correct1}-\textbf{(3)} is that \textsc{Update} takes not $\plarge(\lambda)$ and $\psmall(\lambda)$ but two subsets $\ularge(\lambda, k)\subset \plarge(\lambda)$ and $\usmall(\lambda, k)\subset \psmall(\lambda)$. 
We show that taking these subsets is sufficient for our purpose.
We take $k$ by the minimum value among all rounds of $k=\text{alarm-up}(\lambda)$ such that $d(w)<k$. Then $w\in \usmall(\lambda, k)$ by definition. 
It suffices to show $u\in \ularge(\lambda, k)$.
Let $k''<k'$ be the values from the two preceding rounds of $k=\text{alarm-up}(\lambda)$ that are closest to the current round. 
Note that $d(w)$ is contained in $[k',k]$ by construction. 
Assume to the contrary that $u\notin \ularge(\lambda, k)$.
Then by definition, there is a vertex pair $(x\in \psmall(\lambda), y\in \plarge(\lambda))$ such that $d(x)<k''$, $r_y<r_u$, and $xy$ is an edge of the graph.

We show the contradiction through the geometric analysis.
Let $\lambda=(c,C)$, $o=p(c)$, and $p$ be the projection point of $y$ onto the line passing through $o$ and $u$.
Note that $p$ is contained in $C$ and $|yo|<r_y+5|c|$ since $y\in \plarge(\lambda)$.
Then 
    \begin{align}
        |yp|\leq |yo|\sin(\frac{2\pi}{\alpha})\leq \frac{1}{100}(r_{y}+5|c|),\text{ } |po|<|yo|\leq 5|c|+r_{y}.
    \end{align}
    Note that $r_y-5|c|<r_u-5|c|\leq |uo|$ since $r_y<r_u$ and $u\in \plarge(\lambda)$.
    We show that $|xy|+|yu|\leq |uw|+\frac{1}{8}r_y$ through the case studies on the position of $p$.
    Suppose for the first case that $|uo|\leq |po|$. 
    Then $|pu|=|po|-|uo|\leq (5|c|+r_y)-(r_y-5|c|)=10|c|$. Thus,
    \begin{align}
        |xy|+|yu|&\leq (|xo|+|oy|)+|yu| \\
        &\leq |c|+(|op|+|py|)+(|yp|+|pu|)\leq |c|+|op|+2|py|+|pu| \\
        &=|c|+|ou|+2|py|+2|pu| \text{ }(\text{since } |op|\leq |ou|+|pu|)\\
        &<|ou|+\frac{1}{50}r_{y}+22|c| \\
        &<(|uw|+|wo|)+\frac{1}{50}r_{y}+22|c| \\
        &<|uw|+\frac{1}{8}r_{y}. \text{ }(\text{since } 1024|c|\leq r_y, |wo|\leq |c|)
    \end{align}
    For the other case, suppose $|po|<|uo|$. We simply derive
    \begin{align}
        |xy|+|yu|&\leq (|xo|+|op|+|py|)+(|yp|+|pu|) \\
        &=|c|+|op|+|pu|+2|py| \\
        &=|c|+|ou|+2|py| \text{ }(\text{since } p \text{ lies on } uo)\\
        &< |ou|+\frac{1}{50}r_y+2|c|\\
        &<|uw| +|wo|+\frac{1}{50}r_y+2|c| \\
        &<|uw|+\frac{1}{8}r_{y} \text{ }(\text{since } |wo|\leq |c|).
    \end{align}

    Thus, we have $|xy|+|yu|<|uw|+\frac{1}{8}r_y$ regardless of the position of $p$.
    Now suppose $\frac{1}{4}r_{y}<k'-k''$.
    Then $\frac{1}{4}r_y<k'-k''\leq d(w)-d(x)$ as $d(w')\in [k',k]$ and $d(x)\leq k''$. Hence,
    \begin{align}
        d(x)+|xy|+|yu|<(d(w)-\frac{1}{4}r_y)+|uw|+\frac{1}{8}r_y <d(u)-\frac{1}{8}r_y.
    \end{align}
    Then the concatenation of the shortest $s$-$x$ path, $xy$ and $yu$ is shorter than the shortest $s$-$u$ path, which is a contradiction. 
    For the other case, suppose $k'-k''\leq \frac{1}{4}r_y$.
    Since the round of $k'=\text{alarm-up}(\lambda)$ was executed in advance, there is an edge $x'y'$ with $x'\in \psmall(\lambda), y'\in \plarge(\lambda)$, and $\frac{1}{4}r_{y'}\leq k'-d(x')$.
    We can derive the following formula using a similar argument from the previous case:
    \begin{align}
        |x'y'|+|y'u|&<|uw|+\frac{1}{8}r_{y'}, \text{ then }\\
        d(x')+|x'y'|+|y'u|&<(k'-\frac{1}{4}r_{y'})+|uw|+\frac{1}{8}r_{y'} \\
        &<d(w)+|uw|-\frac{1}{8}r_{y'} \text{ }(\text{since } k'<d(w)) \\
        &=d(u)-\frac{1}{8}r_{y'}.
    \end{align}
    Then the concatenation of the shortest $s$-$x'$ path, $x'y'$ and $y'u$ is shorter than the shortest $s$-$u$ path, which is a contradiction.
    Thus, $u\in \ularge(\lambda, k)$. See also Figure~\ref{fig:arbi-correct1}(a).

\medskip
Finally, we verify \textbf{(4)} in the case that $i$-th round performs line 20. 
First of all, we prove the following geometric property.
\arbialarmdown
\begin{proof}
    Let $\lambda=(c,C)$ and $o$ be the apex of $c$.
First we show that $d(v')<d(v)+r_v-6|c|$.
Since $v,v'\in \plarge(\lambda)$, $r_{v'}-5|c|<|v'o|<r_{v'}+5|c|$ and $|vo|< r_v+5|c|$.
Let $x$ be the projection point of $v'$ onto the line passing through $v$ and $o$. Then $\angle v'ov$ is at most $\frac{2\pi}{\alpha}$. Then
    \begin{align}
        |v'x| &\leq |v'o|\sin(\frac{2\pi}{\alpha})\leq \frac{1}{100}(r_{v'}+5|c|), \text{ and}\\
        |ox|&\geq |v'o|-|v'x| > (r_{v'}-5|c|)-|v'x|>
        \frac{99}{100}r_{v'}-6|c|.
    \end{align}
If $|xo|\leq |vo|$,
 \begin{align}
        |xv|=|ov|-|ox|&\leq (r_v+5|c|)+(6|c|-\frac{99}{100}r_{v'}) \\
        &\leq 11|c|+r_v-\frac{99}{100}r_{v'}.
\end{align}
Since $\plarge(\lambda)$ form a clique in $G$ by Lemma~\ref{lem:arbi-clique}, $d(v')\leq d(v)+|vv'|$.
Hence,
\begin{align}
    d(v')&\leq d(v)+|vx|+|xv'| \\
    &\leq d(v)+(11|c|+r_v-\frac{99}{100}r_{v'})+(\frac{1}{100}(r_{v'}+5|c|)) \\
    &\leq d(v)+r_v +(12|c|-\frac{49}{50}r_{v'}) \\
    &< d(v)+r_v-6|c| \text{ }(\text{since } 1024|c|=h|c|\leq r_{v'}).
\end{align}
For the other case that $|vo|\leq |xo|$, we can derive
\begin{align}
    |xv|=|ox|-|ov| &\leq |v'o|-|ov| \\
    &\leq (r_{v'}+5|c|)-(r_v-5|c|) =r_{v'}-r_v+10|c|.
\end{align}
Similarly,
\begin{align}
    d(v')&\leq d(v)+|xv|+|xv'| \\
    &\leq d(v)+(10|c|+r_{v'}-r_v)+\frac{1}{100}(r_{v'}+5|c|) < d(v)+r_v-6|c| \text{ }(\text{since } r_{v'}\leq r_v). \label{eq:1}
\end{align}
In fact, the above formulae hold for all $w\in \plarge(\lambda)$ with $d_w<\frac{2}{3}r_w$. To see this, in (\ref{eq:1}), $11|c|+(1+\frac{1}{100})\frac{3}{2}r_v-r_v = 11|c|+\frac{49}{100}r_v < r_v-6|c|$ due to $r_v\geq 1024|c|$. Moreover, (\ref{eq:1}) is the only part that uses the fact that $r_{v'}<r_v$.
Hence, $d(w)<d(v)+r_v-6|c|$ when $r_w<\frac{3}{2}r_v$.
Next, we show that $d(w)\leq d(v)+r_v-6|c|$ in the case of $r_w\geq \frac{3}{2}r_v$. 
Since $u'w$ is an edge of the graph,
\begin{align}
    d(u')=d(w)+|wu|\geq d(w)+(r_w-6|c|) \geq d(w)+(\frac{3}{2}r_v-6|c|) \text{ }(\text{since } w\in \plarge(\lambda)).
\end{align}

On the other hand, the length of the concatenation of the shortest $s$-$v'$ path and $v'u'$ is at most $(d(v)+r_v-6|c|)+(r_{v'}+r_{u'})\leq d(v)+2r_v-4|c|$.
Thus, 
\begin{align}
    d(w)+\frac{3}{2}r_v-6|c|<d(u')<d(v)+2r_v-4|c|.
\end{align}
Then $d(w)<d(v)+\frac{1}{2}r_v+|c|<d(v)+r_v-6|c|$ as $1024|c|\leq r_v$. This completes the proof. See also Figure~\ref{fig:arbi-correct1}(c).
\end{proof}
Let $w=\text{prev}(u)$ and $k=d(v)+r_v-6|c|$.
It suffices to consider the case that the $i$-th round is the first round such that $r_{w}\leq r_v$.
Then $u\in \dsmall(\lambda, k)$ is followed by the condition (i,ii) of $\dsmall(\lambda, k)$.
Moreover, due to Lemma~\ref{lem:arbi-alarm-down}, $d(w)<d(v)+r_v-6|c|=k$.
Note that $w\notin R$ due to the choice of $k$.  
Hence, due to induction hypothesis of \textbf{(2)} and $w$ is not a leaf of the shortest path tree, $d(w)=\text{dist}(w)$.
Thus, $w$ has been inserted to $\textsc{Update-Inc}(\plarge(\lambda))$ and therefore $d(u)=\text{dist}(u)$ after the execution of round $k=\text{alarm-down}(\lambda)$. 
\end{proof}

\begin{figure}
		\centering
		\includegraphics[width=0.65\textwidth]{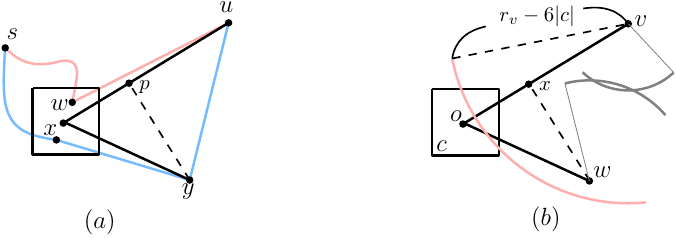}
		\caption{\small Illustration of some points in the proof of Lemma~\ref{lem:arbi-correct-main}.
    (a) In (3), $|yp|$ is small since $\angle you$ is small. Then $|xy|+|yu|$ is not much larger than $|wu|$. The blue path is shorter than the shortest $s$-$u$ path(red path). (b) In (4), $|xw|$ is small. The alarm rings after $w$ has been processed. }
		\label{fig:arbi-correct1}
\end{figure}
The following Lemma verifies the post-processing and therefore, the overall correctness of Algorithm~\ref{alg-arbi}. 
At the onset of the post-processing step, we guarantee the following.
\begin{lemma} \label{lem:arbi-post}
    Suppose the shortest $s$-$v$ path contains $uv$. Unless $v$ is a leaf and 
    $u\in P_\textnormal{post}(\lambda)$ and $v\in P_\textnormal{small}(\lambda)$ for some pair $\lambda$,
    $\textnormal{dist}(v)=d(v)$ after the algorithm completes all rounds.
\end{lemma}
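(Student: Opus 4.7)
My plan is to prove Lemma~\ref{lem:arbi-post} by a case analysis on the type of the shortest-path edge $uv$, where $u=\text{prev}(v)$, invoking the appropriate part of Lemma~\ref{lem:arbi-correct-main} in each case. Note first that $u$ is never a leaf of the shortest path tree, since it has $v$ as a child. If $uv$ is a regular edge, Lemma~\ref{lem:arbi-regular} gives $c_v\in\boxplus_{c_u}$; I would then consider the Case~1 round triggered by some $k=\text{dist}(w)$ with $w\in\pmid(c_u)$. Lemma~\ref{lem:arbi-correct-main}-(2) forces $\text{dist}(u)=d(u)$ after line~5, and line~6 then transmits this to $v$ via the second \textsc{Update} call, giving $\text{dist}(v)\le d(u)+|uv|=d(v)$. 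Combined with the standing invariant $\text{dist}(v)\ge d(v)$, this yields equality.

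If $uv$ is an irregular non-redundant edge, Lemma~\ref{lem:arbi-irregular} gives $\lambda\in\Lambda$ with either (a) $u\in\psmall(\lambda)$ and $v\in\plarge(\lambda)$ when $r_u<r_v$, or (b) $v\in\psmall(\lambda)$ and $u\in\plarge(\lambda)$ when $r_v<r_u$. In case (a), the round processing $\pmid(c_u)$ arms $\text{alarm-up}(\lambda)$ via lines~7--10 (since $\lambda\in L_1(u)$), and when this alarm later fires Lemma~\ref{lem:arbi-correct-main}-(3) supplies $\text{dist}(v)=d(v)$. In case (b), the same round inserts $u$ into $Q(\lambda)$ with priority $d(u)+r_u-6|c|$ via lines~12--14 (since $\lambda\in L_2(u)$), and Lemma~\ref{lem:arbi-correct-main}-(4) handles $v$ when $\text{alarm-down}(\lambda)$ later fires at this priority. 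Finally, if $uv$ is irregular and redundant, Lemma~\ref{lem:smalltolarge} forces $r_v<r_u$ and $v$ to be a leaf; Lemma~\ref{lem:arbi-irregular} then yields some $\lambda$ with $v\in\psmall(\lambda)$ and $u\in\plarge(\lambda)\cup\ppost(\lambda)$. If at least one such $\lambda$ has $u\in\plarge(\lambda)$, the argument of case (b) applies verbatim; otherwise $u\in\ppost(\lambda)$ for every such $\lambda$, which is precisely the exception carved out in the statement of the lemma.

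The main technical obstacle is verifying that, in each non-exception branch, the relevant alarm actually fires with the strict inequality $d(\text{prev}(v))<k$ demanded by Lemma~\ref{lem:arbi-correct-main}-(3) or -(4), and that it does so \emph{before} the round that processes $\pmid(c_v)$, so that the transmission to $v$ genuinely occurs. For alarm-up this reduces to showing $\frac{1}{4}v(\lambda)<|uv|$, which follows from $v(\lambda)\le r_v$ by the definition of $v(\lambda)$, together with $|uv|\ge(1-1/h)r_v$ for irregular non-redundant edges (Corollary~\ref{cor:arbi-smalltolarge}). For alarm-down it reduces to $r_u-6|c|<|uv|$, which follows from the non-redundancy bound $|uv|\ge r_u-r_v$ combined with the fact that $r_v\le 2|c|$ is built into the definition of $\psmall(\lambda)$. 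Once these estimates are in place, the argument is a direct application of Lemma~\ref{lem:arbi-correct-main} through the case split above.
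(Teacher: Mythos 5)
Your overall structure matches the paper's: split on edge type, handle regular edges via Lemma~\ref{lem:arbi-correct-main}-(2) and line~6, dispatch irregular edges to the alarm-up / alarm-down machinery depending on which side is larger, and identify the residual case as exactly the post-processing exception. The alarm-up analysis is sound (and, as you note, Lemma~\ref{lem:smalltolarge} already forces non-redundancy whenever $r_u<r_v$ on a shortest-path edge, so case~(a) never falls under the redundancy carve-out).

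There is, however, a genuine gap in your treatment of alarm-down, and it is precisely where the distinction between $\plarge(\lambda)$ and $\ppost(\lambda)$ does the work. You write that $r_u-6|c|<|uv|$ ``follows from the non-redundancy bound $|uv|\ge r_u-r_v$,'' and you then assert that for a redundant edge with $u\in\plarge(\lambda)$ ``the argument of case (b) applies verbatim.'' These two statements contradict each other: in the redundant sub-case by definition $|uv|<r_u-r_v$, so the non-redundancy bound is simply unavailable, and the inequality you stated cannot be derived the way you stated it. The correct source of $r_u-6|c|<|uv|$ in that sub-case is the \emph{membership condition} of $\plarge(\lambda)$ itself: $u\in\plarge(\lambda)$ means $\bigl|r_u-|u\,p(c)|\bigr|<5|c|$, hence $|u\,p(c)|>r_u-5|c|$, and since $v\in\psmall(\lambda)$ lies in $c$ we have $|v\,p(c)|\le\tfrac12|c|$, giving $|uv|\ge|u\,p(c)|-|v\,p(c)|>r_u-\tfrac{11}{2}|c|>r_u-6|c|$ with margin to spare. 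This is a geometric fact about $\plarge$ that holds independently of redundancy, and it is exactly what separates $\plarge$ (handled by alarm-down in the main loop) from $\ppost$ (deferred to post-processing). Without noticing this, the proof actually leaves the redundant-but-$\plarge$ case unproved, and the carve-out in the lemma statement looks arbitrary rather than forced by the geometry. You should replace the non-redundancy appeal in the alarm-down estimate with the $\plarge$-membership argument above; once that is done, the rest of your case analysis and invocations of Lemma~\ref{lem:arbi-correct-main}-(2)--(4) go through and align with the paper's proof.
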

\begin{proof}
      After all rounds of the algorithm, all dist-values of internal vertices are correct. Let $v$ be a leaf on the shortest path tree, and $u$ be its predecessor.
If $uv$ is a regular edge, $d(v)=\text{dist}(v)$ holds by line 6.
If $u$ is a small neighbor of $v$, 
the alarm-up value was at most $d(u)+\frac{1}{4}r_v$ right after the removal of $u$ from $R$. Due to Corollary~\ref{cor:arbi-smalltolarge}, the alarm-up value is at most $d(v)-\frac{1}{h}r_v$, and the algorithm runs line 15 of $k=\text{alarm-up}(\lambda)$ with $u\in \psmall(\lambda)$ and $v\in \plarge(\lambda)$ before $v$ is removed from $R$.
For the other case that $u$ is a large-neighbor of $v$ and there is a pair $\lambda=(c,C)$ with $u\in \plarge(\lambda)$ and $v\in \psmall(\lambda)$, $Q(\lambda)$ contains $u$ with priority $d(u)+r_u-6|c|$ right after the removal of $u$ from $R$. Similarly, we can derive $d(u)+r_u-6|c|<d(v)-\frac{1}{h}r_v$ and $v$ gets correct dist-value before it is removed from $R$. This completes the proof.
\end{proof}

\subsection{Implementation} \label{sec:arbi-imp}
In this section, we show the detailed implementation of Algorithm~\ref{alg-arbi}. While some parts of our algorithm are either previously implemented (ex. \textsc{Update} subroutine) or use well-known data structures (ex. priority queue), we mainly focus on the components consisting of new ideas.

\subparagraph{Pre-processing}
Similar to Section~\ref{sec:bounded-complexity}, we can compute the correct dist-values of all neighbors of a source vertex $s$ in $O(n)$ time.
We show how to compute $\psmall(\lambda)$, $\plarge(\lambda)$ and $\ppost(\lambda)$, together with $L_1(v), L_2(v)$ for all $\lambda\in \Lambda$ and all $v\in P$. 
We compute compressed quadtree, heavy-path decomposition and canonical paths in $O(n\log n)$ time. Then we compute $\psmall(\lambda)$ for all $\lambda$ in $O(n\log n)$ time by computing $\bar c_v$ and $\Pi_v$ in $O(\log n)$ time for each $v\in P$.

We follow the simple procedure presented in~\cite{baumann2024dynamic} to compute $\plarge(\lambda)$ and $\ppost(\lambda)$.
Suppose $v\in \plarge(\lambda)\cup \ppost(\lambda)$ for a pair $\lambda$ originated from a canonical path $\pi$. Then the proof of Lemma~\ref{lem:arbi-size} says that $\pi$ contains a grid cell of $\boxplus_{c_v}$. Also, the number of canonical paths containing a single grid cell is $O(\log n)$.
Based on these observations, for each $v\in P$, we check $O(1)$ cells of $\boxplus_{c_v}$ together with related $O(\log n)$ canonical paths one by one.
While checking a canonical path $\pi$, we check whether $v$ is contained in one of the cones of $\mathcal C_\pi$ in a brute-force manner.
Note that $\mathcal C_\pi$ consists of $\alpha=O(1)$ cones of constant complexity.
Then we check certain conditions of $r_v$. For instance, we verify $v\in \plarge(\lambda)$ or not by checking $r_v\in [h|c|, \frac{2}{3}r(C)]$ and $|r_v-|vp(c)||<5|c|$.
This takes $O(n\log n)$ time in total, since we use the extended compressed quadtree~\cite{baumann2024dynamic} that contains all nodes of $\boxplus_{c_v}$ for all $v\in P$.

To compute $L_1(v)$ and $L_2(v)$, we compute additively weighted Voronoi diagram of weight function $w(v)=-r_v$ for each $\psmall(\lambda)$ and $\plarge(\lambda)$ in $O(n\log^2 n)$ time.
Recall that each $v\in P$ is contained in $O(\log n)$ different $\psmall(\cdot)$ and $\plarge(\cdot)$.
For each $\psmall(\lambda)$ contains $v$, we query on the Voronoi diagram with respect to $\plarge(\lambda)$ to check whether $v$ is a small neighbor of $\plarge(\lambda)$. If this is the case, we add $\lambda$ into $L_1(v)$ (and $L_2(v)$). 
We compute $L_2(v)$ analogously by interchanging the roles of $\psmall(\lambda)$ and $\plarge(\lambda)$. This takes $O(n\log^2 n)$ time.

Then for each $\lambda\in \Lambda$ and for each point $v\in \psmall(\lambda)$, we compute a vertex $v(\lambda)$ of $\plarge(\lambda)$ having the smallest radius which forms an edge with $v$. Later, $v(\lambda)$ will be used to implement \textbf{Case 2}. 
To achieve this, we sort the vertices of $\plarge(\lambda)$ by their ascending order of radii using the balanced binary tree. This takes an $O(n\log^2 n)$ time in total due to Lemma~\ref{lem:arbi-size}.
Moreover, for each node $t$ of the binary tree, we compute an additively weighted Voronoi diagram $\textsf{Vor}(t)$ on the vertices associated with the subtree rooted at $t$, using weight function $w(v)=-r_v$.
Subsequently, for each vertex $v$ of $\psmall(\lambda)$, we traverse the binary tree to find the nearest site on each Voronoi diagram. Note that $v$ forms an edge with the nearest site if and only if the distance to the nearest cite is at most $r_v$. 
This approach allows us to compute $v(\lambda)$ in $O(\log^2 n)$ time for each $v$. Therefore, we can compute $v(\lambda)$ for all $v\in P$ and $\lambda\in \Lambda$ in $O(n\log^3 n)$ time.

\subparagraph{Implementation of Case~2.}
We focus on the key elements appear in \textbf{Case~2}. Let $k''<k'$ represent the values $\text{alarm-up}(\lambda)$ from the two preceding rounds that are closest to the current round $k=\text{alarm-up}(\lambda)$. Then we have to compute:
\begin{itemize} \setlength\itemsep{-0.1em}
    \item $\usmall(\lambda, k)=\{v\in \psmall(\lambda)\mid \text{dist}(v)\in [k', k]\}$
    \item $w'_k, w''_k$: vertices of $\plarge(\lambda)$ with smallest radii which form an edge with $v\in \psmall(\lambda)$ on the condition that $\text{dist}(v)<k''$ ($w'_k$) and $\text{dist}(v)\in [k',k]$ ($w''_k$).
    \item $\ularge(\lambda, k)=\{v\in \plarge(\lambda)\mid r_2\leq r_v<r_1\}$, where $r_1:=r_{w'_k}$, $r_2:=r_{w''_k}$. 
\end{itemize}

For each pair $\lambda$, we manage a dynamic segment tree~\cite{van1993union} of $\psmall(\lambda)$ under the sorted list of $\psmall(\lambda)$ according to dist-values.
In addition, 
each vertex $v$ of $\psmall(\lambda)$ is associated to key value $r_{v(\lambda)}$, and each node $t$ of the tree stores a key value $r(t)$, which is determined by the minimum value $r_{v(\lambda)}$ among all vertices $v$ in the range associated with $t$. Here, $v(\lambda)$ is a vertex of $\plarge(\lambda)$ having minimum radii that are adjacent to $v$.
This additional information does not increases the time complexity of computing the dynamic segment tree.
Whenever a vertex $v\in \psmall(\lambda)$ is removed from $R$, we insert $v$ into the dynamic data structure.
Moreover, we compute a static segment tree of $\plarge(\lambda)$ sorted by their radii.

\medskip
We compute $w'_k, w''_k$ by two range queries on the dynamic segment tree of $\psmall(\lambda)$ with the ranges $[-\infty, k'']$ and $[k', k]$. Note that each range query outputs $O(\log n)$ nodes of the segment tree, and each node stores a key value $r(t)$. We report minimum key value $r(t)$ among those $O(\log n)$ nodes on the segment tree.
Due to the definition of $r(t)$, it is easy to observe that two key values are exactly $r_{w'_k}$ and $r_{w''_k}$.

Then we compute $\usmall(\lambda, k)$ by reporting all vertices of $\psmall(\lambda)$ in range $[k',k]$. Finally, we compute $\ularge(\lambda, k)$ by reporting all vertices of $\plarge(\lambda)$ in range $[r_1, r_2]$ from the segment tree of $\plarge(\lambda)$. 
Consequently, we can compute $\usmall(\lambda, k)$ and $\ularge(\lambda, k)$ in $O(\log n+|\usmall(\lambda, k)|+|\ularge(\lambda, k)|)$ time.
By construction, each vertex of $\psmall(\lambda)$ (and $\plarge(\lambda)$) appears exactly once (and at most twice) on $\usmall(\lambda, k)$ (and $\ularge(\lambda, k)$) throughout all rounds of $k=\text{alarm-up}(\lambda)$.
As the data structure of~\cite{van1993union} can be computed in $O(|\psmall(\lambda)|\log (|\psmall(\lambda)|))$ time, 
the computataion of $\usmall(\lambda, k), \ularge(\lambda, k), w'_k$ and $w''_k$ throughout all rounds of \textbf{Case~2} takes an $O(n\log^2 n)$ time.
Finally, the execution of \textsc{Update} subroutine takes $O(n\log^3 n)$ time in total due to Lemma~\ref{lem:arbi-size}.

\begin{lemma}
    After the pre-processing,
    all rounds of \textbf{Case~2} take $O(n\log^3 n)$ time. 
\end{lemma}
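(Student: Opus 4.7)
The plan is to amortize the total cost of all Case~2 rounds into two parts: the bookkeeping that produces the four objects $r_{w'_k}$, $r_{w''_k}$, $\usmall(\lambda,k)$, $\ularge(\lambda,k)$ in each round, and the subsequent call $\textsc{Update}(\ularge(\lambda,k),\usmall(\lambda,k))$, which costs $O((|\ularge(\lambda,k)|+|\usmall(\lambda,k)|)\log^2 n)$ by Lemma~\ref{lem:bounded-update}. The dominant term will be the total $\textsc{Update}$ cost, so everything reduces to bounding $\sum_{\lambda}\sum_k (|\usmall(\lambda,k)|+|\ularge(\lambda,k)|)$.

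For each pair $\lambda\in\Lambda$, I would set up two segment trees. The first is a dynamic segment tree~\cite{van1993union} over $\psmall(\lambda)$ keyed by dist-value; each leaf stores a vertex $v$ together with the auxiliary key $r_{v(\lambda)}$ (precomputed in pre-processing), and each internal node caches the minimum of these keys over its subtree. It supports insertion and range-min or range-reporting queries touching $O(\log n)$ canonical nodes in $O(\log n)$ time per operation; vertices are inserted as soon as Case~1 removes them from $R$. The second is a static segment tree over $\plarge(\lambda)$ keyed by radius, supporting range reporting in $O(\log n + \text{output})$ time. By Lemma~\ref{lem:arbi-size}, the total construction cost across all $\lambda$ is $O(n\log^2 n)$.

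Given $k''<k'<k$, a Case~2 round issues two range-min queries on the dynamic tree, with ranges $[-\infty,k'']$ and $[k',k]$, to obtain $r_{w'_k}$ and $r_{w''_k}$ in $O(\log n)$ each; one range-reporting query on the same tree over $[k',k]$ to obtain $\usmall(\lambda,k)$; and one range-reporting query on the static tree of $\plarge(\lambda)$ over $[r_{w''_k}, r_{w'_k})$ to obtain $\ularge(\lambda,k)$. Thus the per-round bookkeeping is $O(\log n + |\usmall(\lambda,k)| + |\ularge(\lambda,k)|)$, on top of which the $\textsc{Update}$ call contributes $O((|\usmall(\lambda,k)| + |\ularge(\lambda,k)|)\log^2 n)$.

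The main obstacle is showing $\sum_k (|\usmall(\lambda,k)|+|\ularge(\lambda,k)|) = O(|\psmall(\lambda)|+|\plarge(\lambda)|)$. For $\usmall$ this is immediate, since the dist-value intervals $[k',k]$ across successive rounds are pairwise disjoint by construction. For $\ularge$ the key observation is that the sequence $r_{w'_{k_i}}$ is non-increasing in $i$, because its eligibility range $[-\infty,k_{i-2}]$ only grows; moreover after a two-round lag $r_{w'_{k_{i+2}}} \leq r_{w''_{k_i}}$, because the interval $[k_{i-1},k_i]$ used in round $i$ falls inside the eligibility $[-\infty,k_i]$ of round $i+2$. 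Together these force any fixed radius $r_v$ of a $\plarge(\lambda)$-vertex to lie in the half-open interval $[r_{w''_{k_i}}, r_{w'_{k_i}})$ for at most two consecutive rounds, giving the claimed bound. Combining with Lemma~\ref{lem:arbi-size} yields $\sum_\lambda \sum_k (|\usmall(\lambda,k)|+|\ularge(\lambda,k)|) = O(n\log n)$, so the total $\textsc{Update}$ cost is $O(n\log^3 n)$ and the total bookkeeping is $O(n\log^2 n)$, establishing the lemma.
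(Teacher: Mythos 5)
Your proposal matches the paper's implementation step for step: the same dynamic segment tree over $\psmall(\lambda)$ keyed by dist-value and carrying $r_{v(\lambda)}$ as an auxiliary min-key, the same static segment tree over $\plarge(\lambda)$ keyed by radius, the same reduction to bounding $\sum_\lambda\sum_k(|\usmall(\lambda,k)|+|\ularge(\lambda,k)|)$ via Lemma~\ref{lem:arbi-size}, and the same charging of $\textsc{Update}$ at $O(\log^2 n)$ per vertex appearance. The only thing you add is an explicit monotonicity argument (the non-increasing sequence $r_{w'_{k_i}}$ and the two-round lag $r_{w'_{k_{i+2}}}\le r_{w''_{k_i}}$) justifying that each $\plarge(\lambda)$-vertex lands in $\ularge(\lambda,\cdot)$ at most twice, which the paper asserts simply "by construction"; this is a correct and welcome elaboration of the same idea, not a different route.
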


\subparagraph{Implementation of Case 3.}
The following are the key elements that appear in \textbf{Case 3}. 
\begin{itemize} \setlength\itemsep{-0.1em}
    \item $\textsc{Update-Inc}(\plarge(\lambda))$: The incremental data structure specified in Definition~\ref{def:updateinc}.
    \item $u_k$: A vertex in $Q(\lambda)$ with the smallest priority.
    \item $\dsmall(\lambda, k)=\{v\in \psmall(\lambda)\mid r_{v(\lambda)}\leq r_{u_k}, \text{dist}(v) \text{ has not been updated by } \plarge(\lambda) \}$
\end{itemize}

We implement $\textsc{Update-Inc}(\plarge(\lambda))$ as follows.
Whenever $v\in \plarge(\lambda)$ is inserted, we associate three key values $k_1(v):=r_v+\text{dist}(v)$, $k_2(v):=-r_v$ and $k_3(v):=-\text{dist}(v)$.
Furthermore, we maintain $O(\log n)$ binary trees, starting from zero binary trees at the initialization, where the set of leaves of all nodes is identical to the set of inserted vertices, and each binary tree is sorted by $k_1(\cdot)$.
For each node $t$ in a binary tree, we maintain two additively weighted Voronoi diagrams $\textsf{Vor}_1(t), \textsf{Vor}_2(t)$ on the vertices associated with the subtree rooted at $t$,
using weight functions $w_1(v)=k_2(v)$ and $w_2(v)=k_3(v)$, respectively.
When $v$ is inserted, we compute a new binary tree of a single node corresponding to $v$ together with two Voronoi diagrams of a single site. 
If the number of binary trees in the data structure exceeds $\log n$, we find two binary trees $T_1$ and $T_2$ on the structure whose size ratio $\frac{|T_1|}{|T_2|}$ is minimized in the assumption of $|T_2|\leq |T_1|$. We compute the new binary tree together with Voronoi diagrams using the vertices associated with $T_1$ and $T_2$. 
Finally, we remove $T_1$ and $T_2$ so that the number of binary trees in the structure remains $O(\log n)$.

When a query request on $V\subset \psmall(\lambda)$ occurs, 
for each $v\in V$, we traverse $O(\log n)$ binary trees stored in the data structure. We traverse each binary tree $T$ in the same manner with \textsc{Update} subroutine so that we compute a leaf whose corresponding vertex $v$ forms an edge with $u$.
Due to Lemma~\ref{lem:update-correct}, 
$\text{dist}(v)=\text{dist}(u)+|uv|$ if $u=\text{prev}(v)$ and $u\in T$.
Thus, we can compute the predecessor of $u$ by searching $O(\log n)$ binary trees and choose the one vertex $u$ having minimum $\text{dist}(u)$ among them.
\begin{lemma}
     There is a data structure $\textsc{Update-Inc}(\plarge(\lambda))$ that supports insert operation $O(n\log^4 n)$ time in total, and supports query operation on $V$ in $|V|\cdot O(\log^3 n)$ time. 
\end{lemma}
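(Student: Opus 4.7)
The plan is to bound the cost of an insert and the cost of a query independently, by first analyzing the static construction of a single binary tree and then composing this bound with the maintenance scheme.

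First I bound the static cost $T(s)$ of building one balanced binary tree on $s$ vertices together with the two additively weighted Voronoi diagrams $\vora(\cdot)$ and $\vorb(\cdot)$ at every internal node. Sorting the vertices along $k_1(\cdot)$ takes $O(s\log s)$; at each of the $O(\log s)$ levels of the resulting tree the site sets at the nodes of that level partition the $s$ vertices, so Fortune's algorithm~\cite{fortune1986sweepline} builds the two Voronoi diagrams at one level in $O(s\log s)$ time. Summing over the $O(\log s)$ levels gives $T(s)=O(s\log^2 s)$.

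Next I handle the amortized insert cost. Every insert creates a singleton tree at cost $O(1)$; as soon as the number of trees exceeds $\log n$, the two trees of smallest size ratio are merged and the merged tree is rebuilt from scratch at cost $O((|T_1|+|T_2|)\log^2 n)$. The plan is to charge each merge to the vertices of the smaller of the two trees and, using the invariant that at most $\log n$ trees are maintained together with the smallest-ratio merging rule, argue in the style of the logarithmic method that each vertex is charged $O(\log^2 n)$ times over the whole execution. Combined with the per-vertex construction cost of $T(s)/s=O(\log^2 s)=O(\log^2 n)$, this yields a total insert cost of $n\cdot O(\log^2 n)\cdot O(\log^2 n)=O(n\log^4 n)$. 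Making this amortized bookkeeping rigorous for the smallest-ratio rule (since the tree sizes are not forced to be powers of two, as in standard Bentley--Saxe) is the main technical obstacle.

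Finally, I analyze the query operation on a set $V$. For each $v\in V$ we run the search strategy of the \textsc{Update} subroutine of Section~\ref{sec:bounded-complexity} inside each of the $O(\log n)$ trees maintained by the structure. Within a single tree, we descend from the root using the $\vora(\cdot)$ diagrams to find the leftmost leaf whose associated site is adjacent to $v$, and then combine the $O(\log n)$ diagrams $\vorb(\cdot)$ along the resulting right spine to compute $\min\{\text{dist}(u)+|uv|:u\text{ adjacent to }v\}$ within that tree. By Lemma~\ref{lem:update-correct}, the candidate reported from a single tree is the correct minimizer inside that tree, so taking the overall minimum across the $O(\log n)$ trees yields the true predecessor of $v$ in $U$. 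Each per-tree query runs in $O(\log^2 n)$ time, so the total query cost is $|V|\cdot O(\log^3 n)$, as claimed.
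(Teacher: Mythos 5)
Your query-time analysis is correct and matches the paper's. But your treatment of the insert cost has a genuine gap, and the gap is exactly where you flag ``the main technical obstacle'' and leave it unresolved. The paper closes that obstacle with a simple observation that you missed: since the structure is built only by inserting \emph{single} vertices, every binary tree in the collection has size equal to a power of two, and whenever the number of trees exceeds $\log n$ the pigeonhole principle guarantees two trees of the \emph{same} size (the possible sizes $1,2,4,\dots$ up to the current total admit at most $\log n$ distinct values). The smallest-ratio rule then always picks a pair with ratio exactly $1$, so the merge produces another power-of-two-sized tree and the invariant is preserved. This is precisely the standard Bentley--Saxe setting you were worried it is not: each vertex participates in $O(\log n)$ merges (its tree doubles each time), each merge costs $O(\log^2 n)$ per involved vertex, so insertion costs $O(|\plarge(\lambda)|\log^3 n)$ per pair $\lambda$ and $O(n\log^4 n)$ in total after summing $\sum_\lambda|\plarge(\lambda)|=O(n\log n)$ via Lemma~\ref{lem:arbi-size}.

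A side effect of your missing this invariant is that your charge count is off: you posit $O(\log^2 n)$ charges per vertex where the correct count is $O(\log n)$, and your final product $n\cdot O(\log^2 n)\cdot O(\log^2 n)$ is not the route the paper takes (and would over-count by a logarithmic factor once you also sum over the $O(\log n)$ pairs $\lambda$ that a vertex belongs to). In short, the static build bound and query bound you give are fine, but you must first establish that all trees have power-of-two sizes---which is immediate from the single-vertex insertion discipline---before the amortization reduces to the textbook logarithmic method; without that invariant your amortized argument is not complete.
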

\begin{proof}
    First, we analyze the time complexity of insert operations.
    Since we always insert a single vertex into the data structure, the size of each binary tree of the data structure is always a power of two, and we always merge two binary trees of equal size.
    Recall that from the analysis of \textsc{Update} subroutine, merging two binary trees $T_1$ and $T_2$ takes $O((|T_1|+|T_2|)\log^2n)$ time.
    Throughout the entire insertion, each vertex $v$ contributes $O(\log n)$ merging steps, since the size of the binary tree containing $v$ grows twice for each merging step.
    Therefore, all merging steps take $O(|\plarge(\lambda)|\cdot \log^3 n)=O(n\log^4 n)$ time.
    
    \medskip
    Next, we analyze the time complexity of a query operation. 
    For each $v\in V$, we traverse $O(\log n)$ binary trees. While we traverse a binary tree, we execute $O(\log n)$ nearest neighbor queries through additively weighted Voronoi diagrams stored in a binary tree. 
    Since each nearest neighbor query takes $O(\log n)$ time, the total time complexity is $|V|\cdot O(\log^3 n)$.
\end{proof}

Next, we show how to compute $u_k$ and $\dsmall(\lambda, k)$ efficiently.
We compute $u_k$ in $O(1)$ time using the priority queue $Q(\lambda)$.
In order to compute $\dsmall(\lambda, k)$, we compute a static segment tree of $\psmall(\lambda)$ sorted by $r_{v(\lambda)}$. Again, $v(\lambda)$ is a vertex of $\plarge(\lambda)$ having minimum radii that are adjacent to $v$.
Let $u=u_{k'}$ be the vertex of maximum radius 
such that a round of $k'=\text{alarm-down}(\lambda)$ with $k'=d(u_{k'})+r_{u_{k'}}-6|c|$ executed in advance.
We can maintain $u$ without using extra time.
Then observe that $\dsmall(\lambda, k)$ is exactly the set of vertices $v$ of $\psmall(\lambda)$ such that $r_u < r_{v(\lambda)} \leq r_{u_k}$.
Hence, we can compute $\dsmall(\lambda, k)$ by a single range query on the segment tree with the range $(r_u, r_{u_k}]$.
The time complexity taken by computing $\dsmall(\lambda, k)$ is dominated by the time complexity taken by $\textsc{Update-Inc}$ data structure.

\begin{lemma}
    All rounds of \textbf{Case 3} in the algorithm spend $O(n\log^4 n)$ time in total.
\end{lemma}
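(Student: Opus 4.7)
The plan is to bound the total cost by summing three contributions over all rounds of Case 3: the priority-queue overhead, the segment-tree range queries that produce $\dsmall(\lambda, k)$, and the queries into $\textsc{Update-Inc}(\plarge(\lambda))$. In each round of Case 3 for a pair $\lambda$, the algorithm peeks $u_k$ from $Q(\lambda)$, performs a single range query on the static segment tree of $\psmall(\lambda)$ (sorted by $r_{v(\lambda)}$) with range $(r_u, r_{u_k}]$ to produce $\dsmall(\lambda, k)$, then calls the query routine of $\textsc{Update-Inc}(\plarge(\lambda))$, and finally deletes $u_k$ from $Q(\lambda)$ and updates $\text{alarm-down}(\lambda)$. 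The per-round overhead outside the reported set is $O(\log n)$, while the two operations that scale with $|\dsmall(\lambda, k)|$ cost $O(\log n + |\dsmall(\lambda, k)|)$ and $|\dsmall(\lambda, k)| \cdot O(\log^3 n)$, respectively, by Lemma~\ref{lem:arbi-updateinc}.

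First I would bound the total number of Case 3 rounds. Each round deletes one element from some $Q(\lambda)$, so the total number of rounds is at most the total number of insertions into the queues $Q(\lambda)$, performed in line 12 of Case 1. For each vertex $u$ this contributes one insertion per $\lambda \in L_2(u)$, and since $L_2(u) \subseteq \{\lambda : u \in \plarge(\lambda)\}$, summing gives $\sum_u |L_2(u)| \le \sum_\lambda |\plarge(\lambda)| = O(n \log n)$ by Lemma~\ref{lem:arbi-size}. Hence the per-round $O(\log n)$ overhead contributes $O(n \log^2 n)$ in total.

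Next I would bound $\sum_{\lambda, k} |\dsmall(\lambda, k)|$. For a fixed $\lambda$, condition (ii) in the definition of $\dsmall(\lambda, k)$ ensures that every $v \in \psmall(\lambda)$ is reported in at most one $\dsmall(\lambda, \cdot)$ across all Case 3 rounds for $\lambda$; the implementation enforces this by tracking the maximum radius $r_u$ among previously processed priorities and querying the half-open range $(r_u, r_{u_k}]$, which yields disjoint reported sets. Therefore $\sum_k |\dsmall(\lambda, k)| \le |\psmall(\lambda)|$ for each $\lambda$, and summing over $\lambda$ gives $\sum_{\lambda, k} |\dsmall(\lambda, k)| \le \sum_\lambda |\psmall(\lambda)| = O(n \log n)$, again by Lemma~\ref{lem:arbi-size}. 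Consequently, the cumulative cost of the segment-tree reporting is $O(n \log^2 n)$, and the cumulative cost of $\textsc{Update-Inc}$ queries is $O(n \log n) \cdot O(\log^3 n) = O(n \log^4 n)$.

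Adding the three contributions yields $O(n \log^4 n)$, dominated by the $\textsc{Update-Inc}$ queries. The main subtlety I expect is the disjointness argument for the reported sets: one must argue that the algorithm's bookkeeping of the ``previous'' maximum radius $r_u$ really implements condition (ii) correctly, so that each vertex of $\psmall(\lambda)$ is charged to exactly one Case 3 round. Once this is established, no remaining bookkeeping step exceeds the stated bound.
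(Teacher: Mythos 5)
Your proof is correct and follows the same decomposition the paper uses: the key observation in both is that condition (ii) in the definition of $\dsmall(\lambda,k)$ makes the reported sets across Case 3 rounds for a fixed $\lambda$ pairwise disjoint, so $\sum_k|\dsmall(\lambda,k)|\le|\psmall(\lambda)|$, and then Lemma~\ref{lem:arbi-size} and the $O(\log^3 n)$ per-vertex query cost of $\textsc{Update-Inc}$ give $O(n\log^4 n)$. The paper's own proof is terser and omits the explicit bounds on the number of rounds and on the per-round priority-queue and segment-tree overheads that you supply, but those fill in detail rather than change the argument.
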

\begin{proof}
    Due to the definition of $\dsmall(\lambda, k)$, all vertices of $\psmall(\lambda)$ 
    involved in exactly one $\dsmall(\lambda, k)$. 
    Thus, all rounds of \textbf{Case~3} takes
    \begin{align}
        |\Sigma \psmall(\lambda)| &\times O(\log^3 n)=O(n\log^4 n) \text{ time.}
    \end{align}
\end{proof}
As the incremental data structure $\textsc{Update-Inc}$ is the main bottleneck of our algorithm, we obtain our main result.
\begin{theorem}
    There is an algorithm to solve the single-source shortest path problem on disk graphs in $O(n\log^4 n)$ time.
\end{theorem}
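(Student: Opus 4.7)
The plan is to combine the correctness invariants of Lemma~\ref{lem:arbi-correct-main} with the per-component running-time bounds established throughout Section~\ref{sec:arbi-imp}. First I would confirm correctness: invariants (1)--(4) guarantee that at the moment a non-leaf $v$ is removed from $R$ we have $\text{dist}(v)=d(v)$, and that the subsequent rounds propagate correct values to $\pmid(c_v)$, to vertices of $\plarge(\lambda)$ whose predecessor sits in $\psmall(\lambda)$, and to vertices of $\psmall(\lambda)$ adjacent to an already-processed vertex of $\plarge(\lambda)$. Any vertex whose dist-value is still wrong once $R=\emptyset$ must be a leaf whose predecessor lies in $\ppost(\lambda)$ for some $\lambda$ by Lemma~\ref{lem:arbi-post}, and the post-processing loop that performs $\textsc{Update}(\ppost(\lambda),\psmall(\lambda))$ for every $\lambda\in\Lambda$ corrects exactly these vertices. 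Together with the standard property that $\textsc{Update}$ only relaxes dist-values and maintains predecessors consistently, this yields a valid shortest path tree rooted at $s$.

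For the total time, I would sum five contributions. Pre-processing (compressed quadtree, heavy-path decomposition and canonical paths, the families $\psmall(\lambda), \plarge(\lambda), \ppost(\lambda)$, and the auxiliary sets $L_1(v), L_2(v)$ together with $v(\lambda)$) costs $O(n\log^3 n)$ as analyzed at the start of Section~\ref{sec:arbi-imp}. Case~1 rounds cost $O(n\log^4 n)$, dominated by the insertions $\textsc{Update-Inc}(\plarge(\lambda)).\text{insert}(u)$ over all $u\in \pmid(c_v)$ and $\lambda\in L_2(u)$, whose aggregate cost is bounded by Lemma~\ref{lem:arbi-updateinc}; the regular-edge $\textsc{Update}$ calls plus the alarm bookkeeping contribute only $O(n\log^2 n)$ additional work via Lemma~\ref{lem:bounded-update}. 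Case~2 rounds cost $O(n\log^3 n)$, Case~3 rounds cost $O(n\log^4 n)$, and the post-processing $\textsc{Update}$ calls cost $O(n\log^3 n)$ via Lemma~\ref{lem:arbi-size} and Lemma~\ref{lem:bounded-update}. The overall maximum is $O(n\log^4 n)$.

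The main obstacle, and the step that pins the final exponent at four, is the amortized cost of the incremental data structure $\textsc{Update-Inc}(\plarge(\lambda))$. Each insertion may trigger a merge of two equal-size binary trees whose pair of additively weighted Voronoi diagrams must be rebuilt at cost $O((|T_1|+|T_2|)\log^2 n)$ per merge, and each vertex participates in $O(\log n)$ such merges throughout its lifetime. This charges $O(\log^3 n)$ amortized per inserted vertex, and with the total number of insertions summing to $O(n\log n)$ by Lemma~\ref{lem:arbi-size}, the overall insertion cost is $O(n\log^4 n)$. Combining this bottleneck with the $O(n\log^3 n)$ contributions from the remaining components yields the claimed $O(n\log^4 n)$ running time.
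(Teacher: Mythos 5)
Your proposal is correct and follows essentially the same route as the paper: correctness via the invariants of Lemma~\ref{lem:arbi-correct-main} together with Lemma~\ref{lem:arbi-post} for the leaves handled in post-processing, and running time obtained by summing the per-phase costs, with the logarithmic-method rebuilding of the additively weighted Voronoi diagrams in $\textsc{Update-Inc}(\plarge(\lambda))$ identified as the $O(n\log^4 n)$ bottleneck (one $\log$ for merges per vertex, $\log^2$ per merge via Lemma~\ref{lem:bounded-update}, and $\sum_\lambda |\plarge(\lambda)| = O(n\log n)$ via Lemma~\ref{lem:arbi-size}). The paper states this conclusion tersely; your accounting fills in the same arithmetic explicitly.
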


\bibliographystyle{plain}
\bibliography{paper}

\newpage

\end{document}